\pgfplotsset{compat=1.18}
\newcommand{\E}{ \mathbbm{E}}
\newcommand{\R}{ \mathbbm{R}}
\theoremstyle{definition}
\newtheorem{theorem}{Theorem}
\newtheorem{proposition}{Proposition}
\newtheorem{lemma}{Lemma}
\newtheorem{corollary}{Corollary}
\newtheorem{assumption}{Assumption}
\theoremstyle{remark}
\newtheorem{remark}{Remark}
\DeclareMathOperator*{\argmax}{\arg\max}
\newcommand{\und}{\underline}
\newcommand{\vst}{\vspace{3mm}}
\title{Automation and Task Allocation Under Asymmetric Information}
\author{Quitz\'{e} Valenzuela-Stookey\thanks{Department of Economics, University of California, Berkeley. Email: quitze@berkeley.edu. I am grateful to Federico Echenique and Eliott Lipnowski; and seminar participants at the Cowles Summer Conference, SAET Summer meetings, and NYU; for discussion and feedback.}}
\date{November 4, 2025}
\begin{document}
\maketitle

\begin{abstract}
    A firm can complete the tasks needed to produce output using either machines or workers. Unlike machines, workers have private information about their preferences over tasks. I study how this information asymmetry shapes the mechanism used by the firm to allocate tasks across workers and machines. I identify important qualitative differences between the mechanisms used when information frictions are large versus small. When information frictions are small, tasks are substitutes: automating one task lowers the marginal cost of other tasks and reduces the surplus generated by workers. When frictions are large, tasks can become complements: automation can raise the marginal cost of other tasks and increase the surplus generated by workers. The results extend to a setting with multiple firms competing for workers. 
\end{abstract}

The question of how automation affects workers has occupied economists and policymakers since the early days of the industrial revolution.\footnote{See for example \citep{stockingframes1812}, and \cite{keynes1930economic}.} This debate has taken on renewed urgency following recent advances in artificial intelligence. Even conservative assessments conclude that this technology is likely to have a significant impact on labor markets in the near future \citep{acemoglu2025simple}.\footnote{At the other end of the spectrum, some industry figures predict that AI will advance to the point at which it ``outperforms humans at most economically valuable work'' within 20 years \citep{OpenAIOurStructure,  korinek2024scenarios}.}

A growing literature in economics studies the effects of technological change by modeling production as a task-allocation problem: firms produce consumption goods by combining the services of a range of tasks, each of which can be completed by a worker or a machine \citep{acemoglu2018race}.\footnote{The idea of modeling production as task allocation dates at least to \cite{roy1951some}, with important contributions by \cite{zeira1998workers} and \cite{acemoglu2001productivity} among others.} Importantly, there is no sharp distinction between capital and labor; automation can be driven by reductions in the cost of using machines for certain tasks, or by improvements in machine performance. Task-allocation models are thus well-suited to study the substitution of capital for labor in a world with an evolving and increasingly blurred boundary between humans and machines as inputs into production. 

A critical difference between human workers and machines, however, is that humans have preferences over the tasks they are assigned. Importantly, these preferences are a worker's private information and are not directly observable to their employer.\footnote{Machines have costs for completing tasks, which are an important feature of the model studied here, but machines cannot (for now) be said to have private information about these costs.} Nonetheless, workers' preferences may matter to the firm when assigning tasks; perhaps because the firm is competing with other firms to attract workers, or because the firm is concerned about burnout and staff revolt. Thus the firm faces a screening problem: it may be forced to distort the allocation of tasks, and to grant ``information rents'' in the form of more favorable allocations to some workers, in order to incentivize workers to reveal their preferences. 

This paper studies the interaction between asymmetric information and technological change in task-allocation problems. The implications of this interaction are not clear a priori. On one hand, distortions which raise the cost of using labor to complete tasks could amplify a firm's incentives to substitute towards machines. For example, \cite{acemoglu2024automation} suggest that factors that introduce a wedge between wages and workers' opportunity costs, such as minimum-wage regulations or bargaining power, could have this effect. On the other hand, workers' private information could constrain the firm's ability to displace them from certain tasks; it is well-understood that private information gives agents leverage in contracting situations \citep{stiglitz1975theory, hylland1979efficient, myerson1981optimal}.

The main economic message of this paper is that there are important qualitative differences between the effects of technological changes on the allocation of tasks to workers, depending on whether information asymmetries regarding workers' preferences are large or small. The basic intuition is the following. Under asymmetric information, allocating a particular task to a worker as part of the task-allocation mechanism plays a dual role for a firm. The direct effect is that by completing the task the worker generates some surplus for the firm. However there is also an \textit{incentive effect}, in that the assignment affects workers' incentives in the firm's screening problem. When the incentive effect is dominant, i.e., informational asymmetries are sufficiently large, I show that tasks can become complements for the firm: if assigning a worker to task $A$ becomes relatively more profitable, the firm may also increase how much of task $B$ they are given, so as to provide the correct incentives for the worker to reveal their preferences. In contrast, when informational asymmetries are small, tasks are always substitutes. Whether tasks are complements or substitutes has important implications for how the firm responds to technological changes which affect the value of assigning certain tasks to machines. 

To formalize this intuition, I study a simple task-allocation model in which a final consumption good is produced by combining two types of tasks, $A$ and $B$, and the output level is determined by the number of completed units of each task. In the baseline model I consider a market with a single firm (in \Cref{sec:competitive} I show that the conclusions extend to markets with multiple firms). The firm employs a continuum of workers, each of whom belongs to one of a discrete set of groups. A worker's group identity determines their productivity, measured by the marginal cost to the firm of assigning them to complete each of the two tasks. Group identity, and hence productivity, is known to the firm. A worker's private information concerns their preferences over tasks, represented by their subjective marginal cost of completing each of the two tasks. The firm knows only the distribution of workers' preferences (which may differ across groups). 

In the single-firm model, workers' preferences matter to the firm because it must respect a ``workload constraint'': the total (subjective) cost to a worker of completing their assigned tasks cannot exceed some reservation level (we allow for the possibility that this threshold depends on the worker's preferences).\footnote{With multiple firms (as in \Cref{sec:competitive}) workers' preferences also matter because firms compete with each other to attract workers.} The simplest interpretation of this constraint is that the firm wishes to avoid the burnout and turnover associated with over-worked workers.\footnote{An alternative interpretation of the within-firm workload constraint is that the threshold represents the workers' welfare under a status-quo task-allocation system, and the firm needs to get all workers on board with any new system. Such hold-harmless clauses are present in many labor-market settings \citep{dinerstein2021}.} 

The firm also has access to a machine that can be used to complete tasks. Like workers, the machine's performance on each of the two tasks is known to the firm. However rather than preferences, the machine has costs for completing the tasks, measured in terms of a resource that we call ``computation'', and these costs are also known to the firm.  

We would like to understand how the firm responds to technological changes which affect the machine, either in terms of its performance on certain tasks, or in terms of the computation resources required. To this end, I first characterize the firm's optimal \textit{labor-market mechanism}: how should the firm assign workers to tasks so as to complete a given set of tasks with labor, at minimal cost? The value of this program defines the firm's \textit{labor-cost}, as a function of the aggregate allocation of tasks to labor. The main methodological contribution of the paper is a solution to this mechanism-design program (\Cref{thm:concave} and \Cref{thm:outer}). I develop a graphical ironing technique for solving the dual to this program, which is used extensively in the subsequent analysis of automation, and which may be of independent interest. In general, the optimal mechanism can be implemented using a two-step protocol in which workers first select from a menu of assignments (how many tasks of each type they must complete) which guarantees them exactly their reservation workload, and can then exchange one type of task for the other according to a nonlinear exchange-rate schedule. Under a regularity condition, this schedule involves at most two distinct rates.  

Armed with the solution to the firm's labor-market problem, and the corresponding labor-cost function, we can then study the effects of changes in machine technology. Such changes shift the size and composition of the set of tasks allocated to labor, which in turn affects the division of tasks among worker groups, and the mechanisms used to assign tasks to individual workers. The main result on automation, \Cref{thm:no_disposal}, identifies qualitative differences between the effects of an increase automation, in the form of a reduced aggregate allocation of tasks to labor, depending on the degree of asymmetric information. In the low-asymmetry case, which is defined by conditions on the distribution of workers' preferences, an increase in automation reduces the firm's marginal cost of completing both tasks, reduces the surplus generated by every group of workers, and shifts the allocations across groups such that no group receives more of both types of tasks. On the other hand, if the low-asymmetry condition is violated, there are increases in automation which increase the firm's marginal cost of completing one of the tasks, increase the surplus generated by every group, and strictly increase the allocation of both tasks for some groups. The key to establishing \Cref{thm:no_disposal} is an intermediate result, \Cref{prop:sub_dispersion}, which implies that the firm's labor-cost function is supermodular (meaning tasks are substitutes in the cost-minimization problem) if and only if information asymmetries are low. 

I then study the set of aggregate allocations of tasks to labor that could arise when the firm chooses output levels and computational resources optimally to maximize profit. Again, there is a qualitative difference between the low- and high-asymmetry cases. In general, the high-asymmetry case favors a more moderate labor allocation at the aggregate level, in that the firm assigns a more even mix of tasks to labor (\Cref{prop:eqm_properties}). Nonetheless, I show that as technology evolves over time, the allocations of \textit{individual} workers become more  specialized, regardless of the degree of asymmetry (\Cref{prop:specialization}).\footnote{However, some individual workers may still receive some of both types of tasks when there is some asymmetry, whereas assignments under symmetric information are fully specialized.} This comparison highlights the potential differences between individual- and aggregate-level changes. 

Finally, to emphasize that the predictions of the model are driven primarily by information asymmetries rather than the firm's market power, I extend the model to cover multiple firms which compete for workers. This is a mechanism-design problem with competing principals, in which the mechanism offered by each firm serves as an (endogenous) outside option for workers contracting with any other firm. Nonetheless, I show that the mechanisms which were optimal in the single-firm case remain equilibrium outcomes (\Cref{thm:competitive}).

For applied work, the main takeaway is that asymmetric information about workers' preferences matters in task-allocation problems, and can reverse the predictions regarding the effect of automation on the labor market. While the simple model studied here omits a number of important real-world factors (such as strategic wage setting by firms, demand effects, and production using more than two tasks) the results suggest that failure to account for workers' private information could bias the predictions of richer models, and that further work to incorporate information asymmetries into the more general task-allocation frameworks used in empirical applications could be valuable (e.g., \citealt{acemoglu2018race, korinek2024scenarios, acemoglu2024tasks}).

\vst
\noindent\textit{Related literature} 
\vst

Conceptually, this paper builds on the large literature that models production as a task- allocation problem \citep{roy1951some, zeira1998workers, acemoglu2001productivity}. The modern strand of this literature applied to automation was initiated by \cite{acemoglu2018race}, who combine a task-based model of the labor market with a model of directed technological change. Within this literature, the paper that is closest in spirit to the current focus on information asymmetries is \cite{acemoglu2024automation}, who study automation in markets with distortions that cause wages to be above opportunity costs. Such distortions are distinct from the information rents studied here; in \cite{acemoglu2024automation} distortions take the form of fixed wedges between wages and opportunity costs, whereas here rents are determined endogenously as part of the optimal mechanism.\footnote{The reduced-form model of \cite{acemoglu2024automation} admits micro-foundations based on efficiency wage considerations, regulations, or bargaining.} Such wedges amplify the displacement effect of automation on workers, but without a countervailing incentive effect that arises under asymmetric information. A key insight of the current paper is that these incentive effects can play an important role. 

Methodologically, this paper contributes to the mechanism-design literature. In the language of mechanism design, the single-firm problem is one of large-market, multi-item allocation and no transfers. Moreover, the problem features type-dependent participation constraints, which makes it one of so-called countervailing incentives \citep{lewis1989countervailing, maggi1995countervailing, jullien2000participation}.

In terms of the problem studied, this paper is most closely related to \cite{baron2025mechanism}, which studies the assignment of Children's Protective Services investigators to cases. There the authors consider a special case of the general model studied here, in which the participation constraint is determined by a random allocation mechanism used in the status quo. They introduce the two-stage procedure used here to characterize the optimal mechanism, in which an ``inner program'' is used to characterize the incentive constraints of agents, and an ``outer program'' solves the principal's program by studying its dual. However, their technique for solving the inner program does not produce an explicit solution, but rather identifies qualitative features of the optimal mechanism. Here instead we give a more explicit solution to the inner program by developing a general ironing technique based on \cite{myerson1981optimal}. This technique plays a key role in establishing the properties and comparative statics of automation.

The ironing technique used here to solve the inner program is similar to that introduced by \cite{dworczak2024mechanism} to study optimal property rights. While the problem that they study differs from the inner program here in a number of dimensions, an important technical difference is that here the principal allocates non-negative quantities of two tasks, whereas there the principal allocates a physical good which takes values in $[0,1]$, and transfers which can be any real number. As a result, while both \cite{dworczak2024mechanism} and the current paper build on \cite{myerson1981optimal}, the solutions do not coincide.\footnote{Additionally, \cite{dworczak2024mechanism} focus on the case of a convex outside option (which is analogous to the concave case in the current setting) whereas here the outside option is unrestricted.} 

The model with multiple-firms also contributes to the literature on competing principals (e.g., \citealt{martimort1996exclusive, ellison2004competing}). Here the participation constraint in the mechanism-design problem is determined by the equilibrium contracts offered by firms.  

\Cref{sec:model} presents the baseline model with a single firm. \Cref{sec:solution} solves the firm's mechanism-design problem. \Cref{sec:automation} presents the implications for automation. \Cref{sec:competitive} extends the model to include multiple firms competing for workers. \Cref{sec:conclusion} concludes. Auxiliary mechanism-design results are deferred to \Cref{sec:general}. Proofs omitted from the main body are collected in \Cref{sec:omitted_proofs}.

\section{Model}\label{sec:model}

The basic idea of the production-as-task-allocation framework is that aggregate output is produced by combining the services of a set of tasks. Following \cite{acemoglu2018race}, much of the literature has considered constant elasticity of substitution (CES) aggregation of a continuum of tasks, which enables a tractable analysis of general equilibrium effects. I instead consider only two types of tasks. This is the simplest model in which workers' preferences over tasks can be represented, and allows me to focus on the role of information asymmetries.\footnote{Indeed, allowing for more than two tasks introduces well-known technical challenges related to multi-dimensional screening (e.g., \cite{rochet1987necessary, hart2015maximal, lahr2024extreme}), which are beyond the scope of the current study to resolve, and are not directly related to the main conceptual question of interest.} 

In the baseline model I focus on a market with a single monopsony firm, and take the set of workers employed by the firm, and their wages, as fixed. This precludes the study of many interesting questions related to wage dynamics, which are central to the debate on the effects of automation. The model is thus better suited to understanding short-run dynamics within the firm, or markets where frictions, such as collective bargaining agreements or minimum-wage regulations, prevent wages from adjusting. Understanding the role of asymmetric information in the fixed-wage case is also a necessary first step towards incorporating these frictions into a richer model with wage adjustments. I take a small step in this direction in \Cref{sec:competitive}, where I extend the model to accommodate multiple firms competing for workers, and allow wages to adjust in equilibrium. The single-firm model is easier to describe, which motivates the choice to present it first. 

\vst
\noindent\textbf{Setup:} Formally, the baseline (monopsony) model is the following. There are two types of tasks, $A$ and $B$. The firm's aggregate output of the consumption good is given by $Y(\mu^A,\mu^B)$, where $\mu^k$ is the total units of $k$ tasks completed, and $Y$ is concave and strictly increasing. The inverse demand faced by the firm for the final consumption good is $P$.

\vst
\noindent\textbf{Production:} As in \cite{acemoglu2018race}, tasks can be completed by the firm using either a worker or a machine. The firm employs a continuum of workers, each belonging to one of a discrete set of groups indexed by $j \in \mathcal{J} := \{1, \dots, J\}$. Without loss of generality, assume that each group consists of a unit mass of individuals. As in \cite{acemoglu2024automation}, workers within the same group share the same productivity across tasks, which is observed by the firm. Productivity is represented by $\pi_j^k > 0$, the marginal cost to the firm of assigning a group $j$ worker to complete one unit of task $k$. This cost represents, for example, the time and company resources used by the worker to complete the task.\footnote{Equivalently, we could think of $\pi_j^k$ as $1/\rho_j^k$, where $\rho_j^k$ is the worker's marginal productivity per unit of time spent on task $k$. I formulate performance in terms of cost-per-completed-unit for convenience.}

In addition, each worker is characterized by their preferences over tasks. We represent a worker's preferences by their \textit{type} $(\theta^A,\theta^B)$, where $\theta^k > 0$ is the worker's marginal cost for completing task $k$. The \textit{subjective workload} of a worker assigned to complete $n^A$ units of task $A$ and $n^B$ units of task $B$ is thus given by $\theta^A n^A + \theta^B n^B$. A worker's type is their private information. The distribution of types is allowed to vary across worker groups; denote the CDF of the type distribution in group $j$ by $F_j$, and its support by $\Theta_j$.

Tasks can also be completed by a machine. Let $\pi_m^k > 0$ be the performance of the machine on task $k$, modeled in the same way as that of workers. In addition, employing the machine to complete a unit of task $k$ requires $c^k > 0$ units of a resource I refer to as ``computation''. If the firm allocates to the machine $n^A$ units of task $A$ and $n^B$ of task $B$ then it uses $c^A n^A + c^B n^B$ units of computation. Let $\gamma: \R_+ \rightarrow \R$ be the cost of computation, assumed to be strictly increasing and convex. 

\vst
\noindent\textbf{The firm's program: } 
The firm's profit-maximization program can be stated as
\[
    \max_{\mu^A,\mu^B} P\left(Y(\mu^A,\mu^B) \right) Y(\mu^A,\mu^B) - C(\mu^A,\mu^B)
\]
where $C(\mu^A,\mu^B) := \min_{a_m,b_m} \left\{\gamma\left(c^A a_m + c^B b_m \right) + \pi^A_m a_m + \pi^B_m b_m  + L(\mu^A - a_m,\mu^B - b_m)\right\},$ and $L(\ell^A, \ell^B)$ is the cost of completing tasks $(\ell^A,\ell^B)$ using labor.

The main challenge is to characterize the \textit{labor-cost} function $L$. To do this, we solve for the firm's optimal labor-market mechanism. A mechanism consists of a menu of jobs offered to each group, where a job is a pair $(\tilde{n}^A,\tilde{n}^B) \in \R_+^2$ describing the units to be performed of each task. Equivalently, the firm chooses for each group $j \in \mathcal{J}$ a direct mechanism $(a_j,b_j): \Theta_j \rightarrow \R^2_+$ mapping preferences to bundles of tasks \citep{myerson1981optimal}.\footnote{Note that each agent's allocation depends only on their own type, rather than the entire profile of types. This is feasible because there are a continuum of agents, and thus no aggregate uncertainty about the type profile. In other words, we can work directly with the interim allocation rule, and the constraints to ensure that this is feasible, \`a la \cite{border1991implementation}, are automatically satisfied.}  The firm's labor-market program is 
\begin{align}
    & L(\ell^A, \ell^B) := \min_{(a_j,b_j)_{j=1}^{J}}  \sum_{j=1}^J \E_j\left[ a_j(\theta^A,\theta^B) \pi_j^A + b_j(\theta^A,\theta^B) \pi_j^B \right] \quad \quad s.t. \label{prog:1} \\
     &\theta^A a_j(\theta^A,\theta^B) + \theta^B b_j(\theta^A,\theta^B) \leq \theta^A a_j(\hat{\theta}^A,\hat{\theta}^B) + \theta^B b_j(\hat{\theta}^A,\hat{\theta}^B) \ \ \forall \ (\theta^A,\theta^B), (\hat{\theta}^A,\hat{\theta}^B) \in \Theta_j \tag{IC} \\
    & \theta^A a_j(\theta^A,\theta^B) + \theta^B b_j(\theta^A,\theta^B) \leq r_j(\theta^A,\theta^B) \quad \forall \ (\theta^A,\theta^B) \in \Theta_j \tag{IR}\\
    &\sum_{j \in \mathcal{J}} \E_j\big[ a_j(\theta_j) \big] = \ell^A \quad \text{and} \quad \sum_{j \in \mathcal{J}} \E_j\big[ b_j(\theta_j) \big] = \ell^B \tag{MC}
\end{align}
The (IC) constraint ensures that workers receive a lower workload by reporting truthfully than misreporting. The right hand side of the (IR) constraint is the worker's reservation value, i.e., the maximum workload that can be assigned to a worker, for example due to pre-determined contractual terms or concerns about burnout. We allow for the possibility that the reservation value depends on the worker's type, and assume only that $r_j$ is bounded and non-negative.\footnote{For example, the reservation value could be the workload from some default allocation of tasks that the worker has the right to demand. In addition to the descriptive appeal of allowing for a type-dependent outside option, the generality of allowing for type-dependent reservation values is useful when it comes to modeling the competitive market, as in \Cref{sec:competitive}, where $r_j$ is determined endogenously via the contracts offered by other firms.} The market-clearing (MC) constraint ensures that the target levels of task completion are achieved. As this program makes clear, the fundamental difference between machines and workers in the model is that the latter possesses private information about their preferences: fixing a ``computing budget'' $\bar{c}$, a machine is just a worker with a degenerate type distribution and a reservation value $\bar{c}$.

\section{Solving the firm's program}\label{sec:solution}

To begin, observe that workers' preferences over assignments are fully determined by $\theta := \theta^A/\theta^B$, the relative cost of task $A$. Moreover, define the effective reservation value 
\[
R_j(\theta) := \min\left\{ \frac{1}{\theta^B} r_j(\theta^A,\theta^B) : (\theta^A,\theta^B) \in \Theta_j \ , \  \frac{\theta^A}{\theta^B} = \theta \right\}.
\]

\begin{lemma}\label{lem:effective_R}
    If mechanism $(a_j,b_j)$ is IC then it satisfies IR for $j$ if and only if $\frac{\theta^A}{\theta^B} a_j(\theta^A,\theta^B) + b_j(\theta^A,\theta^B) \leq R_j(\theta^A/\theta^B)$ for all $(\theta^A,\theta^B) \in \Theta_j$.
\end{lemma}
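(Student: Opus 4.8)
The plan is to collapse the two–dimensional family of IR inequalities to a one–dimensional one by tracking a single ``relative workload'' statistic. For a mechanism $(a_j,b_j)$ and a type $(\theta^A,\theta^B)$ with ratio $\theta := \theta^A/\theta^B$, set $w_j(\theta^A,\theta^B) := \theta\, a_j(\theta^A,\theta^B) + b_j(\theta^A,\theta^B)$, so that the subjective workload equals $\theta^B w_j(\theta^A,\theta^B)$ and IR is equivalent to $w_j(\theta^A,\theta^B) \le \tfrac{1}{\theta^B} r_j(\theta^A,\theta^B)$. The key observation is that types with a common ratio $\theta$ rank bundles by the \emph{same} linear functional $(a,b) \mapsto \theta a + b$, so IC should force $w_j$ to depend on the type only through $\theta$, after which IR, quantified over all representatives of a given $\theta$, becomes exactly $w_j(\theta) \le R_j(\theta)$.

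First I would dispatch the ``if'' direction, which does not use IC. Fix $(\theta^A,\theta^B) \in \Theta_j$, let $\theta = \theta^A/\theta^B$, and suppose $\theta\, a_j(\theta^A,\theta^B) + b_j(\theta^A,\theta^B) \le R_j(\theta)$. Multiplying through by $\theta^B > 0$ gives $\theta^A a_j(\theta^A,\theta^B) + \theta^B b_j(\theta^A,\theta^B) \le \theta^B R_j(\theta)$, and since $(\theta^A,\theta^B)$ is one of the candidates in the minimum defining $R_j(\theta)$ we have $R_j(\theta) \le \tfrac{1}{\theta^B} r_j(\theta^A,\theta^B)$, so the right-hand side is at most $r_j(\theta^A,\theta^B)$. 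That is IR for this type.

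For the ``only if'' direction I would use IC to identify $w_j$ across representatives and then invoke IR at the minimizing one. Fix $\theta$ and let $(\tilde\theta^A,\tilde\theta^B) \in \Theta_j$, with $\tilde\theta^A/\tilde\theta^B = \theta$, attain the minimum in the definition of $R_j(\theta)$; let $(\theta^A,\theta^B) \in \Theta_j$ be an arbitrary type with the same ratio $\theta$. Applying IC in both directions to this pair and dividing each inequality by the relevant positive denominator ($\theta^B$, resp.\ $\tilde\theta^B$), both reductions yield comparisons of $\theta\, a_j + b_j$ between the two allocations, which together force $\theta\, a_j(\theta^A,\theta^B) + b_j(\theta^A,\theta^B) = \theta\, a_j(\tilde\theta^A,\tilde\theta^B) + b_j(\tilde\theta^A,\tilde\theta^B)$. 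Finally, IR at $(\tilde\theta^A,\tilde\theta^B)$ says $\tilde\theta^A a_j(\tilde\theta^A,\tilde\theta^B) + \tilde\theta^B b_j(\tilde\theta^A,\tilde\theta^B) \le r_j(\tilde\theta^A,\tilde\theta^B)$; dividing by $\tilde\theta^B$ and using the definition of $(\tilde\theta^A,\tilde\theta^B)$ turns this into $\theta\, a_j(\tilde\theta^A,\tilde\theta^B) + b_j(\tilde\theta^A,\tilde\theta^B) \le R_j(\theta)$, and combining with the displayed equality gives $\theta\, a_j(\theta^A,\theta^B) + b_j(\theta^A,\theta^B) \le R_j(\theta)$, as required.

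I expect this to be a short proof; the only delicate point is the well-definedness of the minimizer, i.e.\ that the minimum in the definition of $R_j$ is attained (if it is merely an infimum, I would instead run the argument for every candidate representative of $\theta$, which yields the bound with the infimum in place of the minimum). The conceptual content is simply that IC renders workers with identical preferences over bundles (same $\theta$) interchangeable in terms of the workload they bear, so their participation constraints can all be summarized by the tightest one, which is what $R_j$ records; this is precisely what licenses reducing the whole analysis to the scalar type $\theta$ and the single function $R_j$.
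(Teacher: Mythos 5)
Your proof is correct and follows essentially the same route as the paper's: both arguments use IC to show that the normalized workload $\theta a_j + b_j$ is constant across all types sharing the same ratio $\theta = \theta^A/\theta^B$, and then observe that the family of IR constraints over the representatives of a given ratio collapses to the single tightest one, which is exactly $R_j(\theta)$. Your explicit split into the two directions (noting that ``if'' does not need IC) and your remark about attainment of the minimum are minor refinements of the same argument.
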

\begin{proof}
    Proof in \Cref{proof:effective_R}.
\end{proof}

As a result, it is without loss of optimality to restrict attention to mechanisms which condition only on $\theta = \theta^A/\theta^B$, with reservation value $R_j(\theta)$.\footnote{\cite{dworczak2021redistribution} make a similar observation in a model of monopoly-pricing. See also \cite{baron2025mechanism}.} As shown below, it is also without loss of generality to take $R_j$ to be non-decreasing. Where it will not cause confusion, I simply refer to the ratio $\theta = \theta^A/\theta^B$ as the agent's type. Let $F_j$ be the cdf of $\theta$ in group $j$. We maintain the following assumption.

\begin{assumption}\label{ass:distribution}
    For all $j$, the distribution $F_j$ has full support on and interval $[\und{\theta}_j,\bar{\theta}_j]$ with $\und{\theta}_j > 0$, and admits a strictly positive density $f_j$.
\end{assumption}

In light of \Cref{lem:effective_R}, we restate the firm's labor-market program as choosing a mechanism consisting of functions $(a_j,b_j): [\und{\theta}_j,\bar{\theta}_j] \rightarrow \R^2_+$ for each $j \in \mathcal{J}$ to solve
\begin{align}
\min_{(a_j,b_j)_{j=1}^J} \sum_{j \in \mathcal{J}} \mathbbm{E}_{j}\left[ a_j(\theta_j)\pi^A_j + b_j(\theta_j) \pi^B_j \right]& \label{prog:2}\\
  s.t. \quad  a_j(\theta_j)\theta_j + b_j(\theta_j) \leq a_j(\theta'_j)\theta_j + b_j(\theta'_j)  \quad\quad &\forall \ j \in \mathcal{J}, \ \ \theta_j, \theta'_j \in \Theta_j \tag{IC} \\
    a_j(\theta_j)\theta_j + b_j(\theta_j)  \leq R_j(\theta_j) \quad\quad &\forall \ j \in \mathcal{J}, \ \theta_j \in \Theta_j \tag{IR} \\
    \sum_{j \in \mathcal{J}} \E_j\big[ a_j(\theta_j) \big] = \ell^A \quad \text{and}& \quad \sum_{j \in \mathcal{J}} \E_j\big[ b_j(\theta_j) \big] = \ell^B
    \label{eq:MC}. \tag{MC}
\end{align}

\subsection{Two-step approach}

We solve the program in \cref{prog:2} in two steps, following the approach of \cite{baron2025mechanism}. Observe that both the firm's objective and the market-clearing conditions depend only on the aggregate allocations for each group. Let $\mathcal{F}_j \subset \R^2_+$ be the set of pairs $\left(\E_j\left[ a_j(\theta_j)\right], \E_j\left[ b_j(\theta_j)\right]\right)$, i.e., aggregate allocations for group $j$, that can be induced by some IR and IC mechanism. We call such pairs \textit{incentive feasible}. Observe that $\mathcal{F}_j$ is compact and convex, as it is defined by a set of linear inequalities (the IC and IR constraints).

Suppose for a moment that we knew the incentive-feasible set for each group. Then we could solve an ``outer program''
    \begin{align*}
      L(\ell^A,\ell^B) \ := \ \min_{(n^A_j, n^B_j)_{j\in \mathcal{J}}} \sum_{j \in \mathcal{J}} n_j^A \pi_j^A + n_j^B \pi_j^B \quad\quad s.t. \quad &(n_j^A, n_j^B) \in \mathcal{F}_j \ \ \forall \ j \in \mathcal{J} \\
        &  \sum_{j \in \mathcal{J}} n_j^A = \ell^A \\
        & \sum_{j \in \mathcal{J}} n_j^B = \ell^B
    \end{align*}
We can see that the program only depends on the agents' outside options and the distribution of preferences via the incentive-feasible sets $\mathcal{F}_j$.\footnote{Having characterized these sets for each agent, the outer program is similar to that studied by \cite{baron2025mechanism} (with some small differences regarding the nature of the market-clearing constraint). The primary difference between the current study and \cite{baron2025mechanism} lies in the structure of the inner program (and the approach to solving the inner program). As a result, the optimal mechanisms can look quite different, even though the outer programs appear similar.} The constraint that $(n^A_j,n^B_j)$ be incentive feasible ensures that this outcome can be implemented by some IC and IR mechanism, and is thus a solution to the principal's cost-minimization program.

Let $\lambda^k$ be the multiplier on the market-clearing constraint for task $k$. Then the outer program is equivalent to 
\begin{equation}\label{prog:outer_primal}
    \min_{(n^A_j, n^B_j) \in \mathcal{F}_j} \ \sup_{\lambda^A,\lambda^B} \ \ \sum_{j \in \mathcal{J}} (n_j^A \pi_j^A + n_j^B \pi_j^B) + \lambda^A\left(\ell^A -  \sum_{j \in \mathcal{J}} n_j^A \right) + \lambda^B\left(\ell^B -  \sum_{j \in \mathcal{J}} n_j^B \right)
\end{equation}
The dual to this program can be written as 
\begin{align}
     \sup_{\lambda^A,\lambda^B} &\min_{(n^A_j, n^B_j) \in \mathcal{F}_j}  \ \lambda^A \ell^A + \lambda^B \ell^B - \sum_{j \in \mathcal{J}} \left( \left( \lambda^A - \pi_j^A \right)n_j^A + \left( \lambda^B - \pi_j^B \right)n_j^B \right) \label{prog:outer_dual1} \\
     &= \sup_{\lambda^A,\lambda^B} \ \ \lambda^A \ell^A + \lambda^B \ell^B - \sum_{j\in \mathcal{J}} S_j\left(  \lambda^A - \pi_j^A  \ , \ \lambda^B - \pi_j^B \right) \label{prog:outer_dual2}
\end{align}
where 
\begin{equation}\label{eq:support}
    S_j(w^A,w^B) := \max_{(n^A,n^B) \in \mathcal{F}_j } w^A n^A + w^B n^B.
\end{equation}
is the support function of the convex set $\mathcal{F}_j$.\footnote{The support function $S_j$ yields the dual representation 
\[
\mathcal{F}_j = \{(n^A,n^B) : w^A n^A + w^B n^B \leq S_j(w^A,w^B) \ \forall \ (w^A,w^B) \in \R^2 \}.
\]} Let $(N^A(w^A,w^B), N^B(w^A,w^B)) \subset \R_+^2$ be the solutions to the program \cref{eq:support}. We show that strong duality holds, i.e., the value of the primal equals that of the dual, so we can obtain the optimal mechanism by solving the program in \cref{prog:outer_dual2}, where $(\lambda^A,\lambda^B)$ are shadow prices on the market-clearing constraints, i.e., the marginal costs of completing tasks $A$ and $B$ with labor.

\begin{proposition}\label{thm:outer}
     If the principal's program in (\ref{prog:2}) is feasible then strong duality for the outer program holds. Moreover, if $\lambda_*^A,\lambda_*^B$ is a solution to the dual outer program in (\ref{prog:outer_dual2}) then there exist selections $(n^A_j,n^B_j) \in N^*_j(\lambda_*^A - \pi_j^A,\lambda_*^B - \pi_j^B)$ which satisfy market clearing, and these constitute a solution to the primal outer program. 
\end{proposition}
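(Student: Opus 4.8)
The plan is to treat the primal outer program \eqref{prog:outer_primal} as a linear program over the compact convex set $\prod_{j\in\mathcal{J}}\mathcal{F}_j$ subject to two affine (market-clearing) equality constraints, and to read off both the strong-duality claim and the structure of the optimal primal allocation from a single minimax argument together with the fact that the Lagrangian decouples across groups.

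First I would check feasibility and attainment of the primal outer program. Given any IC--IR mechanism feasible for \eqref{prog:2}, the vector of group-wise expectations $\big(\E_j[a_j(\theta_j)],\E_j[b_j(\theta_j)]\big)_{j\in\mathcal{J}}$ lies in $\prod_j\mathcal{F}_j$ by the definition of incentive feasibility and satisfies market clearing \eqref{eq:MC}, so the outer primal is feasible. Its feasible set is the intersection of $\prod_j\mathcal{F}_j$ with the closed affine set cut out by \eqref{eq:MC}; since each $\mathcal{F}_j$ is compact and convex --- closed and convex by definition, and bounded because the (IR) constraint together with $\und{\theta}_j>0$ and the boundedness of $r_j$ (hence of $R_j$) bounds the allocations --- this feasible set is compact, and it is nonempty by the preceding sentence, so the continuous linear objective attains a minimizer $n^*=(n_j^{*A},n_j^{*B})_{j\in\mathcal{J}}$. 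Write $P:=L(\ell^A,\ell^B)=\sum_j\big(n_j^{*A}\pi_j^A+n_j^{*B}\pi_j^B\big)$ for the primal value.

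Next I would establish strong duality. Let $\mathcal{L}(n,\lambda)$ denote the Lagrangian in \eqref{prog:outer_primal}. As observed there, $\min_n\sup_\lambda\mathcal{L}(n,\lambda)=P$, because the inner supremum is $+\infty$ for any $n$ violating market clearing; and as rewritten in \eqref{prog:outer_dual1}--\eqref{prog:outer_dual2}, $\sup_\lambda\min_n\mathcal{L}(n,\lambda)$ is exactly the value of the dual outer program, with the inner minimization decoupling across $j$ into the support-function terms $-S_j(\lambda^A-\pi_j^A,\lambda^B-\pi_j^B)$. Since $\mathcal{L}$ is affine --- hence continuous, quasiconvex, and quasiconcave --- separately in $n$ and in $\lambda$, while $\prod_j\mathcal{F}_j$ is compact and convex and $\R^2$ is convex, Sion's minimax theorem yields $\min_n\sup_\lambda\mathcal{L}=\sup_\lambda\min_n\mathcal{L}$; that is, the dual value equals $P$, which is the asserted strong duality. (Weak duality, the inequality $\geq$ between the primal and dual values, is automatic.) Going through Sion's theorem rather than Lagrangian duality for convex programs has the advantage of avoiding any constraint qualification, such as Slater's condition, which may fail here.

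Finally I would extract the claimed selections. Fix a dual optimizer $\lambda_*=(\lambda_*^A,\lambda_*^B)$; strong duality gives $\min_n\mathcal{L}(n,\lambda_*)=P$. Because $n^*$ satisfies market clearing, the multiplier terms in $\mathcal{L}(n^*,\lambda_*)$ vanish, so $\mathcal{L}(n^*,\lambda_*)=\sum_j\big(n_j^{*A}\pi_j^A+n_j^{*B}\pi_j^B\big)=P$ and hence $n^*\in\argmin_n\mathcal{L}(\cdot,\lambda_*)$. But up to the additive constant $\lambda_*^A\ell^A+\lambda_*^B\ell^B$, the map $n\mapsto\mathcal{L}(n,\lambda_*)$ equals $-\sum_j\big[(\lambda_*^A-\pi_j^A)n_j^A+(\lambda_*^B-\pi_j^B)n_j^B\big]$, which separates across $j$; its set of minimizers over $\prod_j\mathcal{F}_j$ is therefore exactly the product $\prod_j N_j^*(\lambda_*^A-\pi_j^A,\lambda_*^B-\pi_j^B)$ of the solution sets of \eqref{eq:support}. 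Thus $n_j^*\in N_j^*(\lambda_*^A-\pi_j^A,\lambda_*^B-\pi_j^B)$ for every $j$: these selections satisfy market clearing by construction and, being primal optimal, solve the primal outer program. (For completeness one can also record the converse: any market-clearing selection $\hat n\in\prod_j N_j^*(\cdot)$ satisfies $\mathcal{L}(\hat n,\lambda_*)=\sum_j(\hat n_j^A\pi_j^A+\hat n_j^B\pi_j^B)=\min_n\mathcal{L}(n,\lambda_*)=P$ and is feasible, hence is also a primal solution.) The one step calling for genuine care is strong duality without a constraint qualification --- handled by the compactness-based minimax route --- together with the related point that each $\mathcal{F}_j$ is bounded, which uses $\und{\theta}_j>0$ and the boundedness of $r_j$.
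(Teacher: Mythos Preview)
Your proof is correct and takes a genuinely different route from the paper's. The paper establishes strong duality by first invoking Slater's condition when the feasible allocation is interior to every $\mathcal{F}_j$, and otherwise perturbing each $\mathcal{F}_j$ to an $\varepsilon$-fattening $\mathcal{F}_j^\varepsilon$, applying Slater for $\varepsilon>0$, and then passing to the limit via Berge's maximum theorem on the primal side together with monotonicity of $S_j^\varepsilon$ on the dual side. You instead exploit the compactness of $\prod_j\mathcal{F}_j$ directly and apply Sion's minimax theorem to the bilinear Lagrangian, which delivers strong duality in one step with no constraint qualification or limiting argument. This is cleaner, and the compactness it relies on is the same compactness the paper asserts (and which you correctly trace to $\und{\theta}_j>0$ and the boundedness of $r_j$). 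Your saddle-point argument for the ``moreover'' clause---showing that any primal optimizer $n^*$ must minimize $\mathcal{L}(\cdot,\lambda_*)$ and hence lie in the product $\prod_j N_j^*(\lambda_*^A-\pi_j^A,\lambda_*^B-\pi_j^B)$---is also more explicit than the paper's proof, which addresses only the strong-duality half and leaves the selection claim implicit.
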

\begin{proof}
    Proof in \Cref{proof:outer}.
\end{proof}

Define the \textit{frontier} of $\mathcal{F}_j$ to be the set 
\[
\{ N_j^*(w^A,w^B) \in \mathcal{F}_j \ : \  (w^A,w^B) \not\leq 0 \}
\] 
Thus $\mathcal{F}_j$ is the subset of the positive orthant enclosed by the frontier. Abusing terminology, we say that the \textit{frontier of $\mathcal{F}_j$ is strictly convex} if any convex combination of points on the frontier lies in the interior of $\mathcal{F}_j$. When $F_j = F_{j'}$ and $R_j = R_{j'}$, so that $\mathcal{F}_j = \mathcal{F}_{j'}$ for all $j,j'$ the solution to the outer problem is easy to visualize. \Cref{fig:outer} depicts the frontier of the incentive-feasible set. The dots along the frontier are the allocations for each of five groups. The market clearing condition requires precisely that these points have barycenter $(\ell^A/J, \ell^B/J)$. The arrows depict the direction and magnitude of $(\lambda_*^A - \pi_j^A, \lambda_*^B - \pi_j^B)$ for each $j$. Notice that the frontier, as depicted in \Cref{fig:outer}, can have upward-sloping segments. This possibility plays an important role in shaping the dynamics of automation, and is discussed in detail in the subsequent section. 

From \Cref{thm:outer} we can read off some properties of the optimal mechanism. Say that group $j$ is on their frontier if their aggregate allocation is in this set, and refer to a group $j$ which is not on their frontier as \textit{remedial} (all groups are on the frontier in \Cref{fig:outer}).

\begin{corollary}\label{cor:outer}
    All optimal mechanisms have the following properties
    \begin{enumerate}[i.]
        \item Each agent in group $j$ is offered a mechanism which solves the ``inner program'' for weights $(w^A,w^B) = (\lambda^A_* - \pi_j^A, \lambda^B_* - \pi_j^B)$ (see \Cref{sec:inner}). 
        \item The allocation of any agent in a remedial group, $j$, does not depend on their type if $(\lambda^A_* - \pi_j^A, \lambda^B_* - \pi_j^B) \neq (0,0)$.
        \item If no two groups have identical performance on any task (i.e., $\pi_j^k \neq \pi_{j'}^k$ for all $j,j' \in \mathcal{J}$, $k\in \{A,B\}$) then there are at most two remedial agents with non-zero allocations. If there are two such agents then one receives only task $A$ and the other only task $B$.
    \end{enumerate}
    Moreover, there exists a solution in which for any group $j$ there are weights $(w^A_j , w^B_j)$ with at least one dimension strictly positive, and a scalar $\beta_j \in [0,1]$, such that the mechanism offered to agents in this group is $\beta_j \cdot (a^*(\theta), b^*(\theta)) $, where $(a^*(\theta), b^*(\theta))$ solves the inner program for weights $(w^A_j , w^B_j)$. If $j$ is on their frontier then $\beta_j = 1$ and $(w^A_j , w^B_j) = (\lambda^A_* - \pi_j^A, \lambda^B_* - \pi_j^B)$.
\end{corollary}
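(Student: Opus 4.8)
The plan is to derive \Cref{cor:outer} as a sequence of consequences of \Cref{thm:outer}, reading off structural facts about the optimal mechanism from the geometry of the dual solution and the support-function representation of the incentive-feasible sets.

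First I would establish item (i). By \Cref{thm:outer}, given a dual solution $(\lambda_*^A,\lambda_*^B)$ there exist selections $(n_j^A,n_j^B) \in N_j^*(\lambda_*^A - \pi_j^A, \lambda_*^B-\pi_j^B)$ that clear the market and solve the primal outer program. Each such aggregate allocation must then be induced by some IC, IR mechanism for group $j$; the point is that the \emph{specific} mechanism attaining this aggregate allocation can be taken to be a maximizer of $w^A \E_j[a_j] + w^B \E_j[b_j]$ over incentive-feasible mechanisms, with $(w^A,w^B) = (\lambda_*^A - \pi_j^A,\lambda_*^B-\pi_j^B)$ — this is precisely the ``inner program'' of \Cref{sec:inner}. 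So any optimal mechanism must, group by group, coincide with an inner-program solution for these weights (when the group is on its frontier). For a remedial group the aggregate allocation lies strictly inside $\mathcal{F}_j$, and I would argue it can be obtained by scaling down a frontier point: this is where the scalar $\beta_j \in [0,1]$ and the modified weights $(w_j^A,w_j^B)$ enter.

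Next, item (ii): if $j$ is remedial and $(\lambda_*^A-\pi_j^A,\lambda_*^B-\pi_j^B) \neq (0,0)$, I would use the fact that the value of the inner program in direction $(w^A,w^B) \neq 0$ is attained on the frontier, and that a remedial group's allocation is in the interior. Since the primal cost contribution of group $j$ is $\pi_j^A n_j^A + \pi_j^B n_j^B$ and the KKT/complementary-slackness conditions for the outer program force $n_j$ to minimize $(\pi_j^A - \lambda_*^A)n_j^A + (\pi_j^B - \lambda_*^B)n_j^B$ over $\mathcal{F}_j$ (equivalently maximize in the direction $(\lambda_*^A-\pi_j^A,\lambda_*^B-\pi_j^B)$), a nonzero direction means the maximizer is a frontier point — contradiction unless the allocation is a type-independent constant bundle, which is what ``does not depend on their type'' means once we recall (from the inner-program analysis) that only the degenerate constant mechanisms attain interior points via scaling in a fixed direction. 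I would phrase this carefully in terms of the support function: the optimizing face of $\mathcal{F}_j$ in a nonzero direction, scaled, is a single point (or at least can be chosen so), and a single aggregate point is implementable by the constant mechanism $(a_j,b_j) \equiv (n_j^A,n_j^B)$.

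For item (iii) I would count remedial groups with nonzero allocations. The market-clearing constraints are two linear equations; the outer program, restricted to the face structure selected by the dual solution, becomes (after fixing which groups are on their frontiers) a finite linear program whose basic optimal solutions have at most two variables off their bounds — here the ``bounds'' being the frontier for on-frontier groups and $0$ for remedial groups. Under the genericity hypothesis $\pi_j^k \neq \pi_{j'}^k$, the directions $(\lambda_*^A-\pi_j^A,\lambda_*^B-\pi_j^B)$ are pairwise distinct, which rules out degeneracies that would allow more remedial groups to be active; a remedial group active in a mixed direction would be suboptimal, so each active remedial group specializes, and a basic-feasible-solution dimension count with two equality constraints gives at most two such groups, necessarily one pure-$A$ and one pure-$B$. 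The final ``moreover'' clause then just packages (i)–(iii): on-frontier groups get $\beta_j = 1$ and the raw weights; remedial groups get $\beta_j < 1$ and weights chosen so that the scaled inner-program solution equals their (constant) allocation, with at least one weight positive since the allocation is nonzero (or, if the allocation is zero, the statement is vacuously satisfiable by any positive weight and $\beta_j = 0$).

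The main obstacle I anticipate is item (iii): making the ``at most two remedial agents'' count fully rigorous requires translating the outer program into a form where a basic-solution / extreme-point argument applies cleanly, while simultaneously using the support-function geometry to show that an active remedial group in a nonzero direction must sit at an extreme point of the relevant face (hence specialize), and invoking the genericity of the $\pi_j^k$ to exclude the knife-edge cases. Coordinating the convex-geometry argument (faces of $\mathcal{F}_j$, directions of the dual vectors) with the linear-programming dimension count is the delicate part; items (i) and (ii) are comparatively direct corollaries of strong duality and complementary slackness.
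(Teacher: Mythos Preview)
Your plan for part (i) and the ``moreover'' clause is fine and matches the paper. The gap is in parts (ii) and (iii), and it stems from not unpacking what ``remedial'' means.

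Recall the frontier of $\mathcal{F}_j$ is defined as $\{N_j^*(w^A,w^B) : (w^A,w^B)\not\leq 0\}$. Since \Cref{thm:outer} forces $(n_j^A,n_j^B)\in N_j^*(\lambda_*^A-\pi_j^A,\lambda_*^B-\pi_j^B)$, group $j$ being remedial is exactly the statement that $(\lambda_*^A-\pi_j^A,\lambda_*^B-\pi_j^B)\leq 0$. This is the one-line observation that drives everything. Your argument for (ii) instead asserts that ``a nonzero direction means the maximizer is a frontier point,'' which is false precisely for directions $\leq 0$; there is no contradiction to resolve. Once you know the direction is $\leq 0$ and $\neq(0,0)$, maximizing over $\mathcal{F}_j\subset\R_+^2$ forces at least one coordinate of the allocation to be zero (both if the direction is strictly negative), so the group receives at most one task type or is excluded --- and any such aggregate allocation is implemented by a constant mechanism. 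That is the whole of (ii).

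For (iii) the paper's argument is then immediate: a remedial group with a \emph{nonzero} allocation must have exactly one of $\lambda_*^A-\pi_j^A$, $\lambda_*^B-\pi_j^B$ equal to zero (if both are strictly negative the allocation is $(0,0)$). Under the distinct-performance hypothesis, at most one $j$ satisfies $\lambda_*^A=\pi_j^A$ and at most one satisfies $\lambda_*^B=\pi_j^B$, giving at most two such groups, one specializing in each task. Your proposed LP basic-solution dimension count is workable in principle but is the hard way around; the ``delicate coordination'' you anticipate disappears once you use the sign observation above.
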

\begin{proof}
    Part \textit{i.} is immediate from \Cref{thm:outer} and the definition of the inner program in \Cref{sec:inner}. Part \textit{ii.} follows because any such agent either receives at most one type of task, or is excluded entirely. Part 3 holds because under non-identical performance there is at most one agent with $\lambda^A_* - \pi_j^A = 0$, and at most one with $\lambda^B_* - \pi_j^B = 0$. The final point follows because any point in the set $\mathcal{F}_j$ can be induced in this way. 
\end{proof}

In summary, if we know support functions $S_j$ of each group's incentive-feasible set $\mathcal{F}_j$, the preceding argument characterizes the labor cost function as
\[
L(\ell^A,\ell^B) = \sup_{\lambda^A,\lambda^B} \ \left\{ \lambda^A \ell^A + \lambda^B \ell^B - \sum_{j\in \mathcal{J}} S_j\left(  \lambda^A - \pi_j^A  \ , \ \lambda^B - \pi_j^B \right)\right\}
\]
and tells us the optimal aggregate allocations, $\left(\E_j\left[ a_j(\theta_j)\right], \E_j\left[ b_j(\theta_j)\right]\right)$, for each group. To complete the characterization, and to identify the optimal labor market mechanisms (as opposed to just the group-level allocations) we must solve an ``inner program'' to characterize the support functions. However, the main results regarding automation can be stated without direct reference to the inner program, and so readers primarily interested in these results can skip directly to \Cref{sec:automation} before returning to \Cref{sec:inner} for the proofs. 

\begin{figure}
    \centering

    \begin{tikzpicture}[scale = .4, >=latex, font=\sffamily]

\draw[->, line width=0.7pt] (0,0) -- (20.3,0) node[below, yshift=-3pt]{$A$};
\draw[->, line width=0.7pt] (0,0) -- (0,16.7) node[left, xshift=-4pt]{$B$};

\draw[densely dotted, thick] (0,0) -- (14,0);
\draw[densely dotted, thick]
  (14,0) .. controls (16.7,8.8) and (10.7,14.9) .. (5.2,15.6)
           .. controls (3.2,15.6)  and (1.6,14.4)  .. (0,11.1);
\draw[densely dotted, thick] (0,11.1) -- (0,0);

\node[circle, draw=none, fill=none] (mu) at (7.5,9.7) {};
\node at (9,9.9) {$\circ$};
\node[anchor=north] at (7.5,9.7) {\small $\left(\frac{\ell^A}{J},\frac{\ell^B}{J}\right)$};

\coordinate (P1) at (1.5,13.6);
\coordinate (P2) at (3.2,15.2);
\coordinate (P3) at (8,14.8);
\coordinate (P4) at (14.7,5.6);
\coordinate (P5) at (14.0,0.0);

\foreach \p/\n in {P1/1,P2/2,P3/3,P4/4,P5/5}{
  \fill (\p) circle (5pt);
}

\draw[->, line width=0.9pt] ($(P1)$) -- ($(P1)+(-0.75,0.7)$);
\draw[->, line width=0.9pt] ($(P2)$) -- ($(P2)+(-0.9,1.6)$);
\draw[->, line width=0.9pt] ($(P3)$) -- ($(P3)+(0.7,1.6)$);
\draw[->, line width=0.9pt] ($(P4)$) -- ($(P4)+(3.2,0.32)$);
\draw[->, line width=0.9pt] ($(P5)$) -- ($(P5)+(0.9,-0.9)$);

\end{tikzpicture}

    \caption{The outer program}
    \label{fig:outer}
\end{figure}
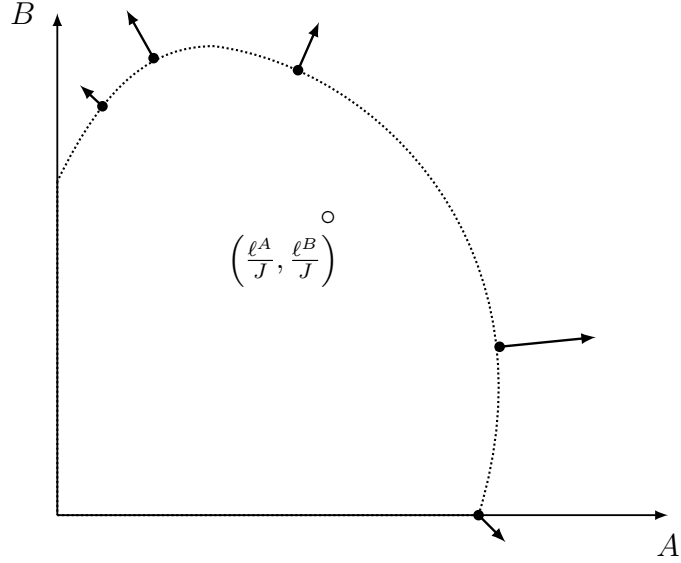

\subsection{The inner program}\label{sec:inner}

The goal of the inner program is to characterize the set $\mathcal{F}_j$ of incentive-feasible pairs for a given group via the support function $S_j$. This program considers each group separately, so for convenience we suppress the $j$ subscript in the notation. As the set $\mathcal{F}$ is defined by the IC, IR, and non-negativity constraints on the mechanism $(a,b)$, we characterize $S(w^A,w^B)$ by solving the maximization problem in \cref{eq:support} directly in terms of the underlying mechanism, subject to these constraints. That is, we solve
\begin{align}
 S(w^A,w^B) \ =& \max_{a,b} \ \int w^A a(\theta) + w^B b(\theta) dF(\theta) \label{eq:inner} \\
         s.t. \quad & \theta a(\theta) + b(\theta) \leq \theta a(\theta') + b(\theta') \quad \quad \forall \ \theta,\theta' \in [\und{\theta},\bar{\theta}] \tag{IC} \\
        &\theta a(\theta) + b(\theta) \leq R(\theta) \quad \quad \forall \ \theta \in [\und{\theta},\bar{\theta}] \tag{IR}\\
        & a,b \geq 0.\notag
\end{align}

This program now bears some similarity to a monopoly pricing problem \citep{mussa1978monopoly, myerson1981optimal}, with the allocation of task $B$ playing the role of transfers. Most closely related to the current program are \cite{dworczak2024mechanism} and \cite{baron2025mechanism}. Translated into the current notation, \cite{dworczak2024mechanism} solve a modification of \cref{eq:inner} in which $R$ is concave, $a$ is bounded above, and $b$ is unconstrained. In \cref{eq:inner} on the other hand, the only upper bounds on $a$ and $b$ come endogenously through the IR constraint. This necessitates a different solution technique, even for the special case of \cref{eq:inner} in which $R$ is concave. \cite{baron2025mechanism} study a special case of \cref{eq:inner} in which $R$ is linear, and employ an alternative solution technique. Here we develop an alternative graphical approach to solve the problem for general $R$. This approach is both simpler and facilitates comparative statics. 

To solve \cref{eq:inner}, we first simplify the program using the envelope theorem \citep{milgrom2002envelope} and the Spence-Mirrlees characterization of incentive compatibility: by the envelope theorem
\[
a(\theta)\theta +b(\theta) = a(\und{\theta})\und{\theta} + b(\und{\theta}) + \int_{\und{\theta}}^\theta a(z)dz
\]
so 
\[
b(\theta) = -a(\theta)\theta + \und{u} + \int_{\und{\theta}}^\theta a(z)dz.
\]
for some $\und{u} \geq 0$. Integrating by parts we have
\[
    \int_{\und{\theta}}^{\bar{\theta}} \int_{\und{\theta}}^\theta a(z)dz \ dF(\theta) = \int_{\und{\theta}}^{\bar{\theta}}a(\theta)d\theta - \int_{\und{\theta}}^{\bar{\theta}} a(\theta)F(\theta) d\theta. 
\]
Then substituting into the objective, the program becomes
\begin{align}
    \max_{a:[\und{\theta},\bar{\theta}] \rightarrow \R_+, \ \und{u} \geq 0} &\int_{\und{\theta}}^{\bar{\theta}} a(\theta)\left(w^A - w^B \left(\theta - \frac{1 - F(\theta)}{f(\theta)} \right) \right) f(\theta) d\theta  + w^B \und{u}  \label{prog:alpha}\\
    s.t. \quad & a \text{ non-increasing } \tag{IC}\\
    & \und{u} + \int_{\und{\theta}}^\theta a(z)dz \leq R(\theta) \quad \forall \ \theta \tag{IR} \\
    & -a(\theta)\theta + \und{u} + \int_{\und{\theta}}^\theta a(z)dz \geq 0 \quad \forall \ \theta. \tag{NN}
\end{align}

Because $a \geq 0$ we have the following observation, referenced above, that it is without loss of generality to assume that $R$ is non-decreasing.  

\begin{lemma}\label{lem:R_nondecreasing}
The program is equivalent if we replace $R$ with its non-decreasing lower envelope (the largest non-decreasing function bounded above by $R$).     
\end{lemma}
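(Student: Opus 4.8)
The plan is to exploit the fact that the objective in \cref{prog:alpha} does not involve $R$ at all, so it is enough to show that the feasible set is unchanged when $R$ is replaced by its non-decreasing lower envelope. Write $\tilde R(\theta) := \inf_{\theta' \geq \theta} R(\theta')$. A routine check (which I would include briefly) shows that $\tilde R$ is non-decreasing, that $\tilde R \leq R$ (take $\theta' = \theta$), and that any non-decreasing $h$ with $h \leq R$ satisfies $h(\theta) \leq h(\theta') \leq R(\theta')$ for every $\theta' \geq \theta$ and hence $h \leq \tilde R$; so $\tilde R$ is precisely the largest non-decreasing function bounded above by $R$.

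One inclusion is immediate: since $\tilde R \leq R$, replacing $R$ by $\tilde R$ only tightens the (IR) constraint and leaves (IC) and (NN) untouched, so every $(a,\und{u})$ feasible with $\tilde R$ is feasible with $R$. For the reverse inclusion, the key observation is that when $a \geq 0$ the left-hand side of (IR), namely $g(\theta) := \und{u} + \int_{\und{\theta}}^\theta a(z)\,dz$, is non-decreasing in $\theta$. Hence if $(a,\und{u})$ satisfies (IR) with $R$, i.e.\ $g(\theta) \leq R(\theta)$ for all $\theta$, then $g$ is a non-decreasing function bounded above by $R$, so the maximality of $\tilde R$ gives $g(\theta) \leq \tilde R(\theta)$ for all $\theta$; equivalently, for any $\theta$ and any $\theta' \geq \theta$ we have $g(\theta) \leq g(\theta') \leq R(\theta')$, and taking the infimum over $\theta' \geq \theta$ yields $g(\theta) \leq \tilde R(\theta)$. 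Thus $(a,\und{u})$ is also feasible with $\tilde R$. The two feasible sets coincide, and since the objective is unaffected, the two programs have the same value and the same set of optimizers.

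There is no real obstacle here; the only points requiring a line of care are the elementary properties of the non-decreasing lower envelope (existence, the explicit formula, and maximality) and the monotonicity of $g$, which is immediate from $a \geq 0$. I would also note that the identical argument applies to the equivalent formulation in \cref{eq:inner}: there the (IR) left-hand side $\theta a(\theta) + b(\theta)$ equals $\und{u} + \int_{\und{\theta}}^\theta a(z)\,dz$ by the envelope formula, which is again non-decreasing since $a \geq 0$, so the same replacement of $R$ by $\tilde R$ leaves that program equivalent as well.
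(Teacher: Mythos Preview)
Your argument is correct and is exactly the idea the paper intends: the paper's own ``proof'' is just the one-line remark ``Because $a \geq 0$'' preceding the lemma, and your proposal simply spells out that remark---that $\und{u} + \int_{\und{\theta}}^\theta a(z)\,dz$ is non-decreasing, hence any feasible pair automatically satisfies the tighter constraint with the non-decreasing lower envelope $\tilde R$. Your explicit formula $\tilde R(\theta)=\inf_{\theta'\geq\theta}R(\theta')$ and the verification that it is the largest non-decreasing minorant are welcome additions the paper omits.
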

  
Because $a$ is non-increasing it suffices to impose the non-negativity constraint on type $\und{\theta}$. Then we can simplify the program to 
\begin{align}
    \max_{a:[\und{\theta},\bar{\theta}] \rightarrow \R_+, \ \und{u} \geq 0} &\int_{\und{\theta}}^{\bar{\theta}} a(\theta) W(\theta) d\theta  + w^B \und{u} \label{prog:relax0} \\
    s.t. \quad & a \text{ non-increasing } \tag{IC}\\
    & \und{u} + \int_{\und{\theta}}^\theta a(z)dz \leq R(\theta) \quad \forall \ \theta \tag{IR} \\
    & -a(\und{\theta})\und{\theta} + \und{u} \geq 0 \label{eq:nn} \tag{NN}
\end{align}
where $W(\theta) := \left(w^A - w^B \left(\theta - \frac{1 - F(\theta)}{f(\theta)} \right)\right) f(\theta)$. Define 
\begin{equation}\label{eq:mathcalW}
\begin{split}
\mathcal{W}(\theta) &:= \int_{\und{\theta}}^\theta W(z) dz\\
&= w^A F(\theta) - w^B \left( \int_{\und{\theta}}^{\theta} z f(z)dz - (\theta - \und{\theta}) + \int_{\und{\theta}}^{\theta} F(z)dz \right) \\ 
&= w^A F(\theta) - w^B \left(\und{\theta}  - \theta(1 - F(\theta)) \right)
\end{split}
\end{equation}
where the final equality follows by integrating by parts. Let $\bar{\mathcal{W}}$ be the concavification (concave upper envelope) of $\mathcal{W}$ and let $\bar{W}(\theta) = \bar{\mathcal{W}}'(\theta)$. We say that the program is \textit{regular} if $\bar{\mathcal{W}} = \mathcal{W}$.

The strategy is to replace $W$ with $\bar{W}$ and look for a solution that satisfies the pooling property, as in \cite{myerson1981optimal}.  That is, we first consider the program 
\begin{align}\label{prog:relax}
    \max_{a:[\und{\theta},\bar{\theta}] \rightarrow \R_+, \ \und{u} \geq 0} &\int_{\und{\theta}}^{\bar{\theta}} a(\theta) \bar{W}(\theta) d\theta  + w^B \und{u} \\
    s.t. \quad & a \text{ non-increasing } \tag{IC}\\
    & \und{u} + \int_{\und{\theta}}^\theta a(z)dz \leq R(\theta) \quad \forall \ \theta \tag{IR} \\
    & -a(\und{\theta})\und{\theta} + \und{u} \geq 0 \tag{NN}
\end{align}
We distinguish between three cases, determined by the shape of $R$. We focus here on the case where $R$ is concave. This holds for example if $R$ is the indirect utility attained by choosing task allocations from some menu, as is the case in many applications. More generally, notice that the left hand side of the IR constraint must be a concave function, since $a$ is non-increasing. Thus whether or not $R$ is concave, there exists some concave $\hat{R}$ with $\hat{R}(\theta) \leq R(\theta)$ for all $\theta$, such that the solution is unchanged if we replace $R$ with $\hat{R}$. Thus we can reduce the program to the concave case. If $R$ is convex then we can solve for $\hat{R}$ in closed form. Otherwise, in general we can only describe some properties that it must have. For brevity, the details of the convex and general cases are deferred to \Cref{sec:general}.

\subsubsection{The concave case}

Assume that $R$ is concave and non-decreasing. Define
\begin{align*}
\und{\theta}^* &:= \min\{ \argmax_{\theta \in [\und{\theta},\bar{\theta}]} \ \mathcal{W}(\theta) \} \\
\bar{\theta}^* &:= \max\{ \argmax_{\theta \in [\und{\theta},\bar{\theta}]} \ \mathcal{W}(\theta) \}
\end{align*}
Observe that $\und{\theta}^* > \und{\theta}$ if and only if $\max \mathcal{W}(\theta) > 0$ (because $\mathcal{W}(\und{\theta}) = 0$). Moreover, $\bar{W} > (<) \ 0$ if and only if $\theta < \und{\theta}^*$ ($\theta > \bar{\theta}^*$). Also $\und{\theta}^* = \min\{ \argmax_{\theta \in [\und{\theta},\bar{\theta}]} \ \bar{\mathcal{W}}(\theta) \} $ and $\bar{\theta}^* = \max\{ \argmax_{\theta \in [\und{\theta},\bar{\theta}]} \ \bar{\mathcal{W}}(\theta) \}$.

\begin{lemma}\label{lem:fixed_u}
    For a fixed $\und{u} \in [0,R(\und{\theta})]$, all and only solutions (up to zero-measure perturbations) to the program in (\ref{prog:relax}) take the following form: 
    \begin{equation*}
        a^*(\theta) = 
        \begin{cases}
        \frac{\und{u}}{\und{\theta}} \quad &\text{on } [\und{\theta},  \min\{\tilde{\theta}, \und{\theta}^*\}] \\
        R'(\theta) &\text{on } (\tilde{\theta}, \und{\theta}^*] \\
        \text{non-increasing and bounded above by } R'(\und{\theta}^*) & \text{on } (\und{\theta}^*,\bar{\theta}^*] \\
        0 &\text{on } (\bar{\theta}^*, \bar{\theta}]
        \end{cases}
    \end{equation*}
    where $\tilde{\theta}$ is the unique solution to
    \begin{equation}\label{eq:tilde_c}
        \tilde{\theta} \frac{\und{u}}{\und{\theta}} = R(\tilde{\theta}).
    \end{equation}
    if this is below $\bar{\theta}$, or $\bar{\theta}$ otherwise.
\end{lemma}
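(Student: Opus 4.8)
The plan is to reformulate the program in terms of the agent's indirect value and then exploit concavity. For a feasible $(a,\und{u})$ set $U(\theta):=\und{u}+\int_{\und{\theta}}^{\theta}a(z)\,dz$, so $U'=a$, $U(\und{\theta})=\und{u}$, and $U$ is concave and non-decreasing (as $a$ is non-increasing and non-negative). Since $\theta a(\theta)+b(\theta)=U(\theta)$, the (IR) constraint is exactly $U\le R$ pointwise; and since $b'(\theta)=-\theta a'(\theta)\ge 0$ the map $b$ is non-decreasing, so (NN) reduces to $b(\und{\theta})=\und{u}-\und{\theta}\,U'(\und{\theta})\ge 0$, i.e.\ $U'(\und{\theta})\le \und{u}/\und{\theta}$. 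Writing $\bar{\mathcal W}(\theta)=\int_{\und{\theta}}^{\theta}\bar W$, the $a$-dependent part of the objective is $\int_{\und{\theta}}^{\bar{\theta}}a\,\bar W\,d\theta=\int_{\und{\theta}}^{\bar{\theta}}U'\,d\bar{\mathcal W}$, and I will use the recorded sign pattern: $\bar W>0$ on $[\und{\theta},\und{\theta}^*)$, $\bar W=0$ on $(\und{\theta}^*,\bar{\theta}^*)$, $\bar W<0$ on $(\bar{\theta}^*,\bar{\theta}]$.

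Next I would dispatch the two ``easy'' regions. On $(\bar{\theta}^*,\bar{\theta}]$ where $\bar W<0$: if an optimal $a$ were positive on a positive-measure set there, zeroing it out preserves monotonicity, only lowers $U$ (hence keeps (IR)), does not touch (NN), and strictly raises the objective; so $a^*=0$ a.e.\ on $(\bar{\theta}^*,\bar{\theta}]$. On $(\und{\theta}^*,\bar{\theta}^*]$ where $\bar W=0$: the objective is insensitive to $a$ there, so the only restrictions are monotonicity, non-negativity, $a^*\le a^*(\und{\theta}^*)$, and (IR); and when (IR) binds at $\und{\theta}^*$ (which, per the next step, is exactly when $\tilde{\theta}\le \und{\theta}^*$) this forces $a^*\le R'(\und{\theta}^*)$, the stated bound. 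Here one also checks the three pieces can be concatenated into a globally feasible $U$, e.g.\ by taking $a^*=0$ on $(\und{\theta}^*,\bar{\theta}^*]$ (using that $R$ is non-decreasing), so the region-$[\und{\theta},\und{\theta}^*]$ problem decouples.

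The heart is the region $[\und{\theta},\und{\theta}^*]$, where $\bar W\ge 0$. I claim the optimal $U$ there is the pointwise-largest feasible one,
\[
U^\dagger(\theta):=\begin{cases}\dfrac{\und{u}}{\und{\theta}}\,\theta,&\theta\in[\und{\theta},\tilde{\theta}],\\[6pt] R(\theta),&\theta\in(\tilde{\theta},\und{\theta}^*],\end{cases}
\]
where $\tilde{\theta}$ is the crossing of the line $\ell(\theta):=(\und{u}/\und{\theta})\theta$ with $R$: since $g:=R-\ell$ is concave with $g(\und{\theta})=R(\und{\theta})-\und{u}\ge 0$, the set $\{g\ge 0\}$ is an interval with left endpoint $\le\und{\theta}$, so there is a unique crossing (or $g\ge 0$ throughout and $\tilde{\theta}=\bar{\theta}$). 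One verifies $U^\dagger$ is feasible: it is concave because at $\tilde{\theta}$ the slope drops from $\und{u}/\und{\theta}$ to $R'(\tilde{\theta}^+)\le\und{u}/\und{\theta}$ (i.e.\ $g'(\tilde{\theta}^+)\le 0$, as $g$ is concave, non-negative on $[\und{\theta},\tilde{\theta}]$, and zero at $\tilde{\theta}$); $U^\dagger(\und{\theta})=\und{u}$, $(U^\dagger)'(\und{\theta})=\und{u}/\und{\theta}$; and $U^\dagger\le R$ by definition of $\tilde{\theta}$. It dominates every feasible $U$: for $\theta\le\tilde{\theta}$, $U(\theta)=\und{u}+\int_{\und{\theta}}^{\theta}U'\le\und{u}+(\und{u}/\und{\theta})(\theta-\und{\theta})=U^\dagger(\theta)$ since $U'\le U'(\und{\theta})\le\und{u}/\und{\theta}$; for $\theta>\tilde{\theta}$, $U\le R=U^\dagger$. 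That pointwise dominance is what the objective rewards follows from $\bar W$ being non-increasing on $[\und{\theta},\und{\theta}^*]$ and the layer-cake identity: with $\theta_t:=\sup\{\theta\in[\und{\theta},\und{\theta}^*]:\bar W(\theta)>t\}$,
\[
\int_{\und{\theta}}^{\und{\theta}^*}U'(\theta)\bar W(\theta)\,d\theta=\int_0^{\bar W(\und{\theta})}\bigl(U(\theta_t)-\und{u}\bigr)\,dt,
\]
which $U^\dagger$ maximizes termwise (each $\theta_t$ is $U$-independent). Differentiating, $a^*=\und{u}/\und{\theta}$ on $[\und{\theta},\min\{\tilde{\theta},\und{\theta}^*\}]$ and $a^*=R'$ on $(\tilde{\theta},\und{\theta}^*]$, matching the claim.

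For the ``all and only, up to zero-measure'' clause I would run the same comparisons for equality: on $(\bar{\theta}^*,\bar{\theta}]$ optimality forces $a^*=0$ a.e.; on $[\und{\theta},\und{\theta}^*]$ the layer-cake identity forces any optimal $U$ to agree with $U^\dagger$ at each $\theta_t$, and the range of $t\mapsto\theta_t$ is all of $[\und{\theta},\und{\theta}^*]$ except the interiors of the ironing intervals (where $\bar W$ is locally constant), on which the increment of $U$---hence $a^*$ up to a zero-measure perturbation---is still pinned down; the flat piece $(\und{\theta}^*,\bar{\theta}^*]$ is exactly the third case with its full freedom. The step I expect to be most delicate is precisely this last bookkeeping together with the exact constant in the middle region: on an ironing interval inside $(\tilde{\theta},\und{\theta}^*]$ on which $R$ is strictly concave, the relaxed objective is genuinely indifferent between $U^\dagger=R$ and its chord, so ``up to zero-measure perturbations'' must be read as also absorbing that indeterminacy (it does not arise in the regular case $\bar{\mathcal W}=\mathcal W$). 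The remaining verifications---feasibility of $U^\dagger$, the concatenation into a globally feasible profile, and the strict-improvement arguments---are routine.
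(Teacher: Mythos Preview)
Your argument is correct. The paper gives two proofs. The main one observes that the combined (NN)-and-(IR) bound $\und{u}+\int_{\und\theta}^{\theta} a \le \min\{R(\theta),\,(\und{u}/\und\theta)\,\theta\}$ defines a polymatroid and invokes Edmonds' greedy theorem; this is quite different from your route. The appendix gives an alternative proof that writes $a$ as a Choquet mixture of step functions $(\und{u}/\und\theta)\,\ind_{\theta\le x}$, rewrites the objective as $(\und{u}/\und\theta)\int\bar{\mathcal W}(x)\,dG(x)$, and argues that the $G$ making (IR) bind above $\tilde\theta$ first-order stochastically dominates every feasible alternative.

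Your layer-cake route is closest in spirit to the second proof: both reduce the problem to ``make $U$ pointwise largest'' and then convert pointwise dominance of $U$ into objective dominance via monotonicity of $\bar W$. But your version is more elementary and in fact more careful on one point. The paper's FOSD claim is too strong: a feasible $a$ need not lie below $a^*$ pointwise on $(\tilde\theta,\und\theta^*]$ (take $a$ constant and equal to the average of $R'$ over a subinterval where $R$ is strictly concave). What is always true is only the integrated inequality $U\le U^\dagger$, and it is precisely your layer-cake identity (equivalently, an integration by parts against the non-increasing $\bar W$) that converts this into the desired objective inequality. Your hedge about the ``only'' direction on ironing intervals is likewise warranted: if $\bar W$ is constant on some interval inside $(\tilde\theta,\und\theta^*]$ and $R$ is strictly concave there, the chord of $R$ yields a distinct feasible optimum differing from $R'$ on a set of positive measure, so the lemma's ``up to zero-measure perturbations'' is imprecise outside the regular case $\bar{\mathcal W}=\mathcal W$.
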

A simple proof of \Cref{lem:fixed_u} follows from well-known results in combinatorial optimization. An elementary alternative proof is given in \Cref{proof:altfixed_u}.\footnote{The result could also be proven by observing that \cref{eq:relax} is essentially a weak majorization constraint, and applying well-known results about maximization subject to such constraints (see \cite{ryff1967extreme}). A version of this approach is taken by \cite{dworczak2024mechanism}, who prove a similar result by drawing a connection to second-order stochastic dominance. See also \cite{kleiner2021extreme}.}
\begin{proof}
    For any $\und{u}$, the non-negativity constraint implies that $-a(\theta)\und{\theta} + \und{u} \geq 0$ for all $\theta$, which in turn implies $\int_{\und{\theta}}^\theta a(z)dz \leq (\theta - \und{\theta})\frac{\und{u}}{\und{\theta}}$ for all $\theta$. Then consider the relaxed program 
    \begin{align}
    \max_{a:[\und{\theta},\bar{\theta}] \rightarrow \R_+} &\int_{\und{\theta}}^{\bar{\theta}} a(\theta) \bar{W}(\theta) d\theta \notag  \\
    s.t. \quad & \und{u} + \int_{\und{\theta}}^\theta a(z)dz \leq \min \left\{R(\theta), \ \und{u} + (\theta - \und{\theta})\frac{\und{u}}{\und{\theta}} \right\} \quad \forall \ \theta. \label{eq:relax}
\end{align}
With a finite set of types, the feasible set defined by \cref{eq:relax} is a polymatroid (see \cite{schrijver2003combinatorial} Section 44.1a, or \cite{che2013generalized} Lemma 1). \cite{Edmonds1970} shows that maximization of a linear objective over a polymatriod is solved by a greedy procedure, yielding the finite-state-space analog to the solution $a^*$.\footnote{See also \cite{schrijver2003combinatorial} Theorem 40.1.} This characterization is easily extended to the infinite-dimensional setting by taking the limit of a sequence of finite grids in $[\und{\theta},\bar{\theta}]$.  
\end{proof}

We know that the solution to the program (\ref{prog:relax}) will take the form in \Cref{lem:fixed_u} for some $\und{u}$. We now want to optimize over $\und{u}$. Alternatively, we can maximize over the threshold $\tilde{\theta}$. To this end, it is convenient to extend $R$ to $[\und{\theta}, \infty)$ by letting $R(\theta) = R(\bar{\theta})$ for $\theta > \bar{\theta}$. Then define 
\[\check{\theta} = \max\left\{\theta \in [\und{\theta},\infty) : R(\theta)/\theta \geq R(\und{\theta})/\und{\theta} \right\}.
\]
Then we can choose a threshold $\tilde{\theta}$ defining $a^*$ as in \Cref{lem:fixed_u}, and define $\und{u} = \und{\theta} R(\tilde{\theta})/\tilde{\theta}$. Program (\ref{prog:relax}) becomes
\begin{align}\label{eq:inner_concave_alt}
    \max_{\theta \geq \check{\theta}} \ \ \frac{R(\theta)}{\theta} \bar{\mathcal{W}}(\min\{\theta, \und{\theta}^*\}) + \int\limits_{\min\{\theta, \und{\theta}^*\}}^{\und{\theta}^*}& \bar{W}(z)R'(z)dz + w^B \und{\theta} \frac{R(\theta)}{\theta} 
\end{align}
Define\footnote{Because $\bar{\mathcal{W}}$ is linear where $\bar{\mathcal{W}} \neq \mathcal{W}$,  $\bar{\theta}^{\#} = \max\left\{ \{\theta \in [\und{\theta},\bar{\theta}] : {\theta}\bar{W}({\theta}) - w^B \und{\theta} - \bar{\mathcal{W}}({\theta}) \geq 0 \} \cup \{\und{\theta}\}\right\}$ and $\und{\theta}^{\#} = \min\left\{ \{\theta \in [\und{\theta},\bar{\theta}] : {\theta}\bar{W}({\theta}) - w^B \und{\theta} - \bar{\mathcal{W}}({\theta}) \leq 0 \} \cup \{\bar{\theta}\}\right\}.$}
\begin{align*}
\bar{\theta}^{\#} &:= \max\left\{ \{\theta \in [\und{\theta},\bar{\theta}] : {\theta}W({\theta}) - w^B \und{\theta} - \mathcal{W}({\theta}) \geq 0 \} \cup \{\und{\theta}\}\right\} \\
\und{\theta}^{\#} &:= \min\left\{ \{\theta \in [\und{\theta},\bar{\theta}] : {\theta}W({\theta}) - w^B \und{\theta} - \mathcal{W}({\theta}) \leq 0 \} \cup \{\bar{\theta}\}\right\}.
\end{align*}
These parameters are illustrated in \Cref{fig:tilde_c}. Combining our previous observations, we have the following characterization of the solution to the program in \cref{prog:relax} and \cref{eq:inner_concave_alt}.

\begin{proposition}\label{prop:concave_relax}
    The solutions to the program in \cref{prog:relax} take the following form:
    \begin{itemize}
        \item If $w^A,w^B \leq 0$ then the value of the program is $0$. A solution is given by $\und{u} = 0$ and $a(\theta) = 0$ for all $\theta$ (and thus $ b(\theta) =0$ for all $\theta$). This solution is unique if $w^A,w^B < 0$. If either $w^A = 0$ or $w^B = 0$ then any $\tilde{\theta} \geq \check{\theta}$ is optimal in \cref{eq:inner_concave_alt}.
        \item Otherwise, if $\neg( w^A, w^B \leq 0)$ then the value of the program is
        \[
        \frac{R(\max\{\check{\theta}, \bar{\theta}^{\#}\})}{\max\{\check{\theta}, \bar{\theta}^{\#}\}} \left( \bar{\mathcal{W}}(\max\{\check{\theta}, \bar{\theta}^{\#}\}) + w^B \und{\theta}\right) + \int\limits_{\min\{\max\{\check{\theta}, \bar{\theta}^{\#}\}, \und{\theta}^*\}}^{\und{\theta}^*} \bar{W}(z) R'(z) dz. 
        \]
        Moreover $\bar{\theta}^{\#} \leq \und{\theta}^*$, any and all solutions take the following form: there exists
        $\tilde{\theta} \in [\max\{\check{\theta}, \und{\theta}^{\#}\}, \max\{\check{\theta}, \bar{\theta}^{\#}\}]$ such that $\und{u} = \und{\theta} R(\tilde{\theta})/\tilde{\theta}$ and 
        \begin{equation*}
        a^*(\theta) = 
        \begin{cases}
        \frac{\und{u}}{\und{\theta}} \quad &\text{on } [\und{\theta},  \min\{\tilde{\theta}, \und{\theta}^*\}] \\
        R'(\theta) &\text{on } (\tilde{\theta}, \und{\theta}^*] \\
        \text{non-increasing and bounded above by } R'(\und{\theta}^*) & \text{on } (\und{\theta}^*,\bar{\theta}^*] \\
        0 &\text{on } (\bar{\theta}^*, \bar{\theta}]
        \end{cases}
    \end{equation*}
    \end{itemize}
\end{proposition}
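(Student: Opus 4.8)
The plan is to dispose of the case $w^A,w^B\le 0$ directly from \eqref{eq:inner} — there the integrand $w^A a(\theta)+w^B b(\theta)$ is pointwise non-positive since $a,b\ge 0$, while $(a,b)\equiv 0$ is feasible, so the value is $0$, uniquely achieved when $w^A,w^B<0$, and when a single weight vanishes the value stays $0$ but the optimal threshold is no longer unique — and then to treat the remaining case $\neg(w^A,w^B\le 0)$ by exploiting the reduction already in place: by \Cref{lem:fixed_u} and the reparametrization performed above, \eqref{prog:relax} is equivalent to maximizing over $\theta\ge\check\theta$ the objective of \eqref{eq:inner_concave_alt}, namely $g(\theta):=\tfrac{R(\theta)}{\theta}\big(\bar{\mathcal W}(\min\{\theta,\und{\theta}^*\})+w^B\und\theta\big)+\int_{\min\{\theta,\und{\theta}^*\}}^{\und{\theta}^*}\bar W(z)R'(z)\,dz$, and then reading off $a^*$ via \Cref{lem:fixed_u} with $\und u=\und\theta R(\tilde\theta)/\tilde\theta$. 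First I would record the one structural fact that drives everything: $R$ being concave, non-decreasing and non-negative, $h(\theta):=R(\theta)-\theta R'(\theta)$ is non-decreasing, so $\theta\mapsto R(\theta)/\theta$ (derivative $-h(\theta)/\theta^{2}$) is quasiconcave; since $R(\theta)/\theta\to 0$, its last crossing $\check\theta$ of the level $R(\und\theta)/\und\theta$ lies weakly past the peak, so $R(\cdot)/\cdot$ is strictly decreasing on $[\check\theta,\infty)$ — which also justifies the reparametrization, since the linear part of $a^*$ stays below $R$ on $[\und\theta,\tilde\theta]$ precisely when $R(\theta)/\theta\ge R(\tilde\theta)/\tilde\theta$ there, i.e.\ when $\tilde\theta\ge\check\theta$.

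Next I would compute, on $[\check\theta,\und{\theta}^*)$, that $g'(\theta)=\tfrac{R(\theta)-\theta R'(\theta)}{\theta^{2}}\,\bar\phi(\theta)$, where $\bar\phi(\theta):=\theta\bar W(\theta)-w^B\und\theta-\bar{\mathcal W}(\theta)=\theta\bar{\mathcal W}'(\theta)-\bar{\mathcal W}(\theta)-w^B\und\theta$ and $\bar W=\bar{\mathcal W}'$. The first factor is non-negative on $[\check\theta,\infty)$ and strictly positive past $\check\theta$ by the previous step, and $\bar\phi$ is non-increasing: for $\theta_1<\theta_2$, $\bar{\mathcal W}(\theta_2)-\bar{\mathcal W}(\theta_1)=\int_{\theta_1}^{\theta_2}\bar W\ge(\theta_2-\theta_1)\bar W(\theta_2)$ because $\bar W$ is non-increasing, whence $\theta_2\bar W(\theta_2)-\bar{\mathcal W}(\theta_2)\le\theta_1\bar W(\theta_1)-\bar{\mathcal W}(\theta_1)$. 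Hence $g$ is quasiconcave on $[\check\theta,\und{\theta}^*]$ — weakly increasing while $\bar\phi\ge 0$, flat where $\bar\phi=0$, strictly decreasing while $\bar\phi<0$ — and on $[\und{\theta}^*,\infty)$ it equals $\tfrac{R(\theta)}{\theta}\big(\bar{\mathcal W}(\und{\theta}^*)+w^B\und\theta\big)$ with $\bar{\mathcal W}(\und{\theta}^*)+w^B\und\theta=\max\mathcal W+w^B\und\theta>0$ (it is $\ge\mathcal W(\bar\theta)+w^B\und\theta=w^A$ if $w^A>0$, and $\ge\mathcal W(\und\theta)+w^B\und\theta=w^B\und\theta$ if $w^B>0$), hence strictly decreasing. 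In particular $\bar\phi(\und{\theta}^*)=-(\bar{\mathcal W}(\und{\theta}^*)+w^B\und\theta)<0$ (using $\bar W(\und{\theta}^*)=0$), which yields $\bar\theta^{\#}\le\und{\theta}^*$ and, $\bar\phi$ being non-increasing, $\und{\theta}^{\#}\le\bar\theta^{\#}$ with $\bar\phi\equiv 0$ on $[\und{\theta}^{\#},\bar\theta^{\#}]$.

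Gluing the two regimes, $g$ is weakly increasing up to $\max\{\check\theta,\und{\theta}^{\#}\}$, constant on $[\max\{\check\theta,\und{\theta}^{\#}\},\max\{\check\theta,\bar\theta^{\#}\}]$, and strictly decreasing afterwards, so this interval is exactly $\argmax_{\theta\ge\check\theta}g$ and the stated value is $g$ at its right endpoint (in the degenerate subcases $\und{\theta}^*\le\check\theta$, or $\max\mathcal W=0$, which by the above forces $w^B>0$, the increasing regime is vacuous and the interval is $\{\check\theta\}$). Translating back through \Cref{lem:fixed_u}, each $\tilde\theta$ in that interval, together with $\und u=\und\theta R(\tilde\theta)/\tilde\theta$ and the corresponding $a^*$, solves \eqref{prog:relax}, and every solution arises this way — the freedom left in $a^*$ on $(\und{\theta}^*,\bar\theta^*]$ being irrelevant since $\bar W\equiv 0$ there — which is the stated description of all solutions; the $W$- and $\bar W$-versions of $\und{\theta}^{\#},\bar\theta^{\#}$ coincide because on any ironing interval $\bar{\mathcal W}$ is the tangent segment agreeing with $\mathcal W$, and with $W$, at its endpoints, so $\theta\bar W(\theta)-w^B\und\theta-\bar{\mathcal W}(\theta)$ is constant there and equals $\theta W(\theta)-w^B\und\theta-\mathcal W(\theta)$ at those endpoints.

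The hard part will be the bookkeeping of that last step: pinning $\argmax g$ to the interval written in terms of $\und{\theta}^{\#}$ and $\bar\theta^{\#}$, verifying that the two pieces of $g$ join monotonically so that no spurious second local maximum appears, and covering the degenerate and non-regular subcases — together with the mild care needed around differentiability ($R$ concave; $\bar{\mathcal W}$ is $C^1$ with non-increasing derivative $\bar W$, which is all that is used in the computation of $g'$).
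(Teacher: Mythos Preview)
Your argument is correct and follows essentially the same route as the paper's: both differentiate the objective $g$ of \eqref{eq:inner_concave_alt} in $\theta$, factor the derivative as a non-negative term (coming from $R(\theta)/\theta$ being non-increasing past $\check\theta$) times the quantity $\theta\bar W(\theta)-\bar{\mathcal W}(\theta)-w^B\und\theta$, and then locate the maximizer by the sign of that second factor. Your write-up is somewhat more explicit than the paper's --- you argue directly that $\bar\phi$ is non-increasing via concavity of $\bar{\mathcal W}$, you verify $\bar{\mathcal W}(\und\theta^*)+w^B\und\theta>0$ from the endpoint values $\mathcal W(\und\theta)=0$ and $\mathcal W(\bar\theta)=w^A-w^B\und\theta$, and you spell out the equivalence between the $W$- and $\bar W$-definitions of $\und\theta^{\#},\bar\theta^{\#}$ --- but the skeleton is identical.
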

\begin{proof}
    Proof in \Cref{proof:concave_relax}.
\end{proof}

The solution is illustrated in \Cref{fig:tilde_c}. Note that if $\check{\theta} \leq \bar{\theta}^{\#}$ then the choice of $\tilde{\theta}$ is otherwise independent of $R$. 

\begin{figure}
    \centering
\begin{tikzpicture}[scale = 0.9, >=stealth]

  \draw[->] (-2,0) -- (9,0);
  \draw[->] (-2,-2) -- (-2,6);

  \coordinate (O)   at (-2,-2);      
  \coordinate (UL)  at (1,0);      
  \coordinate (TAN) at (3,2.5);    
  \coordinate (MAX) at (5.5,3);    
  \coordinate (END) at (8,1);      

  \draw[dashed] (O) -- ($(O)!1.7!(TAN)$) node[right] {$\theta \bar{W}(\bar{\theta}^{\#})$};

  \draw[thick,smooth]
    plot coordinates { (UL) (TAN) (MAX) (END) };

  \draw (8,0) ++(0,0.05) -- ++(0,-0.10) node[below] {$\bar{\theta}$};
  \draw (5.1,0) ++(0,0.05) -- ++(0,-0.10) node[below] {$\und{\theta}^* = \bar{\theta}^*$};
  \draw (3,0) ++(0,0.05) -- ++(0,-0.10) node[below] {$\und{\theta}^{\#}=\bar{\theta}^{\#}$};
  \draw (1,0) ++(0,0.05) -- ++(0,-0.10) node[below] {$\und{\theta}$};
  \draw (-2,0) ++(0,0.05) -- ++(0,-0.10) node[left] {$0$};
  \draw (-2,-2) ++(0,0.05) -- ++(0,-0.10) node[left] {$- w^B \und{\theta}$};

  \fill (TAN) circle(1.5pt);

  \node[above right] at (END) {$\bar{\mathcal{W}}(\theta)$};

\end{tikzpicture}
    \caption{The choice of $\tilde{\theta}$}
    \label{fig:tilde_c}
\end{figure}
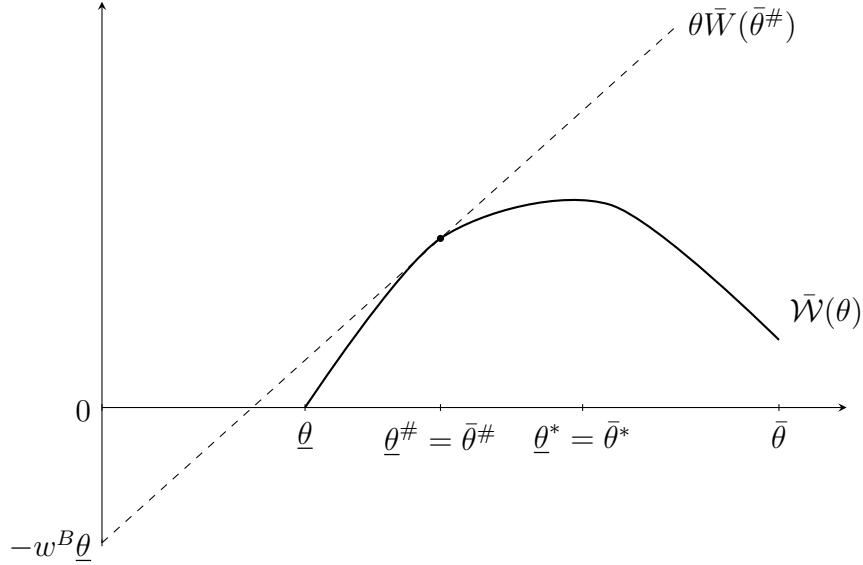

We can now turn to solving the original program in \cref{prog:alpha}. Let $\mathcal{I}$ be the (at most countable) collection of maximal open intervals $(x, y)$ within $(\und{\theta},\und{\theta}^*)$ with the property that $\mathcal{W} < \bar{\mathcal{W}}$ on $(x, y)$. Let $\mathcal{I}^c$ be the complement collection of maximal (relatively) closed intervals $[x, y]$ within $(\und{\theta},\und{\theta}^*)$ with the property that $\mathcal{W} = \bar{\mathcal{W}}$ on $[x, y]$.

\begin{theorem}\label{thm:concave}
    The value of the programs in \cref{prog:alpha} and \cref{prog:relax} are the same (which is also the value of the support function at $(w^A,w^B)$. Moreover, let $\tilde{a},\und{u}$ be a solution to the program in \cref{prog:relax}, as described in \Cref{prop:concave_relax}. Then $a^*,\und{u}$ is a solution to the program in \cref{prog:alpha}, where 
    \begin{equation*}
    a^*(\theta) = 
    \begin{cases}
        \tilde{a}(\theta) &\text{if} \ \ \theta \in [x,y] \text{ for some } [x,y] \in \mathcal{I}^c \\
        \frac{\int_x^y \tilde{a}(z)dz}{y-x} &\text{if} \ \ \theta \in (x,y) \text{ for some } (x,y) \in \mathcal{I}
    \end{cases}
\end{equation*}  
\end{theorem}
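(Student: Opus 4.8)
The plan is to show that "ironing" the objective--- replacing the virtual-weight density $W$ by its concavified version $\bar W$--- costs nothing in value, and that any solution of the ironed program in \cref{prog:relax} can be converted into a solution of the original program in \cref{prog:alpha} by averaging $\tilde a$ over the pooling intervals in $\mathcal{I}$. Since the two programs differ only in the objective ($\bar W$ versus $W$) and share the same feasible set, the value of \cref{prog:relax} is an upper bound for that of \cref{prog:alpha} (as $\bar{\mathcal W}\geq \mathcal W$ pointwise, equivalently $\int a\,\bar W\,d\theta \geq \int a\,W\,d\theta$ for non-increasing $a\geq 0$, by integration by parts and the fact that $a$ non-increasing means $a$ is an ``averaging weight''). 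So it suffices to exhibit a feasible $a^*$ for \cref{prog:alpha} whose original-objective value equals the ironed value of $\tilde a$.

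The first step is feasibility of the proposed $a^*$. On each closed interval $[x,y]\in\mathcal{I}^c$ we keep $a^*=\tilde a$, and on each open pooling interval $(x,y)\in\mathcal{I}$ we set $a^*$ equal to the constant $\frac{1}{y-x}\int_x^y \tilde a$. I would verify: (i) $a^*$ is still non-increasing---this uses that $\tilde a$ is non-increasing, so its interval-average lies between $\tilde a(y^+)$ and $\tilde a(x^-)$, and the averages on consecutive intervals are ordered; one must also handle the boundary pieces above $\und\theta^*$, but there $a^*=\tilde a$ and $\mathcal{I}$ lives in $(\und\theta,\und\theta^*)$ so nothing changes. (ii) The (IR) constraint $\und u + \int_{\und\theta}^\theta a^*(z)\,dz \leq R(\theta)$ still holds: the running integral of $a^*$ agrees with that of $\tilde a$ at every endpoint of an $\mathcal{I}$-interval (the average preserves the integral), and in between $\int_{\und\theta}^\theta a^*$ is a chord of $\int_{\und\theta}^\theta \tilde a$, hence lies above it on $(x,y)$; but $R$ is concave, so $R$ also lies above its own chord connecting $(x,R(x))$ and $(y,R(y))$, and on $[x,y]$ we have $R(x)\ge \und u + \int_{\und\theta}^x\tilde a$ and similarly at $y$, so the chord of $R$ dominates the chord of the running integral--- combining, $\und u + \int_{\und\theta}^\theta a^* \leq \text{chord of running integral} \leq \text{chord of }R \leq R(\theta)$. (iii) (NN) only constrains type $\und\theta$, and $a^*(\und\theta)=\tilde a(\und\theta)$ if $\und\theta$ is not interior to an $\mathcal{I}$-interval; if it is, one notes $\und u = \und\theta\,R(\tilde\theta)/\tilde\theta$ and the averaged value is still $\leq \und u/\und\theta$, so (NN) survives--- I'd spell this boundary case out. (iv) $a^*\geq 0$ is immediate.

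The second step is the value equality: $\int_{\und\theta}^{\bar\theta} a^*(\theta)W(\theta)\,d\theta = \int_{\und\theta}^{\bar\theta}\tilde a(\theta)\bar W(\theta)\,d\theta$. On $\mathcal{I}^c$ we have $\mathcal W=\bar{\mathcal W}$, hence $W=\bar W$ a.e. there, and $a^*=\tilde a$, so these pieces contribute identically. On each $(x,y)\in\mathcal{I}$, integrating by parts, $\int_x^y a^* W\,d\theta = [a^*\mathcal W]_x^y - \int_x^y \mathcal W\,d(a^*)$; since $a^*$ is constant on $(x,y)$ the second term vanishes and, because $\mathcal W(x)=\bar{\mathcal W}(x)$, $\mathcal W(y)=\bar{\mathcal W}(y)$ (endpoints of a pooling interval lie on the concave envelope) and $a^*$ equals the average of $\tilde a$, one gets $a^*(\mathcal W(y)-\mathcal W(x)) = \frac{\int_x^y\tilde a}{y-x}(\bar{\mathcal W}(y)-\bar{\mathcal W}(x))$. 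On the other side, $\int_x^y \tilde a \bar W\,d\theta = \int_x^y\tilde a\,d\bar{\mathcal W} = [\tilde a\bar{\mathcal W}]_x^y - \int_x^y\bar{\mathcal W}\,d\tilde a$; but $\bar{\mathcal W}$ is \emph{affine} on $(x,y)$ (that is the defining property of a maximal interval where $\mathcal W<\bar{\mathcal W}$), so $\int_x^y\bar{\mathcal W}\,d\tilde a$ can be evaluated and, after one more integration by parts / direct computation using affinity, the two expressions match. (This is the standard Myerson ironing accounting; the one subtlety is keeping track of the $w^B\und u$ term, which is identical in both programs since $\und u$ is unchanged.)

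The main obstacle I anticipate is the (IR) feasibility argument on the pooling intervals, i.e. step (i)--(ii) above: one needs the interaction of three convexity/concavity facts at once--- $\tilde a$ non-increasing (so its running integral is concave), $R$ concave, and $\bar{\mathcal W}$ affine on $\mathcal{I}$-intervals--- and must confirm that replacing $\tilde a$ by its average does not push the running integral above $R$ at \emph{interior} points of a pooling interval, nor disturb the constraint at the interval endpoints. The cleanest route is to observe that on $[x,y]$ both $\theta\mapsto \und u+\int_{\und\theta}^\theta \tilde a$ and $\theta\mapsto R(\theta)$ are concave, they satisfy the inequality at $x$ and $y$, and the replacement operation turns the left-hand side into its own chord on $[x,y]$, which by concavity of $R$ is still dominated by $R$. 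I would state this as a short lemma and then apply it interval by interval. Everything else (monotonicity bookkeeping, the $\und\theta$-boundary case, and the value computation) is routine given \Cref{lem:fixed_u} and \Cref{prop:concave_relax}.
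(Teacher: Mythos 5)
Your proposal follows the same route as the paper: show the ironed program in \cref{prog:relax} bounds \cref{prog:alpha} from above (standard Myerson integration by parts), then check that averaging $\tilde{a}$ over the intervals in $\mathcal{I}$ preserves feasibility and, using that $\bar{W}$ is constant and $\mathcal{W}=\bar{\mathcal{W}}$ at the endpoints of each ironing interval, achieves the same value under the un-ironed objective. One assertion in your IR step is backwards: since $\tilde{a}$ is non-increasing, $\theta\mapsto\int_{\und{\theta}}^{\theta}\tilde{a}(z)\,dz$ is \emph{concave}, so its chord (which is exactly the running integral of $a^*$ on a pooling interval) lies \emph{below} it, not above. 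This immediately gives $\int_{\und{\theta}}^{\theta}a^*(z)\,dz\leq\int_{\und{\theta}}^{\theta}\tilde{a}(z)\,dz\leq R(\theta)-\und{u}$, which is exactly how the paper dispatches IR in one line, with no appeal to concavity of $R$; your subsequent chain through the chord of $R$ happens to remain valid because it only uses the endpoint inequalities and concavity of $R$, but it is an unnecessary detour built on the mistaken direction, so you should correct that sentence.
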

\begin{proof}
    Proof in \Cref{proof:concave}.
\end{proof}

\begin{remark}
    There exist $\theta \in [\und{\theta},\bar{\theta}]$ such that $R(\theta)/\theta > R(\und{\theta})/\und{\theta}$ if and only if $R'(\und{\theta})\und{\theta} > R(\und{\theta})$. This cannot hold if $R$ is the indirect utility from a mechanism, since in this case $R(\und{\theta}) - R'(\und{\theta})\und{\theta}$ is just the interim allocation of task $B$ to type $\und{\theta}$, which is non-negative. 
\end{remark}

\Cref{fig:welfare_concave} depicts the welfare under an optimal mechanism (dotted line) and the outside option in the concave case (solid line). In this figure there is an ironing interval $(x,y) \in \mathcal{I}$, in which the agent receives a constant allocation.

\begin{figure}
    \centering
    \begin{tikzpicture}[scale = .9]
        \draw (0,0) -- (0,6.3);
        \draw (0,0) -- (10,0);

        \draw[smooth,domain=1:10,samples=100] plot (\x,{.5 + 1.9*(\x-.5) -0.26*(\x-.5)^(1.85) + 0.006*(\x - 1)^3}) node[right] {$R$};
        
        \draw (0,0) ++(0,0.05) -- ++(0,-0.10) node[below] {$0$};
        \draw (1,0) ++(0,0.05) -- ++(0,-0.10) node[below] {$\und{\theta}$};
        \draw (10,0) ++(0,0.05) -- ++(0,-0.10) node[below] {$\bar{\theta}$};
        \draw (2,0) ++(0,0.05) -- ++(0,-0.10) node[below] {$\bar{\theta}^{\#}$};
        \draw (8,0) ++(0,0.05) -- ++(0,-0.10) node[below] {$\und{\theta}^*$};
        \draw (3,0) ++(0,0.05) -- ++(0,-0.10) node[below] {$x$};
        \draw (5.5,0) ++(0,0.05) -- ++(0,-0.10) node[below] {$y$};

        \draw (0,1) ++(0.05,0) -- ++(-0.10,0) node[left] {$\und{u} = \frac{R(\bar{\theta}^{\#})}{\bar{\theta}^{\#}}\und{\theta}$};

        \draw[very thick, dashed] (1,1) -- (2, {.5 + 1.9*(2-.5) -0.26*(2-.5)^(1.85) + 0.006*(3 - 1)^3});
        \draw[very thick,dashed,  smooth,domain=2:3,samples=100] plot (\x,{.5 + 1.9*(\x-.5) -0.26*(\x-.5)^(1.85) + 0.006*(\x - 1)^3});
        \draw[very thick, dashed] (3, {.5 + 1.9*(3-.5) -0.26*(3-.5)^(1.85) + 0.006*(3 - 1)^3}) -- (5.5,{.5 + 1.9*(5.5-.5) -0.26*(5.5-.5)^(1.85) + 0.006*(5.5 - 1)^3});
        \draw[very thick,dashed,  smooth,domain=5.5:8,samples=100] plot (\x,{.5 + 1.9*(\x-.5) -0.26*(\x-.5)^(1.85) + 0.006*(\x - 1)^3});
        \draw[very thick, dashed] (8, {.5 + 1.9*(8-.5) -0.26*(8-.5)^(1.85) + 0.006*(8 - 1)^3}) -- (10,{.5 + 1.9*(8-.5) -0.26*(8-.5)^(1.85) + 0.006*(8 - 1)^3});
    \end{tikzpicture}
    \caption{Welfare from an optimal mechanism in the concave case}
    \label{fig:welfare_concave}
\end{figure}
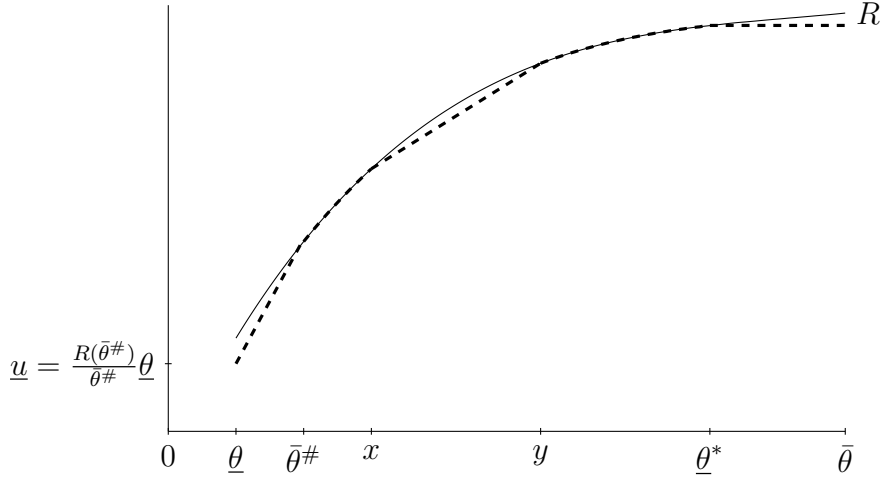

It is also instructive to consider the indirect implementation of the optimal mechanisms. For simplicity, consider the regular case where $\bar{\mathcal{W}} = \mathcal{W}$, and assume that $R(\und{\theta}) \geq R'(\und{\theta})\und{\theta}$. There is an optimal mechanism characterized by thresholds $\tilde{\theta} \in [\und{\theta}^{\#}, \bar{\theta}^{\#}]$ and $\theta^* \in [\und{\theta}^*, \bar{\theta}^*]$ such that 
\begin{equation*}
     a^*(\theta) = 
        \begin{cases}
        \frac{\und{u}}{\und{\theta}} \quad &\text{on } [\und{\theta},  \min\{\tilde{\theta}, \theta^*\}] \\
        R'(\theta) &\text{on } (\tilde{\theta}, \und{\theta}^*] \\
        0 &\text{on } (\theta^*, \bar{\theta}]
        \end{cases}
\end{equation*}
(and the optimal mechanism is unique if $\und{\theta}^{\#} =   \bar{\theta}^{\#}$ and $\und{\theta}^* = \bar{\theta}^*$). This mechanism can be implemented in the following way: We can view the reservation value $R$ as the indirect utility function of an agent who chooses from a ``reservation menu'' $\{ (R'(\theta), R(\theta) - R'(\theta)\theta )\ : \ \theta \in \Theta \}$, where $r(\theta) := (R'(\theta), R(\theta) - R'(\theta)\theta )$ is the optimal choice for type $\theta$ from this menu. The reservation menu is depicted by the blue curve in \Cref{fig:concave_indirect}. The optimal mechanism defined above is implemented by allowing the agent to choose any allocation from the reservation menu. The agent then has the option to trade for task $A$ at a rate of $\tilde{\theta}$ fewer units of task $B$ per additional unit of $A$. Additionally, the agent has the option to trade for task $B$ at a rate of $\theta^*$ additional units of task $B$ per unit of $A$ given up. The solid and dashed black lines  in \Cref{fig:concave_indirect} depict the effective ``budget set'' from which the agent can choose their allocation. 

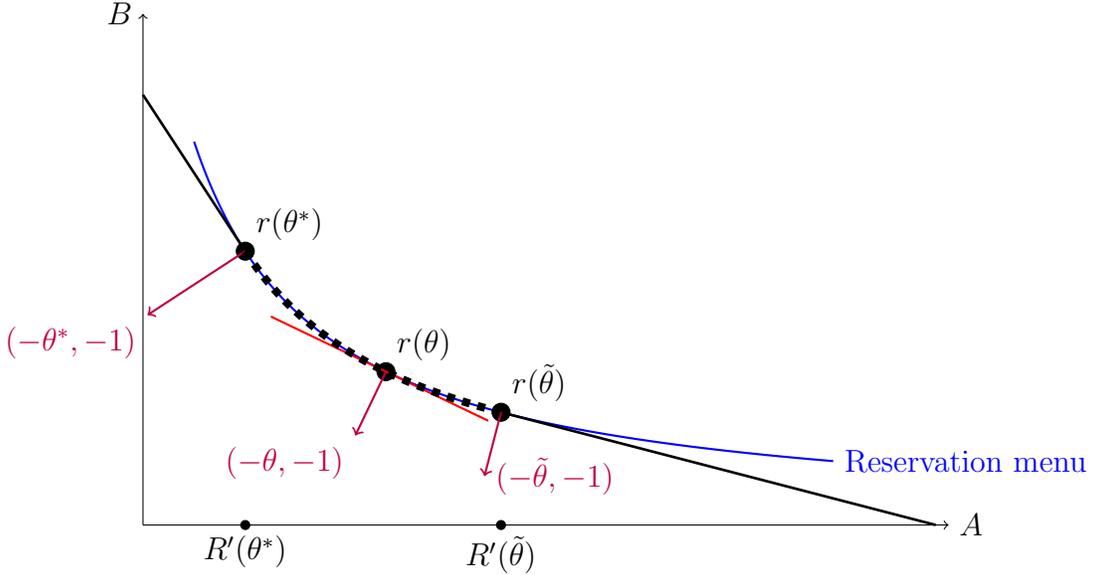
\begin{figure}[ht]
  \centering
  \begin{tikzpicture}[scale=1.7]
    \draw[->] (0,0) -- (6.3,0) node[right] {$A$};
    \draw[->] (0,0) -- (0,4) node[left] {$B$};

    \draw[blue,thick,domain=1:6,samples=200,smooth]
      plot (\x - .6, {3/\x}) node[right] {Reservation menu};

    \def\t{2.5}
    \coordinate (sq) at ({\t - 0.6}, {3/\t});
    \filldraw[black] (sq) circle (2pt) node[above right] {$r(\theta)$};

    \pgfmathsetmacro{\m}{-3/(\t*\t)}
    \draw[red,thick,domain=1:2.7,samples=2]
      plot (\x, {3/\t + \m*(\x - (\t - .6))});

    \pgfmathsetmacro{\nx}{\m}
    \pgfmathsetmacro{\ny}{-1}
    \draw[->,purple,thick] (sq) -- ++({0.5*\nx}, {0.5*\ny})
      node[below left] {$(-\theta,-1)$};

    \def\t{3.4}
    \coordinate (sq) at ({\t - 0.6}, {3/\t});
    \filldraw[black] (sq) circle (2pt) node[above right] {$r(\tilde{\theta})$};
    
    \pgfmathsetmacro{\m}{-3/(\t*\t)}
    \draw[black,line width = 1pt, domain={\t-0.6}:6.2,samples=2]
      plot (\x, {3/\t + \m*(\x - (\t - .6))});

    \pgfmathsetmacro{\nx}{\m}
    \pgfmathsetmacro{\ny}{-1}
    \draw[->,purple,thick] (sq) -- ++({0.5*\nx}, {0.5*\ny})
      node[right] {$(-\tilde{\theta},-1)$};

    \def\tt{1.4}
    \coordinate (sq) at ({\tt - 0.6}, {3/\tt});
    \filldraw[black] (sq) circle (2pt) node[above right] {$r(\theta^*)$};
    
    \pgfmathsetmacro{\m}{-3/(\tt*\tt)}
    \draw[black,line width = 1pt, domain=0:{\tt-0.6},samples=2]
      plot (\x, {3/\tt + \m*(\x - (\tt - .6))});

    \pgfmathsetmacro{\nx}{\m}
    \pgfmathsetmacro{\ny}{-1}
    \draw[->,purple,thick] (sq) -- ++({0.5*\nx}, {0.5*\ny})
      node[below left] {$(-\theta^*,-1)$};

    \draw[black, line width = 3pt, dashed, domain=\tt:\t,samples=200,smooth]
      plot (\x - .6, {3/\x});

    \filldraw[black] (\t - 0.6, 0) circle (1pt) node[below] {$R'(\tilde{\theta})$};
     \filldraw[black] (\tt - 0.6, 0) circle (1pt) node[below] {$R'(\theta^*)$};

  \end{tikzpicture}
  \caption{Indirect implementation in the concave case} \label{fig:concave_indirect}
\end{figure}

\subsubsection{Specialization and asymmetric information}\label{sec:compare_to_complete}

\Cref{thm:concave} tells us the form that optimal mechanisms will take. Already, this tells us something about the role of asymmetric information in task-allocation problems. Observe that with symmetric information, each worker generically handles only one type of task: if $\theta$ is observable, then in the inner program the firm solves
\[
\max_{a(\theta),b(\theta)} w^A a(\theta) + w^B b(\theta) \quad s.t. \quad \theta a(\theta) + b(\theta) \leq R(\theta). 
\]
A solution is the set $(a(\theta), b(\theta)) = (R(\theta)/\theta, 0)$ if $w^A/\theta \geq \max\{0, w^B\}$, $(a(\theta), b(\theta)) = (0, R(\theta))$ if $w^B \geq \max\{0, w^A/\theta\}$, and $(a(\theta), b(\theta)) = (0, 0)$ if otherwise, and this solution is unique except for knife-edge cases. Thus workers completely specialize, regardless of the degree of automation. 

In contrast, in the solution to the inner program described above, some workers with intermediate types handle both types of tasks. This is a direct result of the incentive compatibility constraints: it may be optimal for the firm to keep some intermediate-type workers at their reservation value in order to give more extreme allocations to workers with a strong preference for one or the other task. 

These observations only concern the comparison to complete information, and do not tell us how the firm allocates tasks across worker groups, or between labor and machines. To answer these questions, we must turn to the firm's outer program.

\subsubsection{A remark regarding extreme points of IC and IR mechanisms}

We have solved the program in \cref{eq:inner}, so as to characterize the support function of $\mathcal{F}$, the set of incentive-feasible aggregate allocations. However beyond simply characterizing the support function, we have identified the underlying mechanisms used to realize these aggregate task allocations. Suppose instead that we wanted to characterize the set of IC and IR mechanisms, not just aggregate allocations. To define the support function of this set, we need to maximize over all linear functionals of the mechanism, i.e., we solve
\begin{align}
  \max_{a,b \geq 0}  &\int w^A(\theta) a(\theta) + w^B(\theta) b(\theta) dF(\theta)  \\
         s.t. \quad & \theta a(\theta) + b(\theta) \leq \theta a(\theta') + b(\theta') \quad \quad \forall \ \theta,\theta' \in [\und{\theta},\bar{\theta}] \tag{IC} \\
        &\theta a(\theta) + b(\theta) \leq R(\theta) \quad \quad \forall \ \theta \in [\und{\theta},\bar{\theta}] \tag{IR}
\end{align}
for any measurable functions $w^A,w^B$. Using the envelope condition, we can write the objective and constraints as linear functions of $a$ and $\und{u}$. From there, the solution technique that we used to solve the program in \cref{prog:relax0} applies, replacing $W(\theta)$ with an arbitrary measurable function and $w^B$ with an arbitrary constant. In other words, our previous results have also characterized the extreme points and support function of the set of IC and IR mechanisms.

\section{Automation}\label{sec:automation}

Having characterized the firm's optimal mechanism, we return now to the question of automation. As discussed above, an extensive literature has offered predictions regarding the changes in workers' task allocations in response to technological changes. Our aim is to begin to understand the implications of workers' private information for these predictions.

In the current setting, an increase in automation means an increase in the allocation to the machine, $(a_m,b_m)$. The remaining tasks must be completed using labor. By \Cref{thm:outer}, the \textit{labor-cost}, $L$, as a function of the units allocated labor, $(\ell^A,\ell^B)$, is given by
\[
    L(\ell^A,\ell^B) := \sup_{\lambda^A,\lambda^B}  \left\{ \ell^A \lambda^A  + \ell^B \lambda^B  - \sum_{j\in \mathcal{J}} S_j\left(  \lambda^A - \pi_j^A  \ , \ \lambda^B - \pi_j^B \right) \right\}. 
\]
The labor-market impacts of automation are determined by how the labor cost and the allocations of each group respond to changes in $\ell^A$ and $\ell^B$. We first characterize these responses, which can be viewed as a partial-equilibrium characterization of the labor-market effects of automation. We then consider the firm's optimal-automation problem. 

\subsection{Notation and terminology}

We call a reduction in $\ell^k$ an \textit{increase in automation of task $k$}, and refer in general to a reduction in the pair $(\ell^A, \ell^B)$ as an increase in automation. Let $(\lambda_*^A(\ell),\lambda_*^B(\ell))$ be the solutions to the dual outer program given aggregate quantities $\ell$, and let $(n_j^A(\ell),n_j^B(\ell))_{j=1}^J$ be the solutions to the primal, where $(n_j^A(\ell),n_j^B(\ell))$ is the aggregate output of group $j$.

Assume throughout this section that $J \geq 3$ and groups have distinct performance, i.e., $\pi_j^k \neq \pi_{j'}^k$ for all $j,j'$ and $k \in \{A,B\}$. This assumption rules out some less interesting cases, but is not essential.\footnote{For example, if performance is identical then the solution to the outer problem is simply to set $(\lambda^A,\lambda^B) = (\pi^A,\pi^B)$ and select any profile of incentive-feasible allocations which clears the market. Thus the labor-cost function is linear (over the set of feasible $\ell$) and the results of this section are uninteresting.} To avoid cumbersome terminology, we also assume throughout this section that solutions are unique for all $\ell^A,\ell^B > 0$.\footnote{With distinct performance across groups, a sufficient condition for uniqueness is that the frontier of $\mathcal{F}_j$ be strictly convex for all $j$. Alternatively, the main results are preserved if we drop the uniqueness assumption and suitably restate the comparative statics in terms of the strong set order.} We make use of the following definitions.
\begin{itemize}
    \item A change in $\ell$ \textit{increases the output} of group $j$ in both tasks if both $n^A_j$ and $n^B_j$ weakly increase.
    \item An increase in automation of task $k$ always reduces $\lambda^k$ (the objective in the outer program has increasing differences in $\ell^k$ and $\lambda^k$). We say that an increase in automation of task $k$ has \textit{negative price spillovers} if, moreover, it reduces $\lambda_*^{-k}$, and has positive price spillovers otherwise. Under negative price spillovers, automating one task reduces the marginal cost of completing the other task with labor. The terminology is motivated by the interpretation of $(\lambda_*^A,\lambda_*^B)$ as the shadow prices on the market-clearing constraints.\footnote{Moreover, in the competitive market introduced below, $\lambda_*^k$ is precisely the equilibrium price at which completed units of task $k$ are sold, so negative price spillovers mean a reduction in the sales price of the other task.}
    \item Similarly, we call $\left(\lambda_*^A(\ell) - \pi_j^A\right) n_j^A(\ell) + \left(\lambda_*^B(\ell) - \pi_j^B\right) n_j^B(\ell)$ the \textit{surplus generated by group} $j$.
\end{itemize}

\subsection{The labor-market effects of automation} 

As the supremum over linear functions, the labor-cost function $L$ is convex in $(\ell^A,\ell^B)$. Beyond this, the properties of $L$, and thus the effects of automation, depend crucially on the size of the informational asymmetries, determined by the degree of preference heterogeneity among workers. Say that $F_j$ has \textit{low heterogeneity} if 
\begin{equation}\label{eq:dispersion}
1 - \frac{\und{\theta}_j}{\theta} \leq F_j(\theta) \leq \frac{\theta}{\bar{\theta}_j} \quad \forall \ \theta \in [\und{\theta}_j, \bar{\theta}_j]
\end{equation}
and \textit{high heterogeneity} otherwise. We say that group $j$ has \textit{low (high) information asymmetries} if $F_j$ has low (high) heterogeneity.  If we treat $\theta/\bar{\theta}_j$ and $1 - \und{\theta}_j/\theta$ as CDFs on $[\und{\theta}_j, \bar{\theta}_j]$, where the later distribution has a point mass of $1 - \und{\theta}_j/\bar{\theta}_j$ at $\bar{\theta}_j$ and the former has a point mass of $\und{\theta}_j/\bar{\theta}_j$ at $\und{\theta}$, then (\ref{eq:dispersion}) says that $F_j$ first-order stochastically dominates the former distribution, and is in turn dominated by the latter. In particular, $\frac{\bar{\theta}_j^2 + \und{\theta}_j^2}{2\bar{\theta}_j} \leq \E_F[\theta] \leq \und{\theta}_j\left(1 + \ln\left(\frac{\bar{\theta}_j}{\und{\theta}_j}\right)\right)$. Intuitively, condition (\ref{eq:dispersion}) constrains the ratio $\theta/\und{\theta}_j$. For example, if $\theta$ is uniformly distributed on $[\und{\theta},\bar{\theta}]$ then it has low heterogeneity if and only if $\bar{\theta}/\und{\theta} \leq 2$. In general, comparing the upper and lower bounds in \cref{eq:dispersion} we can see that a necessary and sufficient condition for there to exist some distribution that has support $[\und{\theta},\bar{\theta}]$ and low heterogeneity is $\bar{\theta}/\und{\theta} \leq 4$. 

\begin{theorem}\label{thm:no_disposal}
    If preference heterogeneity is small in every group then any increase in automation
    \begin{enumerate}[i.]
        \item has negative price spillovers,
        \item reduces the surplus generated by every group of workers, and 
        \item does not increase the output of any group in both tasks.
    \end{enumerate}
    Conversely, suppose that preference heterogeneity is not small for some groups. Assume, moreover, that all groups have the same incentive-feasible set, that the frontier of this set is strictly convex, and that $R_j'(\und{\theta}_j) > 0$ and $R_j(\bar{\theta}) > R_j(\und{\theta}_j)$ for all $j$.\footnote{These are technical assumptions which simplify the construction of the examples for this direction of the result, but are not essential. All groups have the same incentive-feasible set if they have the same reservation utilities and preference distributions. Note that we do not impose that $R_j$ be concave.} Then there is an increase in automation which 
    \begin{enumerate}[i.]
    \setcounter{enumi}{3}
        \item has positive price spillovers,
        \item increases the surplus generated by every group of workers, and
        \item strictly increases the output of some group in both tasks.  
    \end{enumerate}
\end{theorem}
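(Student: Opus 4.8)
The plan is to derive both halves from \Cref{prop:sub_dispersion}, read as the assertion that the labor-cost $L$ is supermodular in $(\ell^A,\ell^B)$ exactly when every $F_j$ has low heterogeneity. The way I would see the bridge between \eqref{eq:dispersion} and supermodularity is this: feed the explicit inner-program solution of \Cref{thm:concave} into $S_j(w)=w\cdot N_j^*(w)$ to compute the Hessian of the support function; since $\nabla S_j$ is homogeneous of degree zero (the exposed point depends only on the direction of $w$), $\nabla^2 S_j$ is rank one, with off-diagonal entry a negative multiple of $w^A w^B$, and the content of \eqref{eq:dispersion} is precisely that $\nabla^2 S_j$ vanishes on the directions with $w^A w^B<0$, i.e.\ the frontier of $\mathcal{F}_j$ has no upward-sloping segment. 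Then, because $L=\Phi^*$ with $\Phi=\sum_j S_j(\cdot-\pi_j)$, the $2\times2$ Hessian-inversion identity makes the off-diagonal of $\nabla^2 L$ a positive multiple of $-\partial^2_{AB}\Phi=-\sum_j\partial^2_{AB}S_j$, so $L$ is supermodular iff every $\partial^2_{AB}S_j\le 0$ iff every frontier is downward-sloping iff heterogeneity is low everywhere.

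Given this, the low-heterogeneity half is largely a corollary. Since $L$ is convex, its gradient is the shadow-price map $\ell\mapsto(\lambda_*^A(\ell),\lambda_*^B(\ell))$ (differentiability at the relevant points is the uniqueness assumption); supermodularity makes this map componentwise nondecreasing, so automating a task lowers not only its own shadow price (which is automatic) but the other one too, giving (i). The surplus of group $j$ equals $S_j(\lambda_*^A-\pi_j^A,\lambda_*^B-\pi_j^B)$ because the group's allocation maximizes the support-function program, and $S_j$ is monotone (support function of a subset of $\R_+^2$ containing the origin), so surplus falls when both prices fall, giving (ii). For (iii), the group allocations are selections from $\partial S_j(\lambda_*-\pi_j)$, i.e.\ points of the (downward-sloping) frontier of $\mathcal{F}_j$ or of one of its two axis edges; on a downward-sloping frontier any two points are anti-comonotone, so moving $w_j=\lambda_*-\pi_j$ down in both coordinates — all automation can do, by (i) — cannot raise both coordinates of the exposed point, and a remedial group always has a zero coordinate, so a short case check (using distinct performance, hence at most two remedial groups, to handle the transitions on and off the axis edges) finishes.

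For the converse half the strategy is to exhibit an explicit configuration. The failure of supermodularity means the common strictly convex frontier of $\mathcal{F}$ has an upward-sloping segment, on which the exposed point moves comonotonically and the outward normal $w$ satisfies $w^A w^B<0$; the assumptions $R_j'(\und{\theta}_j)>0$ and $R_j(\bar{\theta})>R_j(\und{\theta}_j)$ keep this segment and the adjoining boundary pieces non-degenerate. I would choose aggregate quantities $\ell$ whose shadow prices $\lambda_*$ place every weight $w_j=\lambda_*-\pi_j$ along this segment, so $\partial^2_{AB}S_j(w_j)>0$ for all $j$ and hence $L_{AB}<0$ at this $\ell$; automating the single task whose reduction raises the other shadow price (dictated by the sign of $L_{AB}$) then gives (iv) immediately. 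For (v), the change in group $j$'s surplus is $N_j^*(w_j)\cdot d\lambda_*$, and the strict inequality $S_j(w_j)>0$ (available because $\mathcal{F}$ meets the relevant axis in a nondegenerate segment) bounds the task ratio $N_j^A/N_j^B$ against the ratio of Hessian entries of $\nabla^2 L$ so that $dS_j>0$ for every $j$. For (vi), the exposed-point map is a homeomorphism of the strictly convex segment, and a sign computation of how $w_j$ moves shows the group whose weight sits nearest one endpoint of the segment slides strictly along it, so both of its outputs strictly increase. The main obstacle is precisely making (iv)–(vi) compatible at a single $\ell$: (v) pins all groups' task ratios on one side of a Hessian ratio while (vi) needs one group poised to slide up the segment, and these pull against each other, so the shadow prices (equivalently $\ell$) must be tuned carefully — this is what the technical assumptions are for — while upstream the real engine remains \Cref{prop:sub_dispersion}, whose proof does the work of converting \eqref{eq:dispersion} into the sign of $\partial^2_{AB}S_j$ via \Cref{thm:concave}.
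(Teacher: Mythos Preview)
Your approach to (i)–(iii) is a genuinely different route from the paper's: you work differentially, via Legendre duality $L=\Phi^*$ and the $2\times 2$ Hessian-inversion identity, whereas the paper uses lattice-theoretic monotone comparative statics directly — submodularity of each $S_j$ (from \Cref{prop:sub_dispersion}) makes the dual objective supermodular in $(\lambda^A,\lambda^B)$, and Milgrom--Shannon then gives both shadow prices monotone in $\ell$. The paper's route is more robust (no second-order smoothness needed), and note that your ``iff'' chain is loose without the symmetry assumption: $L$ supermodular only needs $\sum_j\partial^2_{AB}S_j\le 0$, not each term separately. Parts (ii) and (iii) are essentially the same in both treatments.

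The substantive gap is (v). Your bound $S_j(w_j)>0\Rightarrow N_j^A/N_j^B< w_j^B/|w_j^A|$ is correct, but what is needed is $N_j^A/N_j^B<-L_{AB}/L_{AA}=\Phi_{AB}/\Phi_{BB}$, and using the rank-one structure of $\nabla^2 S_j$ this ratio equals a $c_{j'}(w_{j'}^A)^2$-weighted average of the ratios $w_{j'}^B/|w_{j'}^A|$. For the group with the \emph{largest} such ratio your bound only gives $N_j^A/N_j^B<\max_{j'}w_{j'}^B/|w_{j'}^A|$, which does not imply the required inequality; the infinitesimal change simply does not deliver $dS_j>0$ for every group. (Your argument for (vi), by contrast, does go through: the group with the \emph{smallest} ratio satisfies $w_j^B/|w_j^A|<\Phi_{AB}/\Phi_{BB}$ and therefore slides up the segment.) The paper resolves (v) with a different, non-local idea: take $\bar{x}$ on the upward-sloping frontier and let the barycenter $\ell/J$ approach it; upper-hemicontinuity of the market-clearing correspondence forces all groups' allocations to converge to $\bar{x}$, and since the $\pi_j$ are distinct this can only happen if $\lambda_*^A\to-\infty$ and $\lambda_*^B\to+\infty$. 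Hence every $S_j(\lambda_*-\pi_j)\to+\infty$, so a \emph{discrete} increase in automation toward $\bar{x}$ raises every group's surplus. This limit/divergence argument is the missing engine for (v), and no purely local analysis at a fixed $\ell$ can substitute for it.
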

\begin{proof}
    Proof in \Cref{proof:no_disposal}.
\end{proof}

\Cref{thm:no_disposal} reveals that information asymmetries have important implications for the effects of automation. Conclusions \textit{i.-iii.} are broadly consistent with the findings of the existing literature studying automation without information asymmetries.\footnote{For example, while the models are not directly comparable, in \cite{acemoglu2018race} and \cite{acemoglu2024automation}, under inelastic demand automation always displaces workers from the automated tasks and reduces the prices of completed tasks.} However, \Cref{thm:no_disposal} shows that these predictions are true in general only if information asymmetry is low, and can be reversed otherwise.

The driving force behind \Cref{thm:no_disposal} is that with low information asymmetries the firm treats the tasks assigned to a group of workers as substitutes: if the surplus generated by group $j$ on task $k$, $\lambda^k - \pi_j^k$, increases relative to the surplus on task $-k$, then the firm assigned the group more of task $k$ and less of task $-k$. In contrast, when asymmetries high the tasks are sometimes complements. 

Formally, call the incentive-feasible set $\mathcal{F}_j$ downward closed if $(x',y') \in \mathcal{F}_j$ and $0 \leq (x,y) \leq (x',y')$ implies $(x,y) \in \mathcal{F}_j$. In other words, the frontier of $\mathcal{F}_j$ is downward sloping: in order to allocate more of one task to group $j$, the designer must allocate less of the other task. The key step in the proof of \Cref{thm:no_disposal} is the following result. 

\begin{proposition}\label{prop:sub_dispersion}
    Assume $R_j'(\und{\theta}_j) > 0$ and $R_j(\bar{\theta}) > R_j(\und{\theta}_j)$.\footnote{It is not difficult to obtain a similar characterization without any restrictions on $R_j$, but the conditions of $F_j$ are more cumbersome to state.} Then the following are equivalent.
    \begin{enumerate}[i.]
        \item $\mathcal{F}_j$ is downward closed.
        \item $S_j$ is submodular.
        \item $F_j$ has low heterogeneity.\footnote{The equivalence between $\mathcal{F}_j$ downward closed and $S_j$ submodular is straightforward. The interesting part of \Cref{prop:sub_dispersion} is that low heterogeneity characterizes these properties.} 
    \end{enumerate}
\end{proposition}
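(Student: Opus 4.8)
The plan is to prove all three statements equivalent by establishing (i) $\Leftrightarrow$ (ii) and (i) $\Leftrightarrow$ (iii), with essentially all the content in the second. For (i) $\Leftrightarrow$ (ii): the null mechanism is always incentive feasible, so $\mathcal{F}_j$ is a compact convex subset of $\R^2_+$ containing the origin, and for such sets it is standard that downward-closedness is equivalent to $S_j(w^A,w^B) = S_j\big((w^A)^+,(w^B)^+\big)$, which in turn is equivalent to submodularity of $S_j$. (On the positive orthant $\partial_{w^A} S_j = N^A$ is non-increasing in $w^B$ because $(N^A,N^B)$ traverses the non-increasing concave upper boundary of $\mathcal{F}_j$; the truncation $w\mapsto w^+$ preserves submodularity; conversely, if $\mathcal{F}_j$ is not downward closed its frontier has a strictly increasing arc, and on the corresponding range of normals $(w^A,w^B)$ with $w^A<0<w^B$ the optimal $n^B$ strictly increases in $w^A$, so $S_j$ has strictly increasing differences there.) I would relegate this to a lemma and not dwell on it, in line with the paper's footnote.

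The core is (i) $\Leftrightarrow$ (iii). I would first record the elementary geometric fact that a compact convex $\mathcal{F}\subseteq\R^2_+$ with $0\in\mathcal{F}$ is downward closed \emph{iff} $(0,y_{\max})\in\mathcal{F}$ and $(x_{\max},0)\in\mathcal{F}$, where $y_{\max}:=\max_{\mathcal{F}} n^B$, $x_{\max}:=\max_{\mathcal{F}} n^A$. (``Only if'' is immediate; for ``if'', given $(x,y)\in\mathcal{F}$, convexity with $0,(x_{\max},0),(0,y_{\max}),(x,y)$ puts $(x,0)$ and $(0,y)$ in $\mathcal{F}$, and then $[0,x]\times[0,y]=\mathrm{conv}\{0,(x,0),(0,y)\}\cup\mathrm{conv}\{(x,y),(x,0),(0,y)\}\subseteq\mathcal{F}$.) A task-swap symmetry reduces the two membership conditions to one: writing $\phi=1/\theta$, $\tilde a=b$, $\tilde b=a$, the group-$j$ problem becomes another instance of the inner program with type distribution $\tilde F_j(\phi)=1-F_j(1/\phi)$ on $[1/\bar\theta_j,1/\und\theta_j]$ and reservation $\tilde R_j(\phi)=\phi R_j(1/\phi)$ (still concave after passing to its non-decreasing envelope), $\mathcal{F}_j$ is reflected across the diagonal, and by direct computation the lower inequality in \eqref{eq:dispersion} for $F_j$ is exactly the upper inequality in \eqref{eq:dispersion} for $\tilde F_j$. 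Hence it suffices to prove: $(0,y_{\max})\in\mathcal{F}_j$ iff the lower bound in \eqref{eq:dispersion} holds.

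For this, restricting to mechanisms with $\E_j[a]=0$ forces $a\equiv 0$ (positive density), hence $b$ constant and, by IR, $\leq R_j(\und\theta_j)$; this value is attained, so $\max\{n^B:(0,n^B)\in\mathcal{F}_j\}=R_j(\und\theta_j)$, and therefore $(0,y_{\max})\in\mathcal{F}_j\iff y_{\max}=S_j(0,1)=R_j(\und\theta_j)$. I then evaluate $S_j(0,1)$ via \Cref{prop:concave_relax} (after reducing $R_j$ to concave non-decreasing, which by \Cref{lem:R_nondecreasing} and the concave-envelope argument preceding \Cref{thm:concave} leaves $\mathcal{F}_j$ unchanged). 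Here $\mathcal{W}(\theta)=\theta(1-F_j(\theta))-\und\theta_j$, so $\mathcal{W}\leq 0$ on $[\und\theta_j,\bar\theta_j]$ exactly when $F_j(\theta)\geq 1-\und\theta_j/\theta$ everywhere, i.e.\ exactly when the lower bound in \eqref{eq:dispersion} holds. If $\mathcal{W}\leq 0$, then $\und\theta_j^*=\und\theta_j$, $\bar W\leq 0$, the optimum in \eqref{prog:relax} is $a\equiv 0$ with $\und u=R_j(\und\theta_j)$, and $S_j(0,1)=R_j(\und\theta_j)$. If instead $\mathcal{W}(\theta_0)>0$ for some $\theta_0$, then $\bar W>0$ on $[\und\theta_j,\und\theta_j^*)$, and — using $R_j'(\und\theta_j)>0$ so that a small constant allocation $a\equiv\epsilon$ on $[\und\theta_j,\und\theta_j^*)$ with $\und u=R_j(\und\theta_j)$ remains IC/IR/NN — this perturbation of the null mechanism raises the objective by $\epsilon\,\bar{\mathcal{W}}(\und\theta_j^*)>0$, so $S_j(0,1)>R_j(\und\theta_j)$ and $(0,y_{\max})\notin\mathcal{F}_j$. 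Combining with the symmetry and the geometric lemma: $\mathcal{F}_j$ is downward closed iff both bounds in \eqref{eq:dispersion} hold, i.e.\ iff $F_j$ has low heterogeneity.

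The main obstacle I anticipate is the feasibility check in the step $S_j(0,1)>R_j(\und\theta_j)$ once $R_j$ is merely concave non-decreasing (possibly flat on part of $[\und\theta_j,\bar\theta_j]$): the IR constraint $\und u+\int_{\und\theta_j}^\theta a\leq R_j(\theta)$ must survive uniformly in $\theta$, which reduces (using concavity of $R_j$, so secant slopes are decreasing) to $\epsilon<\big(R_j(\und\theta_j^*)-R_j(\und\theta_j)\big)/(\und\theta_j^*-\und\theta_j)$ — positive precisely because $R_j'(\und\theta_j)>0$ forces $R_j(\und\theta_j^*)>R_j(\und\theta_j)$; the hypothesis $R_j'(\und\theta_j)>0$ is essential here (for $R_j$ constant the equivalence genuinely fails), which is why the $F_j$-conditions must be modified for general $R_j$. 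The only other routine points are confirming that the concavification reductions and the task-swap transformation leave $\mathcal{F}_j$ — hence each of (i)–(iii) — intact, and that the two-sided inequalities in \eqref{eq:dispersion} are indeed the ones produced by $\mathcal{W}\leq 0$ for $F_j$ and for $\tilde F_j$.
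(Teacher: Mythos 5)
Your overall architecture matches the paper's: both proofs reduce downward-closedness of the convex set $\mathcal{F}_j$ to membership of the two ``corner'' points (your $(0,y_{\max})$ and $(x_{\max},0)$ are exactly the paper's conditions $0\in N^A(0,1)$ and $0\in N^B(1,0)$), and then identify each corner with one of the two inequalities in \eqref{eq:dispersion}. Your treatment of the first corner is correct and in fact more explicit than the paper's: computing $\max\{n^B:(0,n^B)\in\mathcal{F}_j\}=R_j(\und{\theta}_j)$, showing $S_j(0,1)=R_j(\und{\theta}_j)$ when $\mathcal{W}\leq 0$, and exhibiting the $\epsilon$-perturbation when $\mathcal{W}(\theta_0)>0$ is a clean argument, and your feasibility check via the decreasing secant slopes of the concave $R_j$ is exactly the right point to worry about.

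The gap is in the task-swap step used for the second corner. The swapped problem has reservation function $\tilde R_j(\phi)=\phi R_j(1/\phi)$, and your entire lower-bound argument is run under the standing hypotheses ``$R'>0$ at the bottom type'' and ``$R$ non-constant,'' which you yourself identify as essential. But these hypotheses do \emph{not} transfer: $\tilde R_j'$ at the bottom type $1/\bar{\theta}_j$ equals $R_j(\bar{\theta}_j)-\bar{\theta}_j R_j'(\bar{\theta}_j)$, and $\tilde R_j$ is constant precisely when $R_j(\theta)=c\theta$ --- a reservation function that satisfies both stated hypotheses of the proposition. In that case your own caveat (``for $R_j$ constant the equivalence genuinely fails'') applies to the swapped problem, so the equivalence ``$(x_{\max},0)\in\mathcal{F}_j$ iff $F_j(\theta)\leq\theta/\bar{\theta}_j$'' is not established. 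Indeed, for $R_j(\theta)=\theta$ the constant mechanism $(a,b)\equiv(1,0)$ is IC and IR and attains $x_{\max}=1$, so $(x_{\max},0)\in\mathcal{F}_j$ for \emph{every} $F_j$, independently of the upper bound in \eqref{eq:dispersion}. What the second corner actually requires is a condition of the form $\check{\theta}<\bar{\theta}_j$ (equivalently, $R_j(\theta)/\theta$ eventually falls below $R_j(\und{\theta}_j)/\und{\theta}_j$), which is exactly the transferred hypothesis your swap would need and which does not follow from $R_j'(\und{\theta}_j)>0$ and $R_j(\bar{\theta}_j)>R_j(\und{\theta}_j)$. The paper avoids the swap and computes the second corner directly via \Cref{prop:concave_relax}, showing $0\in N^B(1,0)$ iff $\bar{\theta}^{\#}=\bar{\theta}_j$ iff $F_j(\theta)\leq\theta/\bar{\theta}_j$ --- though that computation, too, quietly presumes the constraint $\tilde\theta\geq\check{\theta}$ is slack. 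To close your argument you should either add the missing regularity condition on $\tilde R_j$ (and flag that the proposition needs it), or drop the swap and handle the second corner by a direct computation parallel to your first.
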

\begin{proof}
    Proof in \Cref{proof:sub_dispersion}
\end{proof}

\Cref{prop:sub_dispersion} implies that the labor cost function is supermodular (meaning tasks are substitutes in the cost-minimization problem) when heterogeneity is low for all groups, and is used for the converse of \Cref{thm:no_disposal} to show that with high heterogeneity there are regions over which the labor-cost function is submodular. The intuition behind \Cref{prop:sub_dispersion} is that when the information rents that the firm must pay to screen workers are sufficiently high, it might be optimal in the inner program to allocate some units of task $k$ to a worker even if the weight $w^k$ is negative, provided $w^{-k} > 0$. This is because doing so helps to incentivize workers to take on task $-k$. Then even if $w^{-k}$ decreases, $w^{k}$ could increase such that the \textit{relative value} of task $-k$, i.e., $w^{-k}/|w^k|$, increases. The firm will then find it worthwhile to pay the cost of allocating more units of task $k$, in order to also allocate more of $-k$. Preference heterogeneity being low is precisely the condition which guarantees that the firm never assign tasks with a negative weight. 

Returning to \Cref{thm:no_disposal}, we can see how having an incentive-feasible set $\mathcal{F}$ with a frontier that is not downward sloping can change the effects of automation. \Cref{fig:high_heterogeneity} depicts such a situation. The solid points are allocations for each of five groups with the same incentive-feasible set, and the arrows depict the direction and magnitude of $(\lambda_*^A - \pi_j^A, \lambda_*^B - \pi_j^B)$ for each $j$.\footnote{Types here are distributed uniformly on $[0.1, 2]$ and $R(\theta) = 10\theta + 10$.} Precisely because the frontier has an upward-sloping segment, it is possible for an increase in automation to move the point $(\ell^A/J, \ell^B/J)$ closer to the frontier. This is depicted by the reduction in $\ell^A$ from the left panel to the right panel of \Cref{fig:high_heterogeneity}. Now recall that market clearing requires $(\ell^A/J, \ell^B/J)$ to be the barycenter of the groups' allocations. Because the frontier of $\mathcal{F}$ is strictly convex, the allocations across groups converge as we move towards the frontier. In particular, the allocations of some groups strictly increase (groups 4 and 5 in the figures). Moreover, for the allocations to converge, so must the angle of the normal vectors $(\lambda_*^A - \pi_j^A, \lambda_*^B - \pi_j^B)$. Because the groups have distinct performance levels, this can only happen if $\lambda_*^A \rightarrow - \infty$ and $\lambda_*^B \rightarrow + \infty$. 

\begin{figure}
    \centering
    \begin{subfigure}{.45\textwidth}
    \centering
    \includegraphics[width = 1\linewidth]{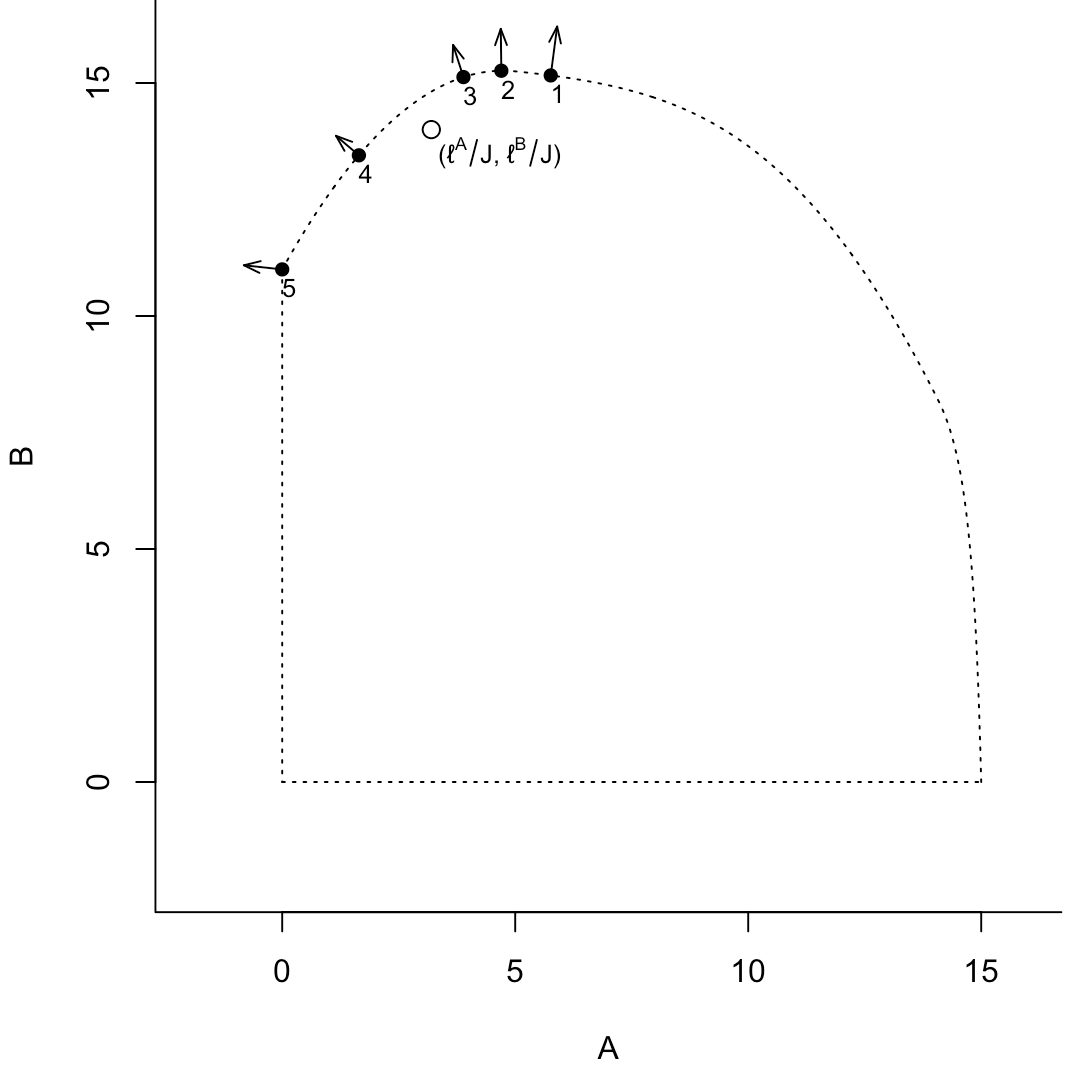}
    \end{subfigure}
    \begin{subfigure}{.45\textwidth}
    \centering
    \includegraphics[width = 1\linewidth]{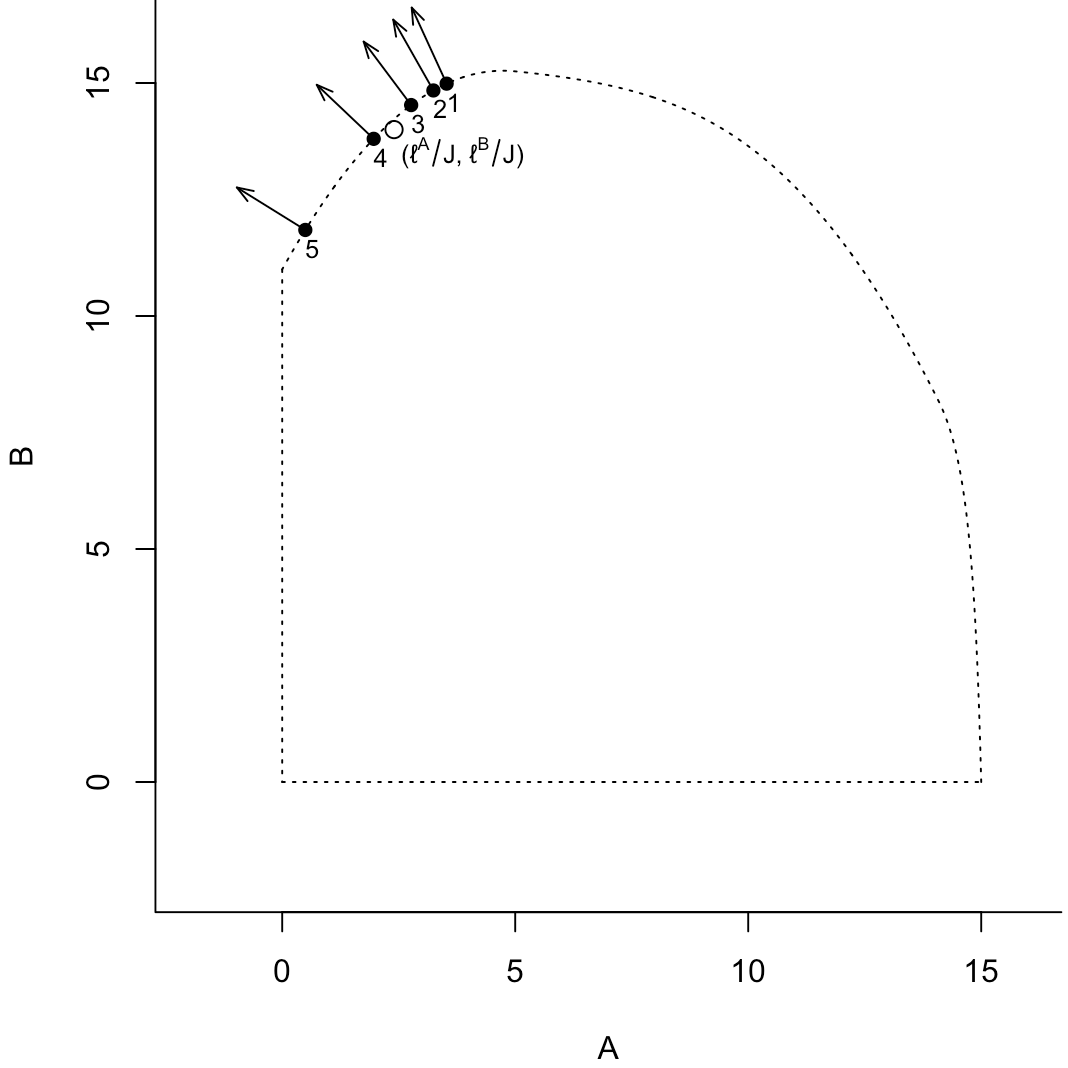}
    \end{subfigure}
    \caption{Automation with high preference  heterogeneity.}
    \label{fig:high_heterogeneity}
\end{figure}

Notice that if $\lambda^A < 0$ then the firm could actually produce higher output at a  lower cost by assigning more units of task $A$ to labor. We should expect the firm to do so if it can choose the aggregate labor allocation freely. We turn now to consider such choices.  

\subsection{Optimal labor allocations}

\Cref{thm:no_disposal} described the labor-market response to changes in the aggregate allocation of tasks to labor. We now embed the labor-allocation within the firm's broader profit-maximization program. Automation is driven by advances in machine technology, which could take the form of reductions in either the cost of computation $\gamma(\cdot)$, computation usage $c^A,c^B$, or the marginal production costs $\pi_m^A,\pi_m^B$. I refer to the changes collectively as \textit{machine-enhancing}. Our focus here is not, however, on characterizing the effects of machine-enhancing changes per se. These effects will depend on the functions $P,Y,$ and $\gamma$. Describing the implications of the shape of demand on automation, while of theoretical and practical interest, is orthogonal to the focus of the current study on the role of informational asymmetries (see for example \cite{acemoglu2018race} for coverage of demand effects).\footnote{Our characterization of $L$ is, however, a useful and necessary first step towards understanding the effects of machine-enhancing changes. Given $L$, the comparative statics of machine-enhancing changes are standard. For example, if preference heterogeneity is small, so that $L$ is submodular, and $P(Y)Y$ is supermodular, i.e the tasks are complements on the output side, then any machine-enhancing change will increase $ a_m, b_m, \mu^A,$ and $\mu^B$.} 

Our interest here is, rather, in the direct effect of automation, represented by the residual quantities $(\mu^A - a_m)$ and $(\mu^B - b_m)$ allocated to labor, on the labor market, as described by \Cref{thm:no_disposal}. Nonetheless, the firm's profit-maximization program is relevant, because it implies that some labor allocations can never be optimal for any combination of $Y,P$, and $\gamma$. This possibility is directly related to the submodularity of $S_j$ when preference heterogeneity in group $j$ is high. 

To understand why profit maximization may preclude some labor allocations, recall the examples described in \Cref{fig:high_heterogeneity}. As discussed above, for $(\ell^A/J, \ell^B/J)$ close to the upward-sloping segment of the frontier, $\lambda_*^A$ will be negative. But this means that the firm could allocate more of task $A$ to labor and produce a larger quantity at a lower cost. In other words, the labor-cost function will have non-monotonicities. Where precisely these non-monotonicities occur depends not only on the incentive-feasible set, but also on the number of groups and their performance. An example is depicted in \Cref{fig:disposal}. The gray curves are level sets of the labor-cost function in the example from \Cref{fig:high_heterogeneity}. The solid black line shows the point at which the level sets ``peak''. To the right of this line is exactly the set of points $(n^A,n^B)$ such that $\lambda_*^A(J n^A, J n^B), \lambda_*^B(J n^A, J n^B) > 0$. 

In general, allowing for the possibility that $\lambda_*^A,\lambda_*^B$ are not single-valued, define the set
\[
\mathcal{E}:= \left\{ (\ell^A/J,\ell^B/J) \in \mathcal{F} \ : \  \max\lambda_*^A(\ell^A,\ell^B) \text{ and } \max \lambda_*^B(\ell^A,\ell^B) \geq 0 \right\}
\]
Owing to \Cref{prop:eqm_set} below, we refer to the $\mathcal{E}$ as the \textit{equilibrium set}.\footnote{We maintain the symmetry assumption here for simplicity. If the incentive-feasible set differs across groups, we define the equilibrium set in terms of the aggregate labor allocation, rather than the average labor allocation:
\begin{equation}\label{eq:tilde_E}
\tilde{\mathcal{E}} := \left\{ (\ell^A,\ell^B) \in \R^2_+ \ : \ L(\ell^A,\ell^B) < +\infty,\ \max\lambda_*^A(\ell^A,\ell^B) \text{ and } \max \lambda_*^B(\ell^A,\ell^B) \geq 0 \right\}.
\end{equation}
The condition $L(\ell^A,\ell^B) < +\infty$ means that the allocation to labor is feasible. This definition is used in \Cref{sec:competitive}.
} Because non-monotonic iso-cost curves arise off of $\mathcal{E}$, and only $\mathcal{E}$ is profit-rationalizable. 

\begin{proposition}\label{prop:eqm_set}
    If $(\ell^A,\ell^B)$ is an optimal labor allocation then $(\ell^A/J, \ell^B/J) \in \mathcal{E}$. Conversely, if $(\ell^A/J, \ell^B/J) \in \mathcal{E}$ then there exists a convex computation cost $\gamma$, concave production function $Y$, and decreasing inverse demand function $P$ such that $(\ell^A, \ell^B)$ is the optimal allocation to labor.
\end{proposition}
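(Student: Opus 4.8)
Since $L(\ell)=\sup_{\lambda}\bigl\{\langle\lambda,\ell\rangle-\sum_j S_j(\lambda-\pi_j)\bigr\}$ is a conjugate, the set of dual maximizers at $\ell$ is exactly $\partial L(\ell)$, so under the section's uniqueness assumption $L$ is differentiable with $\nabla L(\ell)=(\lambda_*^A(\ell),\lambda_*^B(\ell))$, and $(\ell/J)\in\mathcal E\iff\lambda_*^A(\ell)\ge0$ and $\lambda_*^B(\ell)\ge0\iff L$ is non-decreasing in each coordinate at $\ell$. I would also note two structural facts: (i) since $0\in\mathcal F_j\subseteq\R^2_+$, one has $S_j\equiv0$ on the negative orthant, so if $\lambda_*(\ell)-\pi_j\ll 0$ for all $j$ then the market-clearing/stationarity condition $\ell=\sum_j N_j(\lambda_*(\ell)-\pi_j)$ forces $\ell=0$; and (ii) $L\ge 0=L(0)$, so $0$ is a global minimizer of $L$, hence $0\in\partial L(0)$ and $(0,0)/J\in\mathcal E$.

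\textbf{Necessity.} Suppose $(\ell^A,\ell^B)$ is the labor part of a firm optimum $(\mu^*,a_m^*,b_m^*)$ and, for contradiction, $(\ell/J)\notin\mathcal E$; WLOG $\lambda_*^A(\ell)<0$. The corner condition for the machine's choice of $a_m$ forces $a_m^*=0$ (otherwise the left-derivative of the machine objective in $a_m$ gives $\lambda_*^A(\ell)\ge\gamma'c^A+\pi_m^A>0$), so $\mu^{A*}=\ell^A$. Case (a), $b_m^*>0$: jointly increase $\ell^A$ by $t$ and decrease $b_m^*$ by $(Y_A/Y_B)t$ (with $Y_A,Y_B$ the marginal products at $\mu^*$); this leaves output and hence revenue unchanged to first order, strictly lowers machine cost at rate proportional to $\gamma'c^B+\pi_m^B>0$, and strictly lowers labor cost at rate proportional to $|\lambda_*^A(\ell)|>0$ — a strict profit improvement, contradiction. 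Case (b), $b_m^*=0$: then $\mu^*=\ell$ and the envelope theorem gives $\nabla C(\mu^*)=\nabla L(\ell)=\lambda_*(\ell)$; the firm's first-order condition in $\mu^A$ yields $R'(Y(\mu^*))\,Y_A=\lambda_*^A(\ell)<0$, so $R'(Y(\mu^*))<0$, whence $\lambda_*^B(\ell)=R'(Y(\mu^*))\,Y_B<0$ as well; by structural fact (i) this forces $\ell=0$, but then $(\ell/J)\in\mathcal E$ by (ii), a contradiction. (The corner configurations $\ell^A=0$ or $\ell^B=0$, where the firm's output sits on the boundary, need the same argument with the relevant first-order conditions replaced by the appropriate inequalities; this is the fiddly part and I would handle it separately.)

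\textbf{Sufficiency.} Given $(\ell/J)\in\mathcal E$, set $\lambda_*:=\lambda_*(\ell)=\nabla L(\ell)\ge0$. First I would make the machine prohibitively costly: take $\gamma(x)=x$ (convex, strictly increasing, $\gamma(0)=0$) and $\pi_m^A,\pi_m^B$ larger than $\sup\{\|\nabla L(\ell')\|:\ell'\text{ in a fixed compact neighborhood of }\ell\text{ in }\operatorname{dom}L\}$; then at any firm optimum with output near $\ell$ the machine's first-order condition fails, so $a_m^*=b_m^*=0$ and $C\equiv L$ on the relevant region. The firm then solves $\max_{\mu\ge0}\bigl\{P(Y(\mu))Y(\mu)-L(\mu)\bigr\}$. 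In the generic case $\lambda_*\gg 0$, take the linear (hence concave, strictly increasing) production function $Y(\mu)=\langle\lambda_*,\mu\rangle$ and linear inverse demand $P(Y)=\alpha-\beta Y$ with $\beta>0$ and $\alpha$ chosen so that the revenue function $R(Y)=P(Y)Y$ satisfies $R'(Y(\ell))=\alpha-2\beta Y(\ell)=1$; then $P$ is decreasing, $R$ and hence $R\circ Y$ is concave, so $R(Y(\cdot))-L(\cdot)$ is concave with gradient at $\ell$ equal to $R'(Y(\ell))\nabla Y(\ell)-\nabla L(\ell)=\lambda_*-\lambda_*=0$; thus $\ell$ is a global maximizer and, machines being off, $(\ell^A,\ell^B)$ is the optimal labor allocation. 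When $\lambda_*$ has a zero coordinate — possible only on the boundary of $\mathcal E$ — the clean gradient-matching breaks (a strictly increasing $Y$ cannot have a zero-coordinate gradient), and I would instead choose $Y$ strictly increasing and concave with $Y(\ell)$ placed at the peak of the revenue function, exploiting that a zero coordinate of $\lambda_*(\ell)$ forces the corresponding $\ell^k=0$ (or puts $L$ along a flat/minimal direction), so that $\ell$ maximizes $R\circ Y-L$ because it simultaneously maximizes $R\circ Y$ and minimizes $L$.

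\textbf{Main obstacle.} The heart of the argument is the two ideas above — (i) if $\lambda_*^A(\ell)<0$ the firm can profitably shift a task-$A$ unit from the machine to labor, and (ii) any $\lambda_*=\nabla L(\ell)\ge0$ is realizable as the marginal-revenue vector of a well-chosen concave production/demand pair while machines are shut off — and these go through smoothly in the interior case. The genuinely delicate work, which I expect to be the bulk of the formal proof, is the boundary bookkeeping: in necessity, ruling out optima where the firm's output is at a corner ($\mu^{A*}=0$ or $\mu^{B*}=0$, i.e.\ some $\ell^k=0$), where first-order perturbations are inconclusive and one must invoke the specific shape of $L$ inherited from the outer program; and in sufficiency, constructing $(Y,P,\gamma)$ for target points on $\partial\mathcal E$ where $\nabla L(\ell)$ has a vanishing coordinate.
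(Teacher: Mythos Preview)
Your approach matches the paper's: necessity via the envelope identification $\nabla L(\ell)=\lambda_*(\ell)$ (so a negative coordinate permits a profitable deviation), and sufficiency via constructing $\gamma,Y,P$ so that the revenue and labor-cost iso-curves are tangent at $\ell$. In fact your necessity argument is considerably more careful than the paper's, which simply observes that a negative right-derivative of $L$ means ``the same output can be produced at strictly lower labor cost''; your case split on $b_m^*>0$ versus $b_m^*=0$, together with structural facts (i)--(ii), fills in exactly the steps the paper elides.

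There is one genuine slip in your sufficiency construction. The proposition gives you freedom over $\gamma$, $Y$, and $P$ only; the machine parameters $\pi_m^A,\pi_m^B$ (and $c^A,c^B$) are fixed primitives of the model, so you cannot set them ``larger than $\sup\|\nabla L\|$.'' The repair is immediate and uses the degree of freedom you do have: take $\gamma(x)=Mx$ with $M$ large enough that the effective marginal machine cost $Mc^k+\pi_m^k$ exceeds $\lambda_*^k(\ell')$ throughout the relevant region, which forces $a_m^*=b_m^*=0$ and reduces $C$ to $L$ exactly as you intended. With this correction, your linear $Y(\mu)=\langle\lambda_*,\mu\rangle$ and quadratic revenue $R$ with $R'(Y(\ell))=1$ give a concave profit function with vanishing gradient at $\ell$, and the interior case $\lambda_*\gg 0$ goes through. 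The boundary case where some $\lambda_*^k=0$ is left vague in your sketch, but the paper's one-sentence sufficiency proof (``define $P,Y,\gamma$ so the iso-profit curve is tangent to the iso-labor-cost curve'') does not spell this out either.
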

\begin{proof}
    Proof in \Cref{proof:eqm_set}.
\end{proof}

In other words, all and only points in $\mathcal{E}$ could be the per-group optimal labor allocation. In the example in \Cref{fig:disposal}, we can see that the equilibrium set excludes the north-west corner of the incentive-feasible set. (The south-east corner is included due to the uniform distribution used in this example; more generally both corners could be excluded). To understand the jagged non-convexity of the equilibrium set, observe that this boundary is characterized by calculating the average allocation 
\[
\frac{1}{J}\sum_{j=1}^J N^*(- \pi_j^A, \lambda^B - \pi_j^B) 
\]
for varying $\lambda^B$. As we increase $\lambda^B$ the average allocation moves upwards away from the origin. The vertical jumps occur at the $\lambda^B$ to activate an additional group. Then for a fixed set of active groups, increasing $\lambda^B$ moves these groups along the upward-sloping portion of the incentive-feasible set, until the next group is activated.\footnote{As a result, the jumps on this boundary shrink as the number of groups increases, but this does not mean that the equilibrium set converges to the entire incentive-feasible set.} We can observe the following properties of the equilibrium set. 

\begin{figure}
    \centering
    \includegraphics[width = .6\linewidth]{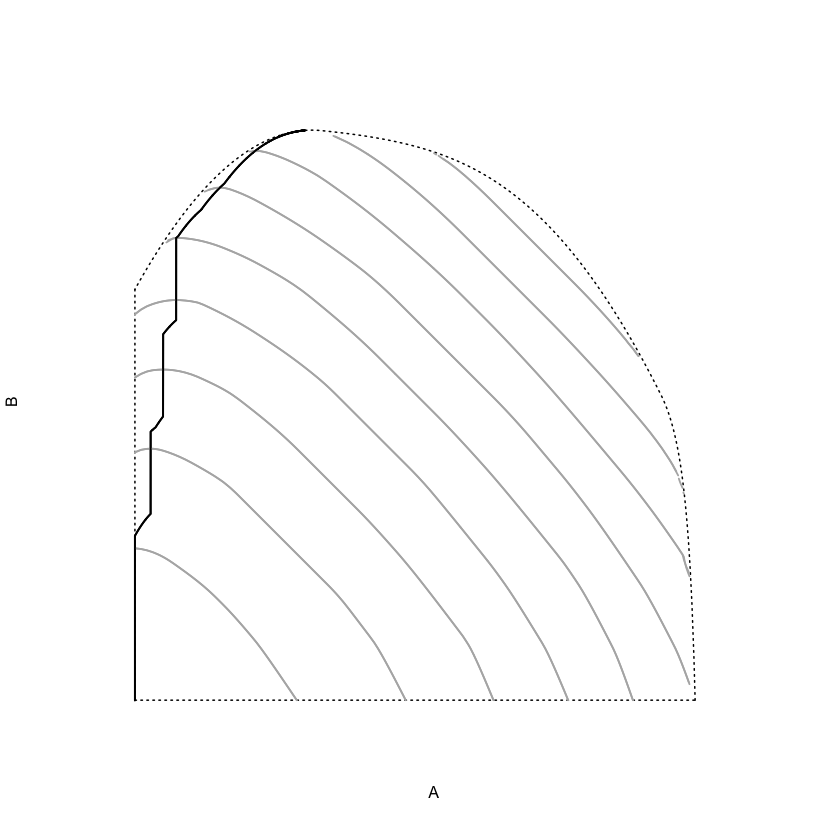}
    \caption{The labor-cost function and equilibrium set}
    \label{fig:disposal}
\end{figure}

\begin{proposition}\label{prop:eqm_properties}
   The equilibrium set is the entire incentive-feasible set (i.e., $\mathcal{E} = \mathcal{F}$) if and only if preference heterogeneity is low. Otherwise the equilibrium set
   \begin{enumerate}[i.]
       \item excludes all points on the upward-sloping segment of the frontier, except, potentially, points at which the frontier has a kink; and 
       \item includes all points on the downward-sloping segment of the frontier.
       \item If $\lambda_*^A(\ell^A,\ell^B) \geq 0$ then $\lambda_*^A(\hat{\ell}^A,\ell^B) \geq 0$ for all $\hat{\ell}^A \geq \ell^A$ such that $(\hat{\ell}^A/J,\ell^B/J) \in \mathcal{F}$. Symmetrically, if $\lambda_*^B(\ell^A,\ell^B) \geq 0$ then $\lambda_*^B(\ell^A,\hat{\ell}^B) \geq 0$ for all $\hat{\ell}^B \geq \ell^B$ such that $(\ell^A/J,\hat{\ell}^B/J) \in \mathcal{F}$.
   \end{enumerate}
\end{proposition}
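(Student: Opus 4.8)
The plan is to analyze the equilibrium set $\mathcal{E}$ directly via the structure of the primal outer program and the characterization of the frontier of $\mathcal{F}$ from \Cref{thm:outer} and \Cref{prop:sub_dispersion}. First, for the ``if and only if'' with low heterogeneity: by \Cref{prop:sub_dispersion}, $F_j$ has low heterogeneity for every $j$ iff each $\mathcal{F}_j = \mathcal{F}$ is downward closed, i.e.\ the frontier is entirely downward-sloping. When the frontier is downward-sloping, every supporting hyperplane of $\mathcal{F}$ at a frontier point has an outward normal $(w^A, w^B) \geq 0$ (not both zero); since these normals are exactly $(\lambda_*^A - \pi_j^A, \lambda_*^B - \pi_j^B)$ for the optimal $\lambda_*$, and at least one group's allocation lies on a downward-sloping (hence nonnegatively-normaled) part of the frontier while the point $(\ell^A/J, \ell^B/J)$ is the barycenter of the groups' allocations, I need to argue $\lambda_*^A, \lambda_*^B \geq 0$. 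The cleanest route: if $\lambda_*^k < 0$ for some $k$, then for \emph{every} group $\lambda_*^k - \pi_j^k < 0$, so the normal has a strictly negative $k$-coordinate, meaning every group sits on a part of the frontier that is upward-sloping in a neighborhood (strictly increasing in task $-k$ as task $k$ decreases) --- impossible when the whole frontier is downward-closed unless all groups are at a corner, which is a knife-edge ruled out generically (or handled by noting the barycenter then also lies at that corner, and one can perturb). Hence $\mathcal{E} = \mathcal{F}$. Conversely if heterogeneity is high for some group, part \textit{i.} below already exhibits frontier points excluded from $\mathcal{E}$, so $\mathcal{E} \neq \mathcal{F}$.

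For parts \textit{i.} and \textit{ii.}, fix a point $p$ on the frontier of the (common) set $\mathcal{F}$. Since the optimal aggregate allocations are barycenters of points $N_j^*(\lambda_*^A - \pi_j^A, \lambda_*^B - \pi_j^B)$ on the frontier, when $(\ell^A/J,\ell^B/J) = p$ is itself on the frontier, strict convexity of the frontier forces every group's allocation to equal $p$, so the common normal direction $(\lambda_*^A - \pi_j^A, \lambda_*^B - \pi_j^B)$ must be a supporting normal at $p$ for \emph{every} $j$. But the $\pi_j$ are distinct across groups, so these $J \geq 3$ vectors are distinct; they can all be supporting normals at $p$ only if $p$ is a kink of the frontier (where the normal cone is not a singleton). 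If $p$ is a smooth point on a downward-sloping segment, the unique normal is nonnegative, and we can realize it with distinct $\pi_j$ only when $J = 1$; but with $J \geq 3$ distinct performances the barycenter cannot sit at a smooth downward point unless... --- here I should instead argue \textit{ii.} via the converse construction in \Cref{prop:eqm_set}: a smooth downward-sloping frontier point $p$ has a unique positive normal $(w^A, w^B) > 0$; set $\lambda_*^k = w^k + \pi_{j_0}^k$ for a suitable reference group and check that the remaining groups' allocations $N_j^*(\lambda_*^A - \pi_j^A, \lambda_*^B - \pi_j^B)$ still average to $p$ after adjusting (using $J$ large or by noting we only need \emph{some} profit-rationalization, per \Cref{prop:eqm_set}); since $\lambda_*^A, \lambda_*^B$ can be taken nonnegative (all $\pi_j > 0$, $w^k > 0$), $p \in \mathcal{E}$. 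For \textit{i.}, a smooth point $p$ on the upward-sloping segment has unique normal with one strictly negative coordinate, say $w^A < 0$; then $\lambda_*^A = w^A + \pi_j^A$ would need to be the \emph{same} across $j$ while $\pi_j^A$ differ --- contradiction unless only one group is active, but then the barycenter isn't $p$ --- so no optimal $\lambda_*$ delivers the barycenter $p$ with $\lambda_*^A \geq 0$; more carefully, since at the optimum each active group lies on the frontier with outward normal $(\lambda_*^A - \pi_j^A, \lambda_*^B - \pi_j^B)$ and $p$ is the average, if $p$ is on the upward segment then by strict convexity all groups are near/at $p$, forcing $\lambda_*^A - \pi_j^A < 0$ for all $j$, hence $\lambda_*^A < \min_j \pi_j^A$, and I claim in fact $\lambda_*^A$ must be $< 0$: push $\lambda^B \uparrow$ to fan the normals as in the discussion preceding the proposition and observe $\lambda_*^A \to -\infty$; the monotone argument from \Cref{thm:no_disposal}'s converse (formalized in \Cref{proof:no_disposal}) gives $\max\lambda_*^A < 0$, so $p \notin \mathcal{E}$. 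Kinks are the exception since there the normal cone contains vectors with nonnegative coordinates.

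Part \textit{iii.} is a monotone-comparative-statics claim. I would prove it from the primal: $\lambda_*^A(\ell^A,\ell^B)$ is a subgradient in $\ell^A$ of the convex function $L(\cdot,\ell^B)$, so it is nondecreasing in $\ell^A$; hence $\lambda_*^A(\ell^A,\ell^B) \geq 0$ implies $\lambda_*^A(\hat\ell^A,\ell^B) \geq \lambda_*^A(\ell^A,\ell^B) \geq 0$ for all $\hat\ell^A \geq \ell^A$ in the feasible range (where $L < \infty$). Stating this for $\max\lambda_*^A$ when solutions are not single-valued just uses that the right-derivative of a convex function is nondecreasing. The symmetric statement for $\lambda_*^B$ is identical. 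This part is routine.

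The main obstacle is part \textit{i.}: cleanly ruling out that the barycenter of the $J$ distinct-normal frontier points can land on a smooth upward-sloping segment with $\max\lambda_*^A \geq 0$. I expect to handle this by combining strict convexity of the frontier (which collapses all active groups toward the candidate point) with the fanning/monotonicity argument already developed for the converse direction of \Cref{thm:no_disposal}, so that as the configuration approaches an upward-sloping frontier point the price $\lambda_*^A$ is driven strictly negative; the boundary cases (kinks, corners, the south-east corner that survives under the uniform example) are genuine exceptions that I will state explicitly rather than try to exclude.
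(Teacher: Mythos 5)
Your proposal is essentially correct for parts \textit{i.}--\textit{iii.}, where you follow the same route as the paper: part \textit{i.} by importing the fanning argument from the converse of \Cref{thm:no_disposal} (strict convexity collapses all groups to the candidate point, distinct $\pi_j$ force $|\lambda_*|\to\infty$ along the normal direction, and an upward-sloping normal has a negative coordinate); part \textit{ii.} by the same limiting argument with a positive normal; and part \textit{iii.} via monotonicity of the subgradient of the convex function $L$, which is equivalent to the paper's increasing-differences argument on the dual objective. Where you genuinely diverge is the claim that low heterogeneity implies $\mathcal{E}=\mathcal{F}$. The paper's proof is global and algebraic: it notes that $(\lambda^A,\lambda^B)=(\min_j\pi_j^A,\min_j\pi_j^B)>0$ solves the dual at $\ell=(0,0)$, invokes \Cref{prop:sub_dispersion} to get supermodularity of the dual objective in $(\lambda^A,\lambda^B)$ under low heterogeneity, and then applies monotone comparative statics to conclude $\max\lambda_*^k\geq\min_j\pi_j^k>0$ for every $\ell$ with $\ell/J\in\mathcal{F}$. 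Your geometric argument via normals to a downward-closed frontier can be made to work, but the step you wave away as a ``knife-edge ruled out generically'' is not non-generic: if $\lambda_*^A<0$ then every group with $\lambda_*^B-\pi_j^B>0$ maximizes over a downward-closed set with a strictly negative weight on $A$ and therefore lands exactly at the corner $(0,\bar n^B)$, while all other groups get the zero allocation; the correct conclusion is that this forces $\ell^A=0$, a boundary case of $\ell$ rather than a measure-zero configuration, and at $\ell^A=0$ one still has $\max\lambda_*^A\geq 0$ because lowering $\lambda^A$ below $\min_j\pi_j^A$ no longer changes the dual objective. Also note that your argument silently treats all groups as being on the frontier; remedial groups (\Cref{cor:outer}) need the same case analysis. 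The paper's supermodularity route buys you all of this in one line and avoids the boundary bookkeeping entirely, so I would recommend adopting it for that direction and keeping your geometric picture as intuition.
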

\begin{proof}
    Proof in \Cref{proof:eqm_properties}.
\end{proof}

One consequence of \Cref{prop:eqm_properties} is that the ``if'' direction of \Cref{thm:no_disposal} is unchanged if we consider only optimal increases in automation. However, the restriction to $\mathcal{E}$ complicates the construction used for the converse. It is still the case that there exist examples with high preference heterogeneity in which conclusions \textit{iv.}-\textit{vi.} hold for some optimal changes in automation (the setting from \Cref{fig:high_heterogeneity} is one), however whether or not this is true for all examples with high heterogeneity remains an open question. 

\Cref{prop:eqm_properties}, along with our earlier analysis of the labor cost function, reveals two senses in which the information rents arising from asymmetric information lead to more moderate allocations to labor. First, asymmetric information induces additional convexity in the labor-cost function, relative to the case of observable worker preferences in which the labor-cost function has linear segments (corresponding to regions in which additional tasks are allocated to the same worker group). All else equal, convexity of costs tends to favor more moderate allocations. More concretely, \Cref{prop:eqm_properties} shows that as heterogeneity of worker preferences increases, we enter a situation in which certain extreme labor allocations, involving a high degree of specialization in one of the two tasks, become sub-optimal under any specification of the revenue side of the problem. 

The previous observations relate to specialization of labor on aggregate. We conclude the analysis by studying the patterns of specialization at the individual level. 

\subsection{Specialization at the individual level}

We have already observed in \Cref{sec:compare_to_complete} that asymmetric information introduces the possibility that some individual workers handle a mixed bundle of tasks, in contrast to the situation under complete information, in which workers fully specialize. We now consider how the degree of specialization responds to increases in automation. 

We first observe that in general, the individual-level effects of automation can differ across groups. Recall that in the inner program, the mechanism for group $j$ is determined by $(\lambda^A - \pi_j^A, \lambda^B - \pi_j^B)$, or equivalently, by $(\lambda^A - \pi_j^A)/|\lambda^B - \pi_j^B|$ and $\text{sign}(\lambda^B - \pi_j^B)$. Even when preference heterogeneity is low for all groups, so that any increase in automation reduces both $\lambda_*^A$ and $\lambda_*^B$ (\Cref{prop:sub_dispersion}), it could be that an increase in automation, say of task $A$, reduces the ratio $(\lambda^A - \pi_j^A)/|\lambda^B - \pi_j^B|$ for group $j$, but increases the ratio $(\lambda^A - \pi_{j'}^A)/|\lambda^B - \pi_{j'}^B|$ for group $j'$. Then the firm will substitute group $j$ from task $A$ to task $B$, but will in fact substitute group $j'$ towards the automated task $A$. \Cref{fig:high_heterogeneity} depicts an example of these opposing movements. In this example heterogeneity is high, but the same dynamic would be at play with low heterogeneity if automation moved $(\ell^A/J, \ell^B/J)$ away from a point on the frontier. 

The disparate effects of automation across groups make it difficult to draw general conclusions about the directions in which each group's mechanism will move. However, with a bit more structure we can say something about changes in the degree of individual-level specialization. 

Recall that we interpreted the reservation utility of workers as the value they were accustomed to attaining under the existing task-allocation system. If this is the case, then the reservation utility should evolve along with the mechanism. Consider the following dynamic model, in which mechanism specified for group $j$ in period $t$ defines the reservation utility for these workers in period $t+1$. That is,
\[
R^{t+1}_j(\theta) = a_j^t(\theta) \theta + b_j^t(\theta),
\]
where $(a_j^t,b_j^t)$ is the period-$t$ mechanism for group $j$, and $R_j^{t+1}$ is this group's reservation utility function. Let $\mathcal{F}_j^t$ be the incentive-feasible set for group $j$ in period $t$ (which depends on $R_j^{t-1}$). In each period, the firm solves the optimal task-allocation problem, as described up to this point. We assume here that the firm is myopic, in that it does not internalize the effect of today's mechanism on the reservation value tomorrow.\footnote{This is a strong assumption with a single firm, but it is much more palatable in the competitive-market version of the model introduced below. With a single firm, myopia of this for could arise from the short-run career concerns of managers, who want to maximize profit given the constraints they face, but who don't internalize how workers' contracts today will constrain their successor.} The question is how the mechanisms evolve in response to increasing automation over time.

Say that mechanism $(a,b)$ is \textit{more specialized} than $(\hat{a},\hat{b})$ if the following conditions hold 
\begin{enumerate}[i.]
    \item $\hat{a}(\theta) = 0$ implies $a(\theta) = 0 $, and $\hat{b}(\theta) = 0$ implies $b(\theta) = 0$,
    \item $a(\theta) >0$ and $b(\theta) >0 $ implies $\hat{a}(\theta) > 0 $ and $\hat{b}(\theta) > 0$. 
\end{enumerate}
We say that $(a,b)$ is \textit{uniformly more specialized} if in addition, $a(\theta) >0$ and $b(\theta) >0 $ implies $(\hat{a}(\theta), \hat{b}(\theta)) = (a(\theta), b(\theta))$. The first condition says that if type $\theta$ receives none of task $A$ under mechanism $(\hat{a},\hat{b})$, then this is also case under $(a,b)$. In other words, if type $\theta$ is assigned only task $k$ under $(\hat{a},\hat{b})$ then under $(a,b)$ they either assigned only task $k$ as well, or excluded entirely. The second condition says that any type that is a ``generalist'' handling some of both tasks under $(a,b)$, then they are also a generalist under $(\hat{a},\hat{b})$, and if the increase in specialization is uniform then they must receive exactly the same allocation under $(\hat{a},\hat{b})$. The more-specialized order is (very) incomplete, but when two mechanisms can be ranked in this way the comparison is unambiguous. \Cref{fig:specialization} depicts the indirect utilities in periods $t$ and $t+1$ for group $j$, showing a uniform increase in specialization.

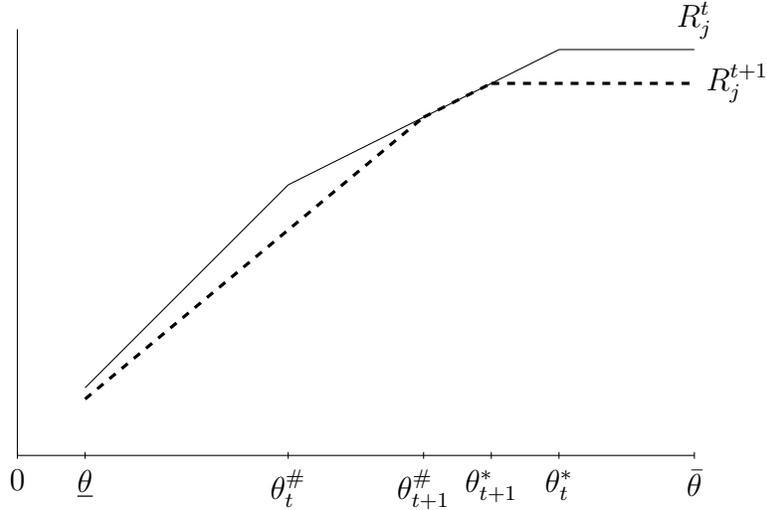
\begin{figure}
    \centering
    \begin{tikzpicture}[scale = .9]
        \draw (0,0) -- (0,6.3);
        \draw (0,0) -- (10,0);

        \draw (1,1) -- (4,4);
        \draw (4,4) -- (8,6);
        \draw (8,6) -- (10,6) node[above] {$R_j^t$};

        \draw[very thick, dashed] (1,5/6) -- (6,5);
        \draw[very thick, dashed] (6,5) -- (7,5.5); 
        \draw[very thick, dashed] (7,5.5) -- (10,5.5) node[right] {$R_j^{t+1}$};
        
        \draw (0,0) ++(0,0.05) -- ++(0,-0.10) node[below] {$0$};
        \draw (1,0) ++(0,0.05) -- ++(0,-0.10) node[below] {$\und{\theta}$};
        \draw (10,0) ++(0,0.05) -- ++(0,-0.10) node[below] {$\bar{\theta}$};
        \draw (4,0) ++(0,0.05) -- ++(0,-0.10) node[below] {$\theta_t^{\#}$};
        \draw (8,0) ++(0,0.05) -- ++(0,-0.10) node[below] {$\theta^*_t$};
        \draw (6,0) ++(0,0.05) -- ++(0,-0.10) node[below] {$\theta^{\#}_{t+1}$};
        \draw (7,0) ++(0,0.05) -- ++(0,-0.10) node[below] {$\theta^*_{t+1}$};

    \end{tikzpicture}
    \caption{Indirect utilities for a uniform increase in specialization from $R^t$ to $R^{t+1}$.}
    \label{fig:specialization}
\end{figure}

Recall that a machine-enhancing change was defined as any combination of a reduction in the cost of computation $\gamma$, computation usage $c^A,c^B$, or the marginal production costs for the machine $\pi_m^A,\pi_m^B$. 

\begin{proposition}\label{prop:specialization}
    Assume that the frontier of $\mathcal{F}^t_j$ is strictly convex for all $j$. If there is a machine-enhancing change from period $t$ to period $t+1$, then specialization increases for every group. If the frontier of $\mathcal{F}^{t+1}_j$ is strictly convex, then the increase in specialization is uniform for $j$.
\end{proposition}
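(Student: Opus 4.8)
By Corollary~\ref{cor:outer} the period-$s$ mechanism offered to group $j$ is $\beta^s_j\cdot(a^{*,s}_j,b^{*,s}_j)$, where $(a^{*,s}_j,b^{*,s}_j)$ solves the inner program of \cref{eq:inner} for reservation $R^s_j$ and weights $(w^{A,s}_j,w^{B,s}_j)=(\lambda^{A,s}_*-\pi^A_j,\lambda^{B,s}_*-\pi^B_j)$, with $\beta^s_j=1$ when $j$ is on its frontier; under strict convexity of the frontier of $\mathcal F^t_j$ this solution is unique and at most two groups are remedial, each receiving a single task. Multiplying a mechanism by a positive scalar does not change the zero sets of $a$ and $b$, so the ``more specialized'' relation is insensitive to the $\beta$'s; I therefore compare the unscaled inner-program solutions and dispose of the remedial/excluded groups (already fully specialized) at the end. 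Throughout I use Proposition~\ref{prop:concave_relax}, valid because (Step~1) $R^{t+1}_j$ is concave with $R^{t+1}_j(\und\theta_j)\ge R^{t+1\prime}_j(\und\theta_j)\und\theta_j$: writing $\tilde\theta_{j,s},\theta^*_{j,s}$ for the pooling and exclusion thresholds, $a^{*,s}_j\equiv\und u_s/\und\theta_j$, $b^{*,s}_j\equiv 0$ on $[\und\theta_j,\tilde\theta_{j,s}]$; $a^{*,s}_j=R^{s\prime}_j$, $b^{*,s}_j=R^s_j-\mathrm{id}\cdot R^{s\prime}_j$ on the IR-binding region $(\tilde\theta_{j,s},\theta^*_{j,s}]$ (with $\theta^*_{j,s}=\und\theta^*_j=\bar\theta^*_j$ and no ironing plateau when the relevant frontier is strictly convex); and $a^{*,s}_j\equiv 0$, $b^{*,s}_j$ constant on $(\theta^*_{j,s},\bar\theta_j]$.

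\textbf{Step 1: shape of $R^{t+1}_j$.} Since $R^{t+1}_j(\theta)=a^t_j(\theta)\theta+b^t_j(\theta)$ is the indirect utility of an IC mechanism, the envelope theorem gives $R^{t+1\prime}_j=a^t_j$ (non-increasing), so $R^{t+1}_j$ is concave, and the $y$-intercept of its tangent at $\theta$ equals $b^t_j(\theta)\ge 0$. Substituting the period-$t$ solution shows $R^{t+1}_j$ is the ray $\theta\mapsto(\und u_t/\und\theta_j)\theta$ through the origin on $[\und\theta_j,\tilde\theta_{j,t}]$, equals $R^t_j$ on $(\tilde\theta_{j,t},\theta^*_{j,t}]$, and is constant ($=R^t_j(\theta^*_{j,t})$) on $(\theta^*_{j,t},\bar\theta_j]$. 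Hence $R^{t+1}_j\le R^t_j$ (the period-$t$ IR constraint), and $R^{t+1}_j(\theta)/\theta\le R^{t+1}_j(\und\theta_j)/\und\theta_j$ for all $\theta$, with equality on $[\und\theta_j,\tilde\theta_{j,t}]$ and strict inequality on $(\theta^*_{j,t},\bar\theta_j]$.

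\textbf{Step 2: threshold comparisons (the crux).} I claim $\tilde\theta_{j,t}\le\tilde\theta_{j,t+1}\le\theta^*_{j,t}$. The estimate at the end of Step~1 gives $\check\theta_{j,t+1}\in[\tilde\theta_{j,t},\theta^*_{j,t}]$, and Proposition~\ref{prop:concave_relax} forces $\tilde\theta_{j,t+1}\ge\check\theta_{j,t+1}\ge\tilde\theta_{j,t}$. For the upper bound, $\tilde\theta_{j,t+1}\le\max\{\check\theta_{j,t+1},\bar\theta^{\#}_{j,t+1}\}$ and $\bar\theta^{\#}_{j,t+1}\le\und\theta^{*}_{j,t+1}$ (Proposition~\ref{prop:concave_relax}); when $w^{B,t+1}_j\le 0$ there is no exclusion region and the bound is immediate, so assume $w^{B,t+1}_j>0$, where $\und\theta^{*}_{j,t+1}$ depends on the weights only through $w^A_j/w^B_j$ and is non-decreasing in it. It therefore suffices to show $w^{A,t+1}_j/w^{B,t+1}_j\le w^{A,t}_j/w^{B,t}_j$ for every group $j$. \textbf{This is where the machine-enhancing hypothesis is used, and it is the main obstacle}: a machine-enhancing change must move the equilibrium shadow prices $(\lambda^A_*,\lambda^B_*)$ in a direction that rotates every ray $\,s\mapsto(\lambda^{k,s}_*-\pi^k_j)_k\,$ clockwise. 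I plan to establish this by combining the convexity of the labor-cost $L$ (so $(\lambda^A_*,\lambda^B_*)=\nabla L$ is a monotone map of $(\ell^A,\ell^B)$) with the response of the optimal labor allocation to a machine-enhancing change, using also that $R^{t+1}_j\le R^t_j$ shrinks every incentive-feasible set.

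\textbf{Step 3: conclusion.} Given $\tilde\theta_{j,t}\le\tilde\theta_{j,t+1}\le\theta^*_{j,t}$, partition $[\und\theta_j,\bar\theta_j]$. On $[\und\theta_j,\tilde\theta_{j,t+1}]\supseteq[\und\theta_j,\tilde\theta_{j,t}]$ we have $b^{*,t+1}_j=0$ (an $A$-specialist). On $(\theta^*_{j,t},\bar\theta_j]$ we have $R^{t+1\prime}_j=0$, and since $\tilde\theta_{j,t+1}\le\theta^*_{j,t}$ this set lies in the IR-binding part of the period-$(t+1)$ mechanism, so $a^{*,t+1}_j=R^{t+1\prime}_j=0$ (a $B$-specialist or excluded). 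On the remaining middle set, contained in $(\tilde\theta_{j,t},\theta^*_{j,t}]$ where $R^{t+1}_j=R^t_j$, the formulas give $a^{*,t+1}_j=R^{t\prime}_j=a^{*,t}_j$ and $b^{*,t+1}_j=R^t_j-\mathrm{id}\cdot R^{t\prime}_j=b^{*,t}_j$. This checks the definition: $a^t_j(\theta)=0$ forces $\theta>\theta^*_{j,t}\ge\tilde\theta_{j,t+1}$, where $a^{t+1}_j(\theta)=0$; $b^t_j(\theta)=0$ forces $\theta\le\tilde\theta_{j,t}\le\tilde\theta_{j,t+1}$, where $b^{t+1}_j(\theta)=0$; and every period-$(t+1)$ generalist lies in the middle set, hence is a period-$t$ generalist with an unchanged allocation. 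When the frontier of $\mathcal F^{t+1}_j$ is strictly convex the period-$(t+1)$ solution is unique with no ironing plateau, so this last point upgrades the increase to \emph{uniform} for $j$. Finally, a group with zero allocation is trivially more specialized than any mechanism, and a remedial group receiving only one task is fully specialized with $R^{t+1}_j$ a ray or a constant, so its period-$(t+1)$ mechanism is again fully specialized; hence the relation holds for all groups.
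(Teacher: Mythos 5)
Your Step 1 matches the paper's description of $R^{t+1}_j$ exactly (ray through the origin up to $\tilde\theta_{j,t}$, equal to $R^t_j$ on the middle interval, constant above $\theta^*_{j,t}$), and your Step 3, \emph{conditional on} the threshold ordering $\tilde\theta_{j,t}\le\tilde\theta_{j,t+1}$ and $\und\theta^*_{j,t+1}$ behaving appropriately relative to $\theta^*_{j,t}$, reaches the same conclusion the paper does. But Step 2 is the crux and it is not a proof: you explicitly defer the key claim ("I plan to establish this by combining\dots"), and the claim you defer --- that $w^{A,t+1}_j/w^{B,t+1}_j\le w^{A,t}_j/w^{B,t}_j$ for \emph{every} group $j$ --- is false in general. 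The paper itself makes this point in the discussion immediately preceding the proposition: even with low heterogeneity, a change that moves $(\lambda^A_*,\lambda^B_*)$ can decrease the ratio $(\lambda^A-\pi^A_j)/|\lambda^B-\pi^B_j|$ for one group while increasing it for another, since the groups have distinct $(\pi^A_j,\pi^B_j)$. So the uniform clockwise rotation you need cannot be extracted from the machine-enhancing hypothesis, and your route cannot be completed as planned.

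The paper's proof sidesteps the directional question entirely, and this is the idea you are missing. Because $R^{t+1}_j$ is a ray through the origin on $[\und\theta_j,\theta^{\#}_{j,t}]$, \emph{any} new threshold $\tilde\theta_{j,t+1}\le\theta^{\#}_{j,t}$ produces the identical constant allocation $\und u/\und\theta_j=R^{t+1}_j(\theta^{\#}_{j,t})/\theta^{\#}_{j,t}$ on that whole segment; and because $R^{t+1}_j$ is constant above $\theta^*_{j,t}$, any $\und\theta^*_{j,t+1}\ge\theta^*_{j,t}$ leaves $a^{t+1}_j=R^{t+1\prime}_j=0$ there. Hence the mechanism is invariant to threshold movements \emph{within} the two specialist regions, and the only way it can change is by $\tilde\theta_{j,t+1}$ rising above $\theta^{\#}_{j,t}$ or $\und\theta^*_{j,t+1}$ falling below $\theta^*_{j,t}$ --- either of which shrinks the generalist middle interval and (weakly) enlarges the specialist regions, i.e., increases specialization, regardless of which direction the weights move. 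No comparative statics on $(\lambda^A_*,\lambda^B_*)$ or on the per-group weight ratios are needed. If you want to salvage your write-up, replace Step 2 with this invariance observation; the ordering $\tilde\theta_{j,t+1}\ge\check\theta_{j,t+1}=\theta^{\#}_{j,t}$ that you do establish correctly is consistent with it but is not the load-bearing step.
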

\begin{proof}
    Proof in \Cref{proof:specialization}.
\end{proof}

\section{Competing firms}\label{sec:competitive}

In this section I extend the model to include multiple firms that compete to hire workers. The primary purpose of this section is to show that asymmetric information about worker preferences, rather than the firm's market power, is the main driver of the results. 

\vst
\noindent \textbf{Setup:} The production process, in which task are completed using labor or machines, is the same as in the single-firm case. The main difference is that here tasks are intermediate goods produced by competing firms, and sold to a final-goods producer. Firms must now compete for workers by offering more attractive allocations of tasks. 

\vst
\noindent\textbf{Prices and production:} Each firm looks like the monopsony producer from above: a firm employs a fixed set of workers and can complete tasks using workers or machines. For simplicity, assume that the total volume of computing resources available in the economy is $\bar{c}$. Equilibrium outcomes are unaffected by the distribution across firms of computing resources and workers.\footnote{We could alternatively assume that there is a single firm that has access to the machine, and that this firm has a convex cost of computation $\gamma(c)$, as in the single-firm model.} 

There are two main differences between the competitive and monopsony cases. First, firms in the competitive market sell completed tasks to a final goods producer at prices $\lambda^A$ and $\lambda^B$, which they take as given. More importantly, firms also compete for workers, so the problem is one of competing principals. We assume that workers' wages depend only on their group identity, and firms take these wages as given. Thus wages are not part of a firm's strategy. Rather, each firm offers a mechanism to each group of workers, consisting of functions $(a_j,b_j): \Theta_j \rightarrow \R^2_+$ specifying their task allocations if hired.\footnote{A richer model would relax the wage-taking assumption, which would allow for a more nuanced appraisal of wage dynamics. Assuming that firms take wages as given allows us to focus on competition on the task-allocation dimension. As with the single-firm model, the analysis is better suited to short-run changes, or settings with wage rigidities.} In order to hire a worker, a firm must offer them a workload that is no higher than that offered by other firms. Formally, this manifests as a participation constraint in the firm's problem, determined endogenously by the set of mechanisms offered by other firms in equilibrium. Each firm thus solves a mechanism-design problem with endogenous participation and type-dependent outside options. In principle, ensuring full participation is not without loss of optimality in such problems: a firm could potentially choose a mechanism in which some workers are not hired, so as to save on wages. Following \cite{jullien2000participation}, we can represent the workers' participation decision as part of the mechanism: let $x_j(\theta^A,\theta^B) \in \{0,1\}$ be the probability that type $(\theta^A,\theta^B)$ from group $j$ participates. 

\vst
\noindent\textbf{Equilibrium:} I focus on symmetric equilibria in which each firm offers the same mechanism to each group, and henceforth drop the ``symmetric'' qualifier. To emphasize the production side of the problem, I focus on equilibria in the intermediate goods market given a fixed demand for tasks of $(\mu^A,\mu^B)$. If there are multiple equilibria, I select those with the lowest production cost
\[
a_m \pi_m^A + b_m \pi_m^B + \sum_{j = 1}^J \E_j\left[ x_j(\theta^A,\theta^B)\left(a_j(\theta^A,\theta^B)\pi_j^A + b_j(\theta^A,\theta^B)\pi_j^B \right) \right].
\]
An equilibrium consists of output prices $\lambda_*^A$, $\lambda_*^B$, wages $(\omega_j)_{j=1}^J$, an allocation of machines to tasks $(a^*_m, b^*_m)$, and mechanisms $(x^*_j,a^*_j,b^*_j)_{j=1}^J$ for each group, such that:

\begin{enumerate}
    \item \textbf{Firms and workers optimize in the labor market.} Optimization by firms means that the mechanism offered to group $j$ must maximize the firm's profits among all feasible mechanisms. Optimization by workers means that a feasible mechanism is one which is incentive compatible and individually rational for every worker who participates.  Then for each $j \in \mathcal{J}$ the mechanism $(x^*_j,a^*_j,b^*_j)$ must solve 
    \begin{align}
   \max_{(x_j, a_j,b_j)} &\int_{\Theta_j} x_j(\theta^A,\theta^B)\left( (\lambda^A - \pi_j^A) a_j(\theta^A,\theta^B) + (\lambda^B - \pi_j^B ) b_j(\theta^A,\theta^B) - \omega_j\right)dF_j(\theta^A,\theta^B) \notag \\
    s.t. \quad &\theta^A a_j(\theta^A,\theta^B) + \theta^B b_j(\theta^A,\theta^B) \tag{IC} \\ 
        & \quad \leq \theta^A a_j(\hat{\theta}^A, \hat{\theta}^B) + \theta^B b_j(\hat{\theta}^A,\hat{\theta}^B) \quad\quad\quad\quad \ \text{if} \ \ x_j(\hat{\theta}^A,\hat{\theta}^B) = 1 \notag \\
        &\theta^A a_j(\theta^A,\theta^B) + \theta^B b_j(\theta^A,\theta^B)  \leq r_j(\theta^A,\theta^B) \quad \text{if} \ \ x_j(\theta^A,\theta^B) = 1 \tag{IR} \\
        & \omega_j - \theta^A a_j(\theta^A,\theta^B) - \theta^B b_j(\theta^A,\theta^B)  \tag{C} \\
        & \quad \geq \max\left\{ x^*_j(\theta^A,\theta^B) \left( \omega_j - \theta^A a^*_j(\theta^A,\theta^B) - \theta^B b^*_j(\theta^A,\theta^B) \right), \ U \right\}\  \text{if} \  x_j(\theta^A,\theta^B) =1\notag 
    \end{align}
    The (IC) constraint says that workers report truthfully (we can set  $a_j(\theta^A,\theta^B) = b_j(\theta^A,\theta^B) = 0$ if $x_j(\theta^A,\theta^B) = 0$). The (IR) constraint represents the internal constraints on the workload for workers who are hired, as in the monopsony case. We distinguish this from the competition constraint (C), which represents the worker's participation decision given the equilibrium contracts $(a^*_j,b^*_j)$ and the payoff from being unemployed, denoted by $U$.  
 
    \item \textbf{Machine use is optimal}. The allocation of machines to tasks, $(a_m,b_m)$, solves
    \[
    \max_{(a_m,b_m)} (\lambda^A - \pi_m^A)a_m + (\lambda^B - \pi_m^B)b_m \ \ s.t. \ \ a_m c^A + b_m c^B \leq \bar{c}.
    \]
    \item \textbf{The intermediate-goods market clears.} 
    \[
    a_m + \sum_{j \in \mathcal{J}} \int_{\Theta_j} a_j(\theta^A,\theta^B) dF_j(\theta^A,\theta^B) \geq \mu^A \quad \text{and} \quad b_m + \sum_{j \in \mathcal{J}} \int_{\Theta_j} b_j(\theta^A,\theta^B) dF_j(\theta^A,\theta^B) \geq \mu^B
    \]
    Note that we allow for free disposal of intermediate goods. 
\end{enumerate}

We say that the problem is feasible if there exists a set of mechanisms $(x_j,a_j,b_j)_{j \in \mathcal{J}}$ which satisfy (IC), (IR), and market clearing. Thus feasibility does not require firm optimality or impose the competition or exclusion conditions, it simply means that it is possible to produce the desired output given the workers' outside options and the distribution of preferences. In other words, the output levels could be produced by a monopolistic firm with access to the same resources. This is clearly a necessary condition for equilibrium existence. 

To verify existence, and characterize minimum-cost equilibria, we first solve for a candidate equilibrium in which  $x^*_j(\theta^A,\theta^B) = 1$ for all $j \in \mathcal{J}$ and $(\theta^A,\theta^B) \in \Theta_j$. This is in fact an equilibrium if and only if it gives each worker a payoff which is better than the outside option of unemployment. For simplicity, we focus on this case.

\begin{theorem}\label{thm:competitive}
    Assume the problem is feasible and unemployment is sufficiently unattractive.\footnote{Precisely how low the value of unemployment, $U$, should be is specified in the proof of \Cref{thm:competitive}.} Then there exists a minimum cost equilibrium that features full participation. In particular $a^*_m,b^*_m, (a^*_j,b^*_j)_{j=1}^J$ are the solutions to the monopsony program in \cref{prog:1}, with the market-clearing conditions replaced by weak inequalities, and the prices $\lambda^A_*,\lambda^B_* \geq 0$ are the multipliers on these constraints. 
\end{theorem}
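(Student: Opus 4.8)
\emph{Construction.} The plan is to read the competitive equilibrium directly off the monopsony solution. First I would apply \Cref{thm:outer}, together with the outer minimization over the machine allocation $(a_m,b_m)$, to the program in \cref{prog:1} with the two market-clearing equalities relaxed to ``$\ge$'': feasibility gives a minimizer $(a^*_m,b^*_m,(a^*_j,b^*_j)_{j=1}^J)$, strong duality produces optimal multipliers $\lambda^A_*,\lambda^B_*$ on these constraints, and these are non-negative since they multiply ``$\ge$'' constraints in a minimization. By \Cref{lem:effective_R} and \Cref{cor:outer}, each $(a^*_j,b^*_j)$ may be taken to depend only on $\theta=\theta^A/\theta^B$ and to solve the inner program for weights $(\lambda^A_*-\pi^A_j,\lambda^B_*-\pi^B_j)$ with effective reservation $R_j$; write $s^*_j(\theta):=(\lambda^A_*-\pi^A_j)a^*_j(\theta)+(\lambda^B_*-\pi^B_j)b^*_j(\theta)$ for the per-worker surplus and $R^*_j(\theta):=\theta a^*_j(\theta)+b^*_j(\theta)$, which is non-decreasing by the envelope theorem.

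\emph{The candidate.} I would then take output prices $(\lambda^A_*,\lambda^B_*)\ge 0$, machine use $(a^*_m,b^*_m)$, mechanisms $x^*_j\equiv 1$ and $(a^*_j,b^*_j)$, and wages $\omega_j:=\inf_\theta s^*_j(\theta)$ (possibly negative), chosen so that every retained worker is weakly profitable, $s^*_j(\theta)-\omega_j\ge 0$. The hypothesis that unemployment is sufficiently unattractive enters here: it lets me fix $U$ small enough that $\omega_j-\theta^A a^*_j(\theta)-\theta^B b^*_j(\theta)\ge U$ for all $j$ and $\theta$ (possible since workloads are bounded), which is exactly the condition identified just before the theorem as equivalent to full participation being an equilibrium.

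\emph{Verifying the equilibrium conditions.} Market clearing is inherited from the monopsony solution. Machine optimality follows by matching first-order conditions: the KKT conditions of the machine subproblem equate the marginal value of computation, valued at $(\lambda^A_*,\lambda^B_*)$, to the multiplier on the computation budget (equivalently $\gamma'$ under the convex-cost formulation in the footnote), which are precisely the KKT conditions of $\max_{a_m,b_m\ge 0}(\lambda^A_*-\pi^A_m)a_m+(\lambda^B_*-\pi^B_m)b_m$ subject to $c^A a_m+c^B b_m\le\bar c$. For firm optimality, with $x^*_j\equiv 1$ and $U$ small the competition constraint (C) for a served type collapses to $\theta^A a_j(\theta)+\theta^B b_j(\theta)\le\theta^A a^*_j(\theta)+\theta^B b^*_j(\theta)$, which is tighter than (IR) because the equilibrium workload never exceeds $r_j$; restricting to full-participation mechanisms and using the reduction behind \Cref{lem:effective_R} with effective reservation $R^*_j$, the firm's problem is the inner program for $(\lambda^A_*-\pi^A_j,\lambda^B_*-\pi^B_j)$ with reservation $R^*_j$, shifted by the constant $-\omega_j$. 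Since $R^*_j\le R_j$, the monopsony mechanism is feasible for this tighter program and attains the same value it did with reservation $R_j$, while any $R^*_j$-feasible mechanism is $R_j$-feasible and hence attains no more; so $(a^*_j,b^*_j)$ is optimal among full-participation mechanisms.

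\emph{The main obstacle.} The delicate step is ruling out deviations that decline to hire some workers: excluding a set of types relaxes the incentive constraints among those retained and could a priori let the firm extract more surplus from them, so one cannot simply invoke ``every worker is profitable.'' The plan is to combine the wage choice with the explicit form of the inner-program solution (\Cref{thm:concave}, \Cref{prop:concave_relax}): because the monopsony mechanism already solves the inner program optimally over the entire type interval, the additional surplus this relaxation can create on the retained types cannot exceed the profit $\int(s^*_j(\theta)-\omega_j)\,dF_j(\theta)\ge 0$ forgone on the excluded ones, so full participation remains a best response and the candidate is an equilibrium. Minimality is then immediate: the candidate's production cost equals the value of \cref{prog:1} with ``$\ge$,'' and any equilibrium allocation satisfies the same market-clearing inequalities, hence is feasible for that program and has production cost at least its value.
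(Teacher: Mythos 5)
Your construction coincides with the paper's: you take the monopsony solution of \cref{prog:1} with the market-clearing equalities relaxed to inequalities, use the resulting multipliers as output prices, set the wage $\omega_j=\min_\theta\left[(\lambda^A_*-\pi^A_j)a^*_j(\theta)+(\lambda^B_*-\pi^B_j)b^*_j(\theta)\right]$, and argue that under full participation the firm's problem with constraint (C) is a tightening of the monopsony inner program for which $(a^*_j,b^*_j)$ remains feasible and hence optimal. That part, and the cost-minimality argument via the lower-bound property of the relaxed program, track the paper's proof closely and are fine.

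The gap is exactly at the step you flag as ``the main obstacle.'' You correctly identify that exclusion deviations are the delicate issue, but your proposed resolution is not an argument: the claim that ``the additional surplus this relaxation can create on the retained types cannot exceed the profit forgone on the excluded ones'' is precisely the statement that no exclusion deviation is profitable, restated as a surplus-accounting inequality, and appealing to \Cref{thm:concave} and \Cref{prop:concave_relax} does not deliver it, since those results characterize the inner program with IC imposed on the \emph{whole} type interval, not the relaxed program with IC imposed only among an arbitrary retained set. The paper closes this gap with a direct modification argument that you should reproduce: given any feasible deviation $(x_j,a_j,b_j)$ with partial participation, reinstate every excluded type by handing them their equilibrium allocation $(a^*_j(\theta),b^*_j(\theta))$. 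This weakly raises profit because the wage was chosen so that $s^*_j(\theta)-\omega_j\geq 0$ pointwise; it preserves (IR) because $(a^*_j,b^*_j)$ is IR; it preserves (C) for the reinstated types provided $U$ is low enough; and---this is the key step your counting argument misses---it preserves (IC) for the originally retained types because constraint (C) forces their deviation workload to be at most their equilibrium workload, which by incentive compatibility of $(a^*_j,b^*_j)$ is at most the workload from mimicking any reinstated type's equilibrium allocation. Hence every partial-participation contract is weakly dominated by a full-participation one, and the reduction to the full-participation case (which you do handle) completes the proof. Without this reinstatement lemma, your proof does not go through.
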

\begin{proof}
    Proof in \Cref{proof:competitive}
\end{proof}

In other words, the outcomes with competing firms are equivalent to those when the market is controlled by a single firm. We replace the market-clearing conditions with weak inequalities because we assume free disposal in the competitive market. This implies that the prices must be non-negative, and the aggregate labor allocation will belong to the equilibrium set $\tilde{\mathcal{E}}$ (as defined for the asymmetric case in \cref{eq:tilde_E}).

\section{Conclusion}\label{sec:conclusion}

This paper shows how workers’ private information about their preferences over tasks shapes optimal task allocation. The main lesson is that asymmetric information matters for understanding the labor-market implications of technological change: there is a sharp divide between patterns of automation in the high- and low-asymmetry cases. The simple model studied here omits a number of important real-world factors, such as strategic wage setting by firms, demand effects, and production using more than two tasks. Nonetheless, the results suggest that models which incorporate these factors without accounting for workers' private information may miss important details about the interaction between automation and task allocation. Further work on modeling asymmetric information in a richer task-allocation framework would be valuable for generating empirically testable predictions. Empirical evidence documenting the degree of preference heterogeneity among workers would also be useful.

\section*{Appendix}
\appendix

\section{The inner program for general \texorpdfstring{$R$}{}}\label{sec:general}

\subsection{The convex case}

Suppose now that $R$ is convex and non-decreasing. We solve the program by mapping it back to the concave case. Recall the IR constraint
\[
\und{u} + \int_{\und{\theta}}^{\theta} a(z)dz \leq R(\theta).
\]
The left hand side is a concave function since $a$ must be non-increasing. Thus by the separating hyperplane theorem, for any feasible $a,\und{u}$ there exists a point $\theta' \in [\und{\theta},\bar{\theta}]$ such that
\[
\und{u} + \int_{\und{\theta}}^\theta a(z)dz \leq R(\theta') + (\theta - \theta')R'(\theta') 
\]
for all $\theta$. This is depicted in \Cref{fig:welfare_convex}. In other words, we can replace $R$ with one of its supporting hyperplanes and and obtain the same value of the program. The question is which supporting hyperplane. To answer this, we need to consider replacing $R$ with $\hat{R}_{\theta'}(\theta) := R(\theta') + (\theta - \theta')R'(\theta')$, solve for the optimal mechanism given this constraint, and then optimize over $\theta'$. This problem is greatly simplified by the solution the concave case described by \Cref{prop:concave_relax} and \Cref{thm:concave}. First, let $\hat{t} = \max\{ R(\theta)/\theta \ : \ \theta \in [\und{\theta},\bar{\theta}] \}$ and let $\hat{\theta} = \max\{\theta \in [\und{\theta},\bar{\theta}] : R(\theta)/\theta = \hat{t} \}$.

\begin{lemma}\label{lem:cprime_upperbound}
    There is an optimal $\theta' \leq \hat{\theta}$. 
\end{lemma}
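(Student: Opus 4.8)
My plan is to build on the affine reduction just set up: for $\theta'\in[\und\theta,\bar\theta]$ let $v(\theta')$ denote the value of the inner program when the reservation function $R$ is replaced by its tangent line $\hat R_{\theta'}$. The separating-hyperplane argument shows that the value of the program with reservation $R$ equals $\max_{\theta'\in[\und\theta,\bar\theta]}v(\theta')$, while $\hat R_{\theta'}\le R$ forces $v(\theta')\le$ that value for every $\theta'$; so it is enough to exhibit a maximizer $\theta'\le\hat\theta$, and I will do this by locating the maximizer of $v$. (Incidentally, convexity of $R$ makes $\theta\mapsto R(\theta)/\theta$ quasiconvex, so $\hat\theta$ is in fact an endpoint of $[\und\theta,\bar\theta]$; when $\hat\theta=\bar\theta$ the statement is vacuous, so the content is the case $R(\theta)/\theta\le R(\und\theta)/\und\theta$ for all $\theta$, i.e.\ $\hat\theta=\und\theta$.)

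Next I would make $v$ explicit. The tangent $\hat R_{\theta'}$ is affine with slope $m(\theta')=R'(\theta')$, non-decreasing in $\theta'$, and intercept $\beta(\theta')=R(\theta')-\theta'R'(\theta')$, non-increasing since $\tfrac{d}{d\theta'}\beta(\theta')=-\theta'R''(\theta')\le0$. For an affine outside option $\ell(\theta)=m\theta+\beta$ with $\beta>0$, the ratio $\ell(\theta)/\theta$ is strictly decreasing, so $\check\theta=\und\theta$; \Cref{prop:concave_relax} then places the optimal threshold in $[\und\theta^{\#},\bar\theta^{\#}]$---a range that depends only on $\bar{\mathcal W}$ and $w^B$, not on $\ell$---and yields the closed form $v=mK_1+\beta K_2$ with $K_1=\bar{\mathcal W}(\und\theta^*)+w^B\und\theta$ and $K_2=\max_{\tilde\theta\le\und\theta^*}\tfrac{\bar{\mathcal W}(\tilde\theta)+w^B\und\theta}{\tilde\theta}$. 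Substituting $m=R'(\theta')$ and $\beta=R(\theta')-\theta'R'(\theta')$ shows that, on the region $\{\theta':\beta(\theta')>0\}$, the function $v$ is differentiable with $v'(\theta')=R''(\theta')\bigl(K_1-\theta'K_2\bigr)$, hence single-peaked there with peak at $\theta'=K_1/K_2$. The boundary tangent with $\beta(\theta')=0$ (which passes through the origin) and any tangents with $\beta(\theta')<0$ I would handle by a separate, easier estimate bounding their value by that of neighbouring tangents with non-negative intercept.

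The remaining---and, I expect, hardest---step is to show that the peak $K_1/K_2$ lies at or before $\hat\theta$ (together with the trivial subcase $K_1/K_2\le\und\theta$, which forces $\theta'=\und\theta$, and the degenerate subcase $w^A=0$ or $w^B=0$, where any $\tilde\theta\ge\check\theta$ is optimal so the conclusion is immediate). This is where the defining maximality of $\theta\mapsto R(\theta)/\theta$ at $\hat\theta$ has to be married to the explicit form of $K_1$ and $K_2$: concretely, I would compare, for $\theta'>\hat\theta$, the tangent $\hat R_{\theta'}$ with the tangent at $\hat\theta$ on the sub-interval $[\und\theta,\und\theta^*]$ that actually feeds the value formula, and track the sign of $\bar W$ there to conclude that pushing the tangent point past $\hat\theta$ cannot raise $v$.
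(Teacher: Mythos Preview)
Your route is quite different from the paper's. The paper never computes $v$ explicitly; it argues by feasibility: for any $\theta'>\hat\theta$ it asserts $R'(\theta')>\hat t$, whence $\hat R_{\theta'}(\theta)/\theta$ is increasing in $\theta$, which forces the constraint $\und u\le\hat R_{\theta'}(\und\theta)$ to bind in the solution under $\hat R_{\theta'}$; that solution is then claimed to remain feasible when $\theta'$ is lowered to $\hat\theta$. Your approach instead makes $v$ explicit via \Cref{prop:concave_relax} and locates its peak, which is more informative.

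Your formula $v(\theta')=m(\theta')K_1+\beta(\theta')K_2$ on $\{\beta>0\}$ and the peak at $K_1/K_2$ are correct. The trouble is the ``remaining hardest step'': showing $K_1/K_2\le\hat\theta$ is not merely hard, it is false. Take $F$ uniform on $[\und\theta,\bar\theta]=[1,2]$, $w^A=1$, $w^B=0$, and $R(\theta)=1+(\theta-1)^2/10$ (convex, non-decreasing, positive). Then $R(\theta)/\theta$ is strictly decreasing on $[1,2]$, so $\hat\theta=\und\theta=1$, while $\und\theta^*=\bar\theta^{\#}=2$, $K_1=\mathcal W(2)=1$, $K_2=\tfrac12$, hence $K_1/K_2=2>\hat\theta$. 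Every tangent on $[1,2]$ has positive intercept, so your formula applies throughout and gives $v(\theta')=\hat R_{\theta'}(2)/2$, strictly increasing with $v(1)=\tfrac12<v(2)=0.55$; the unique maximizer is $\theta'=2$, and the associated mechanism $a\equiv0.55$, $\und u=0.55$ is feasible for the original $R$ (since $0.55\,\theta\le R(\theta)$, with equality at $\theta=2$) but \emph{infeasible} for $\hat R_{\hat\theta}\equiv1$. Thus no optimal $\theta'$ lies at or below $\hat\theta$. The same example defeats the first sentence of the paper's proof (here $R'(\theta')\le0.2<1=\hat t$ for every $\theta'\in(1,2]$), so the discrepancy is with the lemma as stated, not with your method; the downstream conclusion the paper actually needs---that the optimum lands in the $\check\theta\le\bar\theta^{\#}$ case of \Cref{prop:concave_relax}---still holds in the example, but not for the reason the lemma gives.
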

\begin{proof}
    If $\theta' > \hat{\theta}$ then by convexity of $R$ we have $R'(\theta') > \hat{t}$. But then $\hat{R}_{\theta'}(\und{\theta})/\und{\theta} \leq \hat{R}_{\theta'}(\theta)/\theta$ for all $\theta \in [\und{\theta},\bar{\theta}]$. This implies that in the solution to the program under $\hat{R}_{\theta'}$ the constraint $\und{u} \leq R_{\theta'}(\und{\theta})$ will bind. In other words, we are in the case with $\check{\theta} > \bar{\theta}^{\#}$ from \Cref{prop:concave_relax}. Let $(a,\und{u})$ be the solution. Then if we reduce $\theta'$ to $\hat{\theta}$, $(a,\und{u})$ remains feasible. 
\end{proof}

\begin{figure}
    \centering
    \begin{tikzpicture}
        \draw (0,0) -- (0,5);
        \draw (0,0) -- (10,0);

        \draw[smooth,domain=1:10,samples=100] plot (\x,{1.5 + (0.2*(\x-0.7))^(2)}) node[right] {$R$};
        
        \draw (0,0) ++(0,0.05) -- ++(0,-0.10) node[below] {$0$};
        \draw (1,0) ++(0,0.05) -- ++(0,-0.10) node[below] {$\und{\theta}$};
        \draw (10,0) ++(0,0.05) -- ++(0,-0.10) node[below] {$\bar{\theta}$};
        \draw (3,0) ++(0,0.05) -- ++(0,-0.10) node[below] {$\bar{\theta}^{\#}$};
        \draw (7,0) ++(0,0.05) -- ++(0,-0.10) node[below] {$\und{\theta}^*$};
        \draw (4,0) ++(0,0.05) -- ++(0,-0.10) node[below] {$\theta'$};

        \draw[smooth,dashed,domain=1:10,samples=100] plot (\x,{1.5 + (0.2*(4-0.7))^(2) +(\x - 4)*(0.4)*(0.2*(4 - 0.7))  - (0.15*(\x-4))^(2)})  node[right] {$\und{u} + \int_{\und{\theta}}^\theta a(z)dz$};

        \draw[thick, smooth,dotted,domain=1:10,samples=100] plot (\x,{1.5 + (0.2*(4-0.7))^(2) +(\x - 4)*(0.4)*(0.2*(4 - 0.7)) });

    \end{tikzpicture}
    \caption{Welfare from a feasible mechanism in the convex case}
    \label{fig:welfare_convex}
\end{figure}
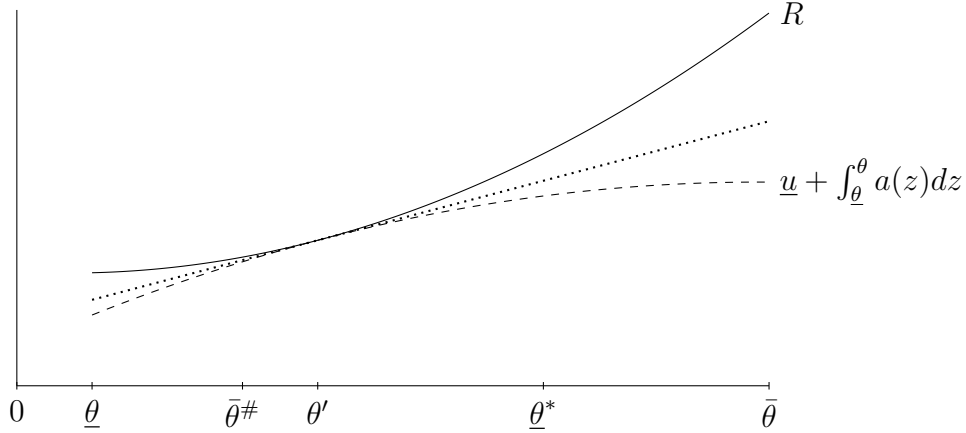

If $w^A,w^B \leq 0$, so that $\mathcal{W}(\und{\theta}^*) + w^B\und{\theta} \leq 0$, then any choice of $\theta'$ is optimal, and the solution is unchanged from the concave case. Otherwise, from \Cref{lem:cprime_upperbound} we can see that we will always choose $\theta'$ such that the solution to the relaxed problem is in the $\check{\theta} \leq \bar{\theta}^{\#}$ case from \Cref{prop:concave_relax}. Because $\bar{\mathcal{W}}(\bar{\theta}^{\#}) = \mathcal{W}(\bar{\theta}^{\#})$ and $\bar{\mathcal{W}}(\und{\theta}^*) = \mathcal{W}(\und{\theta}^*)$, the value of the program is 
    \begin{align}
    &\frac{\hat{R}_{\theta'}(\bar{\theta}^{\#})}{\bar{\theta}^{\#}} \bar{\mathcal{W}}(\bar{\theta}^{\#}) + \int_{\bar{\theta}^{\#}}^{\und{\theta}^*} \bar{W}(z)\hat{R}_{\theta'}'(z)dz + w^B\und{\theta} \frac{\hat{R}_{\theta'}(\bar{\theta}^{\#})}{\bar{\theta}^{\#}} \notag \\
    &= \frac{\hat{R}_{\theta'}(\bar{\theta}^{\#})}{\bar{\theta}^{\#}} \mathcal{W}(\bar{\theta}^{\#}) +  \left(\mathcal{W}(\und{\theta}^*) - \mathcal{W}(\bar{\theta}^{\#})\right)R'(\theta') + w^B\und{\theta} \frac{\hat{R}_{\theta'}(\bar{\theta}^{\#})}{\bar{\theta}^{\#}} \label{eq:theta_prime}
    \end{align}
which we need to maximize over $\theta'$. 

\begin{theorem}\label{thm:convex}
    If $R$ is convex then the solutions to the inner program are given by \Cref{thm:concave} with $R$ replaced by the affine function $\hat{R}_{\theta'}$, where $\theta'$ is arbitrary if $w^A,w^B \leq 0$, and otherwise
\begin{equation*}
\theta' = 
    \begin{cases}
        \hat{\theta} &\text{if } \bar{\theta}^{\#} \frac{\mathcal{W}(\und{\theta}^*) + w^B\und{\theta}}{\mathcal{W}(\bar{\theta}^{\#}) + w^B\und{\theta}} \geq \hat{\theta} \\
        \bar{\theta}^{\#} \frac{\mathcal{W}(\und{\theta}^*) + w^B\und{\theta}}{\mathcal{W}(\bar{\theta}^{\#}) + w^B\und{\theta}} & \text{otherwise}
    \end{cases}
\end{equation*} 
\end{theorem}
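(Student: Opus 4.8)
The plan is to build on the reduction already set up in the discussion preceding the statement. By the separating-hyperplane argument, the value of the inner program with convex $R$ equals $\sup_{\theta'} V(\theta')$, where $V(\theta')$ is the value of the inner program with $R$ replaced by the affine reservation function $\hat R_{\theta'}(\theta) = R(\theta') + (\theta-\theta')R'(\theta')$: the inequality ``$\le$'' is the hyperplane construction, and ``$\ge$'' holds because $\hat R_{\theta'}\le R$ pointwise on $[\und\theta,\bar\theta]$ by convexity, so any mechanism feasible under $\hat R_{\theta'}$ is feasible under $R$ with the same objective. Since $\hat R_{\theta'}$ is affine and hence concave, \Cref{thm:concave} (via \Cref{prop:concave_relax}) delivers both $V(\theta')$ and the optimal mechanism at each fixed $\theta'$. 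It therefore suffices to (i) maximize $V(\theta')$ over $\theta'$, and then (ii) read off the mechanism from \Cref{thm:concave} at the optimizer; I would carry out (i) and (ii) in that order.

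First I would dispose of the degenerate case: if $w^A,w^B\le 0$ then by the first bullet of \Cref{prop:concave_relax} the affine problem has value $0$ for every $\theta'$, so any $\theta'$ is optimal, which is the first clause. So assume $\neg(w^A,w^B\le 0)$. By \Cref{lem:cprime_upperbound} there is an optimal $\theta'\le\hat\theta$. The next (and, I expect, hardest) step is to show that an optimal $\theta'$ can be taken so that the affine problem $\hat R_{\theta'}$ falls in the $\check\theta\le\bar\theta^{\#}$ branch of \Cref{prop:concave_relax}, where $V$ is given in closed form by \eqref{eq:theta_prime}. The idea mirrors the proof of \Cref{lem:cprime_upperbound}: whenever $\hat R_{\theta'}$ lands in the other branch, the IR constraint at $\und\theta$ is binding, and lowering $\theta'$ relaxes that constraint without hurting the objective, which drives the optimum into the $\check\theta\le\bar\theta^{\#}$ branch (using that $\theta\mapsto R(\theta)/\theta$ is quasiconvex for convex $R$, so its maximizer $\hat\theta$ is an endpoint of $[\und\theta,\bar\theta]$). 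The bookkeeping across the branch boundary — where $\check\theta(\hat R_{\theta'})$ is discontinuous in $\theta'$ — together with checking that the chosen tangent line $\hat R_{\theta'}$ remains a valid (non-negative) reservation function, is where I expect the real work to be; the rest is calculus.

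Granting that, write $u_0(\theta') := R(\theta') - \theta'R'(\theta')$ for the intercept of the tangent line, so $\hat R_{\theta'}(\bar\theta^{\#})/\bar\theta^{\#} = R'(\theta') + u_0(\theta')/\bar\theta^{\#}$, and \eqref{eq:theta_prime} collapses to
\[
V(\theta') = \frac{u_0(\theta')}{\bar\theta^{\#}}\bigl(\mathcal W(\bar\theta^{\#}) + w^B\und\theta\bigr) + R'(\theta')\bigl(\mathcal W(\und\theta^*) + w^B\und\theta\bigr).
\]
Convexity of $R$ gives $\tfrac{d}{d\theta'}R'(\theta') = R''(\theta')\ge 0$ and $\tfrac{d}{d\theta'}u_0(\theta') = -\theta'R''(\theta')\le 0$, whence
\[
V'(\theta') = R''(\theta')\Bigl[\mathcal W(\und\theta^*) + w^B\und\theta - \tfrac{\theta'}{\bar\theta^{\#}}\bigl(\mathcal W(\bar\theta^{\#}) + w^B\und\theta\bigr)\Bigr].
\]
The tangency characterization of $\bar\theta^{\#}$ — the dashed line in \Cref{fig:tilde_c} supports $\bar{\mathcal W}$ at $\bar\theta^{\#}$ and passes through $(0,-w^B\und\theta)$ — yields $\mathcal W(\bar\theta^{\#}) + w^B\und\theta = \bar\theta^{\#}\bar W(\bar\theta^{\#})\ge 0$, strictly positive except in a boundary case (where $\bar\theta^{\#}=\und\theta^*$) that I would handle separately. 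When it is positive, the bracket is decreasing in $\theta'$ and vanishes exactly at $\theta^\circ := \bar\theta^{\#}\frac{\mathcal W(\und\theta^*) + w^B\und\theta}{\mathcal W(\bar\theta^{\#}) + w^B\und\theta}$, so $V$ is single-peaked at $\theta^\circ$ (strictly so where $R''>0$, flat where $R''=0$, which is consistent with the solution set being a segment). Intersecting with the constraint $\theta'\le\hat\theta$ from \Cref{lem:cprime_upperbound} gives the maximizer $\theta' = \min\{\hat\theta,\theta^\circ\}$, i.e.\ $\theta'=\hat\theta$ when $\theta^\circ\ge\hat\theta$ and $\theta'=\theta^\circ$ otherwise, which is precisely the formula in the statement. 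Finally, invoking \Cref{thm:concave} with $R$ replaced by the affine function $\hat R_{\theta'}$ at this $\theta'$ produces the claimed family of optimal mechanisms.
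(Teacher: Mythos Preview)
Your proposal is correct and follows essentially the same route as the paper: write the value $V(\theta')$ using \eqref{eq:theta_prime}, differentiate in $\theta'$, solve the linear first-order condition for $\theta^{\circ}=\bar\theta^{\#}\frac{\mathcal W(\und\theta^*)+w^B\und\theta}{\mathcal W(\bar\theta^{\#})+w^B\und\theta}$, and truncate at $\hat\theta$ via \Cref{lem:cprime_upperbound}. Your introduction of the intercept $u_0(\theta')=R(\theta')-\theta'R'(\theta')$ makes the algebra cleaner and your explicit single-peakedness check is a small improvement on the paper, which simply asserts that the FOC yields the optimum, but the argument is otherwise the same.
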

\begin{proof}
    Proof in \Cref{proof:convex}.
\end{proof}

A few pertinent observations: First $\theta' = \und{\theta} + \frac{\mathcal{W}(\und{\theta}^*)}{w^B}$ if  $\bar{\theta}^{\#} = \und{\theta}$. Second when $\bar{\theta}^{\#}$ is interior, so that $\bar{\mathcal{W}}(\bar{\theta}^{\#}) = \bar{W}(\bar{\theta}^{\#})\bar{\theta}^{\#} - \und{\theta}w^B$, we have  
\begin{equation*}
\bar{\theta}^{\#} \frac{\bar{\mathcal{W}}(\und{\theta}^*) + w^B\und{\theta}}{\bar{\mathcal{W}}(\bar{\theta}^{\#}) + w^B\und{\theta}} =  \frac{\bar{\mathcal{W}}(\und{\theta}^*) + w^B\und{\theta}}{\bar{W}(\bar{\theta}^{\#})} 
\end{equation*}
In particular, this implies that  $\bar{\theta}^{\#} \leq \theta' \leq \und{\theta}^*$. 

\subsection{The general case}

In the general case, $[\und{\theta}, \bar{\theta}]$ is partitioned into intervals on which $R$ alternates between concave and convex. Let $\mathcal{X}^{cc} = \{X^{cc}_1,\dots,X^{cc}_n\}$ and $\mathcal{X}^{cv} = \{X^{cv}_1,\dots,X^{cv}_n\}$ be the set of concave and convex intervals respectively, with the normalization that the convex intervals are closed in $[\und{\theta},\bar{\theta}]$. 

As in the convex case, we will find the solution by replacing $R$ with some concave lower-bound, which we denote here by $\hat{R}$. While it is difficult to explicitly characterize the optimal $\hat{R}$ in general, we can identify some properties that it must have. 

For any $s \geq 0$ and $i \geq 1$, define 
\[
\tilde{R}_{i,s}(\theta) := \max\{a + s c : a + s x \leq R(x) \ \forall \ x \in X_i^{cv} \}.
\]
So if $s \in R'(X_i^{cv})$ then $\tilde{R}_{i,s}$ is a tangent hyperplane to $R$ at some $\theta \in X_i^{cv}$. Otherwise, if $s > \max R'(X_i^{cv})$ then $\tilde{R}_{i,s}(\theta) = R(\max X_i^{cv}) + s (\theta - \max X_i^{cv})$, and if $s < \min R'(X_i^{cv})$ then  $\tilde{R}_{i,s}(\theta) = R(\min X_i^{cv}) + s (\theta - \min X_i^{cv})$). 

\begin{proposition}\label{prop:general_hatR}
    There is an optimal effective upper bound, $\hat{R}$, with the following properties
    \begin{enumerate}
        \item $\hat{R}(\theta)$ is constant above $\und{\theta}^*$.
        \item For each interval $X^{cv}_i \in \mathcal{X}^{cv}$ there is a slope $s_i \geq 0$ such that $\hat{R}$ is the lower envelope of $\{ R, \tilde{R}_{1,s_1}, \dots, \tilde{R}_{n,s_n}\}$.
        \item $s_i \leq s_j$ for $i > j$. 
        \item $s_i \leq \min\{R(\und{\theta})/\und{\theta}), R(\bar{\theta}^{\#})/\bar{\theta}^{\#} \}$.
    \end{enumerate}
\end{proposition}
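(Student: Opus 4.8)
The plan is to take any optimal concave non-decreasing minorant $\hat{R}\le R$ and modify it in stages --- preserving at each stage concavity, monotonicity, the bound $\hat{R}\le R$, and optimality --- until it has the stated form. Since the value of the inner program is non-decreasing in the reservation function in the pointwise order (enlarging the right-hand side of IR enlarges the feasible set), any concave non-decreasing minorant pointwise above an optimal one is again optimal; by Zorn's lemma (the pointwise supremum along a chain of such minorants is again one of them) there is an optimal $\hat{R}$ that is \emph{maximal} among concave non-decreasing minorants, and I work with such an $\hat{R}$. Part~1 comes first: by \Cref{prop:concave_relax} and \Cref{thm:concave} the value of the inner program and its optimal $a^*$ depend on a concave non-decreasing reservation function only through its restriction to $[\und{\theta},\und{\theta}^*]$, since $\bar{W}$ vanishes on $(\und{\theta}^*,\bar{\theta}^*]$ (there $\bar{\mathcal{W}}$ is at its maximum, hence constant) and $a^*=0$ on $(\bar{\theta}^*,\bar{\theta}]$; replacing $\hat{R}$ on $(\und{\theta}^*,\bar{\theta}]$ by the constant $\hat{R}(\und{\theta}^*)$ therefore does not change the value, keeps $\hat{R}$ non-decreasing and concave (its right derivative only drops, to $0$), and keeps $\hat{R}\le R$ since $R$ is non-decreasing by \Cref{lem:R_nondecreasing}. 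This is part~1, and henceforth only $\hat{R}|_{[\und{\theta},\und{\theta}^*]}$ is relevant.

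For parts~2 and~3 I use maximality. On a convex interval $X_i^{cv}$, maximality forces $\hat{R}$ to be affine: if $\hat{R}$ were strictly concave at an interior point $\theta_0\in X_i^{cv}$, then $\hat{R}(\theta_0)<R(\theta_0)$ (otherwise $\hat{R}=R$ on a neighbourhood of $\theta_0$, forcing $R$ affine there and contradicting maximality of $X_i^{cv}$ in the partition), and replacing $\hat{R}$ on a small window about $\theta_0$ by its own chord strictly raises $\hat{R}$, stays $\le R$ for a small enough window by continuity, and stays globally concave (the chord's slope lies between the one-sided slopes of $\hat{R}$ at the window's endpoints), contradicting maximality. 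Writing $s_i\ge 0$ for the slope of $\hat{R}$ on $X_i^{cv}$, maximality likewise forces this affine piece to be the highest slope-$s_i$ line below $R$ on $X_i^{cv}$, i.e.\ $\hat{R}=\tilde{R}_{i,s_i}$ there; and on a concave interval $\hat{R}$ coincides with $R$ except where it is undercut by some $\tilde{R}_{i,s_i}$ extending in from a neighbouring convex interval (otherwise an affine segment of $\hat{R}$ not pinned by $R$ or by any $\tilde{R}_{i,s_i}$ could be raised). Assembling, $\hat{R}=\min\{R,\tilde{R}_{1,s_1},\dots,\tilde{R}_{n,s_n}\}$, which is part~2; and since this pointwise minimum is concave only if the lines' slopes are non-increasing from left to right, $s_i\le s_j$ whenever $i>j$ (part~3), an out-of-order pair producing an upward kink in $\hat{R}$.

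Part~4 is the general-case analogue of \Cref{lem:cprime_upperbound}. If $s_i>R(\und{\theta})/\und{\theta}$ then, $\hat{R}$ being concave with slope $\ge s_i$ on $[\und{\theta},\min X_i^{cv}]$, the ratio $\hat{R}(\theta)/\theta$ increases just above $\und{\theta}$, which pushes $\check{\theta}$ into the regime $\check{\theta}>\bar{\theta}^{\#}$ of \Cref{prop:concave_relax} where the constraint $\und{u}\le\hat{R}(\und{\theta})$ binds; exactly as in \Cref{lem:cprime_upperbound}, flattening $X_i^{cv}$ down to slope $R(\und{\theta})/\und{\theta}$ leaves the incumbent pair $(a^*,\und{u})$ feasible and so cannot lower the value, so we may take $s_i\le R(\und{\theta})/\und{\theta}$. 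In the complementary regime $\check{\theta}\le\bar{\theta}^{\#}$, the value depends on $\hat{R}$ near $\bar{\theta}^{\#}$ only through the ray slope $\hat{R}(\bar{\theta}^{\#})/\bar{\theta}^{\#}$ and through $\int_{\bar{\theta}^{\#}}^{\und{\theta}^*}\bar{W}\,\hat{R}'$, and a parallel argument --- flattening a convex piece whose slope exceeds $R(\bar{\theta}^{\#})/\bar{\theta}^{\#}$ while keeping $\hat{R}(\bar{\theta}^{\#})$ fixed --- bounds $s_i$ by $R(\bar{\theta}^{\#})/\bar{\theta}^{\#}$ as well; combining the two bounds gives part~4.

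I expect the main obstacle to be the rigorous version of the maximality step in part~2 --- in particular, showing that an optimal $\hat{R}$ can be taken affine on each convex interval \emph{and} equal to the highest feasible line of its slope, since the perturbations that raise $\hat{R}$, especially across the junctions between adjacent intervals, must preserve global concavity and the bound $\hat{R}\le R$ at the same time, which takes careful bookkeeping. Parts~1, 3, and~4 are comparatively routine given \Cref{prop:concave_relax} and the template of \Cref{lem:cprime_upperbound}.
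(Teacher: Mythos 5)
Your overall architecture for parts 1, 3, and 4 tracks the paper's, and the Zorn's-lemma construction of a maximal optimal concave minorant is a reasonable way to make precise the existence of a "pointwise-undominated" $\hat{R}$ that the paper uses implicitly. But the central perturbation in your proof of part 2 goes the wrong way. You propose to contradict maximality by "replacing $\hat{R}$ on a small window about $\theta_0$ by its own chord," claiming this strictly raises $\hat{R}$. Since $\hat{R}$ is concave, its chord over any window lies \emph{below} its graph, so this replacement strictly \emph{lowers} $\hat{R}$ at $\theta_0$ and contradicts nothing. To raise a strictly concave $\hat{R}$ locally you must splice in a piece of a supporting (tangent) line from above, and arranging for the spliced function to remain globally concave, continuous, and $\leq R$ is exactly the delicate bookkeeping you flag at the end --- so the one step that carries the weight of part 2 is the one that is broken. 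A secondary error in the same passage: the claim that $\hat{R}(\theta_0)=R(\theta_0)$ at an interior point of $X_i^{cv}$ would force $\hat{R}=R$ on a neighbourhood is false; $R-\hat{R}$ is a non-negative convex function and can vanish at an isolated point (e.g.\ $R(\theta)=\theta^2$, $\hat{R}(\theta)=2\theta_0\theta-\theta_0^2-(\theta-\theta_0)^2$), so a strictly concave $\hat{R}$ can touch $R$ inside a convex interval.

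The paper avoids all of this with a one-line separation argument: for any concave $U\leq R$, the set $\{(x,y):x\in X_i^{cv},\ y\geq R(x)\}$ is convex (as $R$ is convex on $X_i^{cv}$) and disjoint in interior from the hypograph of $U$, so a separating line of some slope $s_i$ lies below $R$ on $X_i^{cv}$ and above $U$ everywhere; hence $U\leq\tilde{R}_{i,s_i}$ for each $i$, $U\leq\min\{R,\tilde{R}_{1,s_1},\dots,\tilde{R}_{n,s_n}\}$, and an undominated $\hat{R}$ equals this envelope. I would replace your local-improvement argument with this. Two smaller points: your part 3 (ordered slopes, via the impossibility of an increasing one-sided derivative for a concave function touching $R$ on both intervals) is essentially the contrapositive of the paper's argument and is fine once part 2 is repaired; in part 4, the inference from "$\hat{R}(\theta)/\theta$ increases just above $\und{\theta}$" (i.e.\ $\check{\theta}>\und{\theta}$) to "we are in the regime $\check{\theta}>\bar{\theta}^{\#}$ of \Cref{prop:concave_relax}" is a non sequitur; the paper instead bounds the relevant slopes by noting that optimal $(a,\und{u})$ can be taken with $a(\theta)\leq\min\{R(\und{\theta})/\und{\theta},\,R(\bar{\theta}^{\#})/\bar{\theta}^{\#}\}$, so that raising $s_i$ above this bound cannot enlarge the effectively feasible set.
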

\begin{proof}
    Proof in \Cref{proof:general_hatR}.
\end{proof}
Properties 1 and 4 depend on the structure of the solution for the concave case, identified in \Cref{prop:concave_relax}. In contrast, properties 2 and 3 hold for any pointwise-undominated $\hat{R}$.

\section{Omitted proofs}\label{sec:omitted_proofs}

\subsection{Proof of \texorpdfstring{\Cref{lem:effective_R}}{}}\label{proof:effective_R}

\begin{proof}
    IC and IR are satisfied if and only if
    \begin{align*}
        \theta^A a_j(\theta^A,\theta^B) + \theta^B b_j(\theta^A,\theta^B) &= \min_{(\hat{\theta}^A,\hat{\theta}^B) \in \Theta_j} \theta^A a_j(\hat{\theta}^A,\hat{\theta}^B) + \theta^B b_j(\hat{\theta}^A,\hat{\theta}^B) \\
        & = \theta^B \min_{(\hat{\theta}^A,\hat{\theta}^B) \in \Theta_j} \frac{\theta^A}{\theta^B} a_j(\hat{\theta}^A,\hat{\theta}^B) + b_j(\hat{\theta}^A,\hat{\theta}^B) \\
        & \leq r_j(\theta^A,\theta^B)
    \end{align*}
    where the first line is the definition of IC and the last line is the definition of IR. Notice that if $\tilde{\theta}^A/ \tilde{\theta}^B = \theta^A/\theta^B$ then $\min_{(\hat{\theta}^A,\hat{\theta}^B) \in \Theta_j} \frac{\theta^A}{\theta^B} a_j(\hat{\theta}^A,\hat{\theta}^B) + b_j(\hat{\theta}^A,\hat{\theta}^B) = \min_{(\hat{\theta}^A,\hat{\theta}^B) \in \Theta_j} \frac{\tilde{\theta}^A}{\tilde{\theta}^B} a_j(\hat{\theta}^A,\hat{\theta}^B) + b_j(\hat{\theta}^A,\hat{\theta}^B).$ Then 
    \begin{align*}
    \frac{\theta^A}{\theta^B} a_j(\theta^A,\theta^B) + b_j(\theta^A,\theta^B) &= \frac{\tilde\theta^A}{\tilde\theta^B} a_j(\tilde\theta^A,\tilde\theta^B) + b_j(\tilde\theta^A,\tilde\theta^B) \\
    &\leq \frac{1}{\tilde\theta^B} r_j(\tilde{\theta}^A,\tilde{\theta}^B),
    \end{align*}
    for all $(\theta^A,\theta^B),(\tilde{\theta}^A,\tilde{\theta}^B)$ such that $\tilde{\theta}^A/ \tilde{\theta}^B = \theta^A/\theta^B$, as desired.  
\end{proof}

\subsection{Alternative proof of \texorpdfstring{\Cref{lem:fixed_u}}{}}\label{proof:altfixed_u}

\begin{proof}
Without loss of optimality we restrict attention to upper-semicontinuous $a$. The non-negativity constraint and $a$ non-increasing implies $a(\theta) \leq \frac{\und{u}}{\und{\theta}}$ for all $\theta$. The extreme points of the set of upper-semicontinuous non-decreasing functions from $[\und{\theta},\bar{\theta}] \rightarrow [0,\frac{\und{u}}{\und{\theta}}]$ are step functions of the form $\frac{\und{u}}{\und{\theta}} \mathbbm{1}_{\theta \leq x}$. Therefore by Choquet's theorem (e.g., \cite{phelps2002lectures}) we can write a (upper-semicontinuous, non-increasing) function $a$ as 
\[
a(\theta) = \int_{\und{\theta}}^{\bar{\theta}} \frac{\und{u}}{\und{\theta}} \mathbbm{1}_{\theta \leq x} dG(x)
\]
where $G$ is a cdf on $[\und{\theta},\bar{\theta}]$. Then the firm's objective is 
\begin{align*}
    \int_{\und{\theta}}^{\bar{\theta}} a(\theta) \bar{W}(\theta)d\theta &= \int_{\und{\theta}}^{\bar{\theta}} \left( \int_{\und{\theta}}^{\bar{\theta}} \frac{\und{u}}{\und{\theta}} \mathbbm{1}_{\theta \leq x} dG(x) \right) \bar{W}(\theta)d\theta \\
    & = \frac{\und{u}}{\und{\theta}} \int_{\und{\theta}}^{\bar{\theta}} \bar{\mathcal{W}}(x)dG(x). 
\end{align*}
The non-negativity and non-increasing constraints are implied by the Choquet representation of $a$, so we need only impose the IR constraint $\und{u} + \int_{\und{\theta}}^\theta a(z)dz \leq R(\theta)$. Using the Choquet representation of $a$ this constraint becomes
\begin{align*}
R(\theta) &\geq \und{u} + \frac{\und{u}}{\und{\theta}} \int_{\und{\theta}}^{\bar{\theta}} \left(\min \{x ,\theta \} - \und{\theta} \right) dG(x) \\
&= \frac{\und{u}}{\und{\theta}} \left( \int_{\und{\theta}}^{\theta} x \ dG(x) + \theta(1- G(\theta)) \right)
\end{align*}
The firm's problem is then to choose the distribution $G$ to maximize the objective $\frac{\und{u}}{\und{\theta}} \int_{\und{\theta}}^{\bar{\theta}} \bar{\mathcal{W}}(x)dG(x)$, subject to this constraint. 

Observe that a first-order-stochastic-dominance (FOSD) upward (resp. downward) shift in $G$ tightens (resp. relaxes) the IR constraint for all $\theta$. Since $\bar{\mathcal{W}}$ is strictly decreasing above $\bar{\theta}^*$, this implies that no optimal $G$ can place mass above $\bar{\theta}^*$: shifting mass to $\bar{\theta}^*$ relaxes the IR constraint and improves the objective. 
On the domain $[\und{\theta},\bar{\theta}^*]$ we have $\bar{\mathcal{W}}$ increasing. Note that $\tilde{\theta}$ as defined in \Cref{lem:fixed_u} is the unique point such that the IR constraint binds if $G$ jumps from $0$ to some positive value at $\tilde{\theta}$, i.e., such that
\[
R(\theta) = \frac{\und{u}}{\und{\theta}} \left( \theta G(\theta) + \theta(1 - G(\theta)) \right) = \frac{\und{u}}{\und{\theta}} \theta. 
\]
Then the following distribution first-order stochastically dominates all others on this domain that satisfy the IR constraint: $G(\theta) = 0$ if $\theta \leq \tilde{\theta}$ and $R(\theta) = \frac{\und{u}}{\und{\theta}} \left( \int_{\und{\theta}}^{\theta} x \ dG(x) + \theta(1- G(\theta)) \right)$ for $\theta \in [\tilde{\theta}, \bar{\theta}^*]$. This is therefore a solution. Moreover, the only other distributions which achieve the same value are those which coincide with $G$ outside of $[\und{\theta}^*,\bar{\theta}^*]$, and all such distributions are optimal. These are precisely the solution identified in \Cref{lem:fixed_u}. 
\end{proof}

\subsection{Proof of \texorpdfstring{\Cref{prop:concave_relax}}{}}\label{proof:concave_relax}

\begin{proof}
First, observe that $\mathcal{W}(\und{\theta}^*) + w^B \und{\theta} \leq (<) 0$ if and only if $w^A,w^B \leq (<) 0$.

There are two cases to consider. If $\theta < \und{\theta}^*$ then it is optimal to increase $\theta$ as long as\footnote{$R$ is differentiable almost everywhere by assumption, and we use the left derivative of $R$ at points of non-differentiability.} 
\begin{align*}
&\frac{R'(\theta)\theta - R(\theta)}{\theta^2}\left(\bar{\mathcal{W}}(\theta) + w^B \und{\theta} \right) + \frac{R(\theta)}{\theta} \bar{W}(\theta)  - \bar{W}(\theta) R'(\theta) > 0 \\
& \Leftrightarrow R'(\theta)\left(\bar{\mathcal{W}}(\theta) + w^B \und{\theta} - \theta \bar{W}(\theta) \right) - \frac{R(\theta)}{\theta}\left(\bar{\mathcal{W}}(\theta) + w^B \und{\theta} - \theta \bar{W}(\theta) \right) > 0 \\
& \Leftrightarrow \bar{\mathcal{W}}(\theta) + w^B \und{\theta} - \theta \bar{W}(\theta) < 0 
\end{align*}
where the final equivalence follows from the fact that $R(\theta)/\theta > R'(\theta)$ for all $\theta \geq \check{\theta}$. If $\theta \geq \und{\theta}^*$ then it is optimal to increase $\theta$ as long as
\begin{align*}
    &\frac{R'(\theta)\theta - R(\theta)}{\theta^2}\left(\bar{\mathcal{W}}(\und{\theta}^*) + w^B \und{\theta} \right) > 0 \\
    & \Leftrightarrow \bar{\mathcal{W}}(\und{\theta}^*) + w^B \und{\theta} < 0
\end{align*}
where the equivalence again follows from the fact that $R'(\theta) < R(\theta)/\theta$ above $\check{\theta}$. Notice that if $\bar{\mathcal{W}}(\und{\theta}^*) + w^B \und{\theta} < 0$ then $\bar{\mathcal{W}}(\theta) + w^B \und{\theta} - \theta \bar{W}(\theta) < 0$ for all $\theta \in [\check{\theta},\und{\theta}^*]$. 

    The first case of \Cref{prop:concave_relax} follows from the preceding observations, given that $\max \bar{\mathcal{W}}(\theta) = \max \mathcal{W}(\theta)$. Then it is optimal to increase $\theta \rightarrow \infty$ in the program in \cref{eq:inner_concave_alt}.

    For the second case of \Cref{prop:concave_relax}, note that $\theta > \bar{\theta}^{\#}$ implies $\bar{\mathcal{W}}(\theta) + w^B\und{\theta} > \theta \bar{W}(\theta)$, so it is optimal to decrease $\theta$ in \cref{eq:inner_concave_alt}. Then the solution is determined by whether or not the constraint $\theta \geq \check{\theta}$ binds. 
\end{proof}

\subsection{Proof of \texorpdfstring{\Cref{thm:concave}}{}}\label{proof:concave}

\begin{proof}
    Starting from \Cref{prop:concave_relax}, the proof of \Cref{thm:concave} is nearly identical to the standard ironing argument of \citep{myerson1981optimal}. That is, the value of the program in \cref{prog:relax} is an upper bound on the original program. On an ironing interval $(x,y)$, the constancy of $\bar{\mathcal{W}}$ implies $\int_{x}^y a(\theta)W(\theta)d\theta =\int_{x}^y \tilde{a}(\theta)\bar{W}(\theta)d\theta$, so $a$ achieves the same value as $\tilde{a}$ in the program in \cref{prog:relax}. Moreover $a$ is non-increasing, and $b(\theta) \geq 0$ continues to hold because ironing weakly reduces $a(\und{\theta})$ while leaving $\und{u}$ unchanged. In only remains to verify that $a$ does not violate the IR constraint. This holds because $\int_{\und{\theta}}^\theta a(z) dz \leq \int_{\und{\theta}}^\theta \tilde{a}(z) dz$ for all $\theta \in [\und{\theta},\bar{\theta}]$
\end{proof}

\subsection{Proof of \texorpdfstring{\Cref{thm:convex}}{}}\label{proof:convex}

\begin{proof}
    We can write the objective from \cref{eq:theta_prime} more explicitly as a function of $\theta'$ as
    \[
    \frac{R(\theta') + (\bar{\theta}^{\#} - \theta')R'(\theta')}{\bar{\theta}^{\#}} \mathcal{W}(\bar{\theta}^{\#}) + R'(\theta')\left( \mathcal{W}(\und{\theta}^*) - \mathcal{W}(\bar{\theta}^{\#}) \right) + w^B\und{\theta} \frac{R(\theta') + (\bar{\theta}^{\#} - \theta')R'(\theta')}{\bar{\theta}^{\#}}.
    \]
Suppose $R$ is twice differentiable (if not we can use the left or right derivative of $R'$ and apply the same argument). Then the derivative with respect to $\theta'$ is
\begin{align*}
    &\frac{R''(\theta')(\bar{\theta}^{\#} - \theta')}{\bar{\theta}^{\#}} \mathcal{W}(\bar{\theta}^{\#}) + R''(\theta')\left( \mathcal{W}(\und{\theta}^*) - \mathcal{W}(\bar{\theta}^{\#}) \right) + w^B\und{\theta} \frac{R''(\theta')(\bar{\theta}^{\#} - \theta')}{\bar{\theta}^{\#}} \\ 
    &= R''(\theta')\left( \frac{(\bar{\theta}^{\#} - \theta')}{\bar{\theta}^{\#}} \mathcal{W}(\bar{\theta}^{\#}) + \left( \mathcal{W}(\und{\theta}^*) - \mathcal{W}(\bar{\theta}^{\#}) \right) + w^B\und{\theta} \frac{(\bar{\theta}^{\#} - \theta')}{\bar{\theta}^{\#}} \right) 
\end{align*}
Thus $\theta'$ satisfies the FOC iff 
\[
\frac{(\bar{\theta}^{\#} - \theta')}{\bar{\theta}^{\#}} \mathcal{W}(\bar{\theta}^{\#}) + \left( \mathcal{W}(\und{\theta}^*) - \mathcal{W}(\bar{\theta}^{\#}) \right) + w^B\und{\theta} \frac{(\bar{\theta}^{\#} - \theta')}{\bar{\theta}^{\#}} = 0
\]
which simplifies to 
\[
\theta' = \bar{\theta}^{\#} \frac{\mathcal{W}(\und{\theta}^*) + w^B\und{\theta}}{\mathcal{W}(\bar{\theta}^{\#}) + w^B\und{\theta}}.
\]
We then need to verify that $\theta' \leq \hat{\theta}$. This leads to the stated solution.
\end{proof}

\subsection{Proof of \texorpdfstring{\Cref{prop:general_hatR}}{}}\label{proof:general_hatR}

\begin{proof}
    Property 1 is immediate, since for any concave $\hat{R}$ the solution will set $\alpha = 0$ above $\und{\theta}^*$. Property 2 follows from the separating hyperplane theorem. Specifically, if $U \leq R$ on $[\und{\theta},\bar{\theta}]$ for some concave $U$, then there is a hyperplane separating $\{(x,y) : x \in X_i^{cv}, \ y \geq R(x)\}$ from the hypograph of $U$. This hyperplane defines the slope $s_i$.

    Property 3 follows from property 2. To see this, notice that if $s_i > s_j$ for $i > j$ then either $\hat{R} < R$ over either $X_j^{cv}$ or $X_i^{cv}$. If this holds for $j$ then we can increase $\hat{R}$ pointwise by increasing $s_j$. If it holds for $i$ then we can increase $\hat{R}$ pointwise by reducing $s_j$. 

    For property 4, note that from \Cref{prop:concave_relax} we know that we can restrict attention to $(\alpha,\und{u})$ such that $\alpha(\theta) \leq \min\{R(\und{\theta})/\und{\theta}), R(\bar{\theta}^{\#})/\bar{\theta}^{\#} \}$ for all $\theta$. Under this restriction, increasing $s_i$ above $\min\{R(\und{\theta})/\und{\theta}), R(\bar{\theta}^{\#})/\bar{\theta}^{\#} \}$ only shrinks the feasible set.   
\end{proof}

\subsection{Proof of \texorpdfstring{\Cref{thm:outer}}{}}\label{proof:outer}

\begin{proof}
We want to establish strong duality. If the program is feasible then there is an aggregate allocation $(\hat{n}_j^A,\hat{n}_j^B)_{j \in \mathcal{J}}$ which satisfies market clearing and $(\hat{n}_j^A,\hat{n}_j^B) \in \mathcal{F}_j$ for all $j$.  If, moreover, $(\hat{n}_j^A,\hat{n}_j^B)$ is in the interior of $\mathcal{F}_j$ for all $j$ then Slater's condition is satisfied, so we are done. Otherwise, define a perturbed problem by letting
\[
\mathcal{F}_j^{\varepsilon} = \{(\hat{n}^h,\hat{n}^l) \in \R^2 \ : \exists \ (x,y) \in \mathcal{F}_j \text{ with } |(x,y) - (\hat{n}^h,\hat{n}^l)| \leq \varepsilon\}.
\]
for $\varepsilon \geq 0$, and let $S_j^{\varepsilon}$ be the support function of this set. Then $(\hat{n}_j^A,\hat{n}_j^B)$ is in the interior of $\mathcal{F}_j^{\varepsilon}$ for all $j$, so Slater's condition is satisfied for any $\varepsilon > 0$ if we replace $\mathcal{F}_j$ with $\mathcal{F}_j^{\varepsilon}$. Let $P_{\varepsilon}$ and $D_{\varepsilon}$ be the values of the primal and dual programs at $\varepsilon$. Notice that $\mathcal{F}_j^{\varepsilon}$ is compact and convex valued and $\varepsilon \mapsto \mathcal{F}_j^{\varepsilon}$ is upper- and lower-hemicontinuous on $\R_+$. Thus the value of the primal outer program is continuous in $\varepsilon$ by Berge's maximum theorem, and converges to $P_0$ as $\varepsilon \rightarrow 0$.

Now note that $\varepsilon \mapsto S_j^{\varepsilon}(w^A,w^B)$ is increasing, so $\varepsilon \mapsto D_{\varepsilon}$ is decreasing. Moreover, weak duality (for a minimization problem) implies that $P_{\varepsilon} \geq D_{\varepsilon}$ for all $\varepsilon \geq 0$. Then $P_{\varepsilon} = D_{\varepsilon} \leq D_{0} \leq P_0$ for all $\varepsilon$ and $P_\varepsilon \rightarrow P_0$ as $\varepsilon \rightarrow 0$ imply $D_0 = P_0$.  
\end{proof}

\subsection{Proof of \texorpdfstring{\Cref{prop:sub_dispersion}}{}}\label{proof:sub_dispersion}

\begin{proof}
    The equivalence between $i.$ and $ii.$ is straightforward: the envelop theorem \citep{milgrom2002envelope} implies that $ S_j(x',y) - S_j(x,y) = \int_{x}^{x'} N_j^A(z,y) dz$, which is decreasing in $y$ for any $x' > x$ if and only if $y \mapsto N_j^A(z,y)$ is decreasing for all $z$.

    We now show the equivalence of $i.$ and $iii.$. Suppress the $j$ subscript to simplify notation. Regardless of whether or not $R$ is concave, we can replace it with an effective upper bound $\hat{R}$ which is. Moreover, the assumptions that $R'(\und{\theta}) > 0$ and $R(\bar{\theta}) > R(\und{\theta})$ imply that $\hat{R}'(\und{\theta}) > 0$ (see \Cref{thm:convex} and \Cref{prop:general_hatR}). Thus we can proceed assuming that $R$ is concave and $R'(\und{\theta}) > 0$.
    
    Because $\mathcal{F}$ is convex, it is downwards closed if and only if $0 \in N^A(0,1)$ and $0 \in N^B(1,0)$. Consider first $N^A(0,1)$. Then $\mathcal{W}(\theta) + \und{\theta}w^B = \theta(1 - F(\theta))$ which is strictly positive for all $\theta < \bar{\theta}$. Because $R'(\und{\theta}) > 0$, \Cref{prop:concave_relax} implies that $0 \in N^A(0,1)$ if and only if $\und{\theta}^* = \und{\theta}$, or equivalently 
    \[
    \bar{W}(\theta) \leq 0 \ \ \forall \ \theta \in [\und{\theta},\bar{\theta}].
    \]
    This holds if and only if
    \[
    \mathcal{W}(\theta) \leq 0 \ \ \forall \ \theta \in [\und{\theta},\bar{\theta}].
    \]
    For $w^A = 0$ and $w^B = 1$ we have 
    \begin{align*}
    \mathcal{W}(\theta) = (\theta - \und{\theta}) - F(\theta)\theta. 
    \end{align*}
    Then we have $0 \in N^A(0,1)$ if and only if $(\theta - \und{\theta})/\theta \leq F(\theta)$ for all $\theta$, which gives us one half of the condition that $F$ has low dispersion. 

    Next, we consider whether $0 \in N^B(1,0)$. Here $\mathcal{W}(\theta) + \und{\theta}w^B = F(\theta) + \und{\theta} w^B$ which is strictly positive for all $\theta$. Then \Cref{prop:concave_relax} implies that $0 \in N^B(1,0)$ if and only if $\bar{\theta}^{\#} = \bar{\theta}$. By definition of $\bar{\theta}^{\#}$, this holds if and only if 
    \[
    \bar{\theta} \in \argmax \frac{F(\theta)}{\theta},
    \]
    or equivalently $\theta/\bar{\theta} \geq F(\theta)$ for all $\theta$, which gives us the second half of the condition that $F$ has low dispersion.
\end{proof}

\subsection{Proof of \texorpdfstring{\Cref{thm:no_disposal}}{}}\label{proof:no_disposal}

We first prove a preliminary result (it is only used for the converse direction).
Define $\mathcal{M}(x): \mathcal{F} \mapsto \mathcal{F}^J$ be the set of incentive-feasible market-clearing allocation profiles given aggregate labor allocation $\ell = xJ$. Equivalently, $\mathcal{M}(x)$ is the set of incentive-feasible allocation profiles with barycenter $x$.
    \begin{lemma}\label{lem:uhc}
        $\mathcal{M}$ is non-empty; compact and convex-valued; and upper-hemicontinuous on $\mathcal{F}$. Additionally, it is single valued, and hence lower-hemicontinuous, on the frontier of $\mathcal{F}$. 
    \end{lemma}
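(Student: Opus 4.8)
The plan is to dispatch the routine parts first. For non-emptiness, the constant profile $(x,\dots,x)$ always lies in $\mathcal{F}^J$ and has barycenter $x$. For convexity and compactness of $\mathcal{M}(x)$, I would use that $\mathcal{F}^J$ is convex and compact (a finite product of convex compact sets, recalling that $\mathcal{F}$ is convex and compact), and that the barycenter condition $\frac{1}{J}\sum_j n_j = x$ cuts out an affine --- hence closed and convex --- slice of $\mathcal{F}^J$; thus $\mathcal{M}(x)$ is a closed convex subset of a compact set. For upper-hemicontinuity I would verify the closed-graph property: if $x_k \to x$ in $\mathcal{F}$ and $(n_1^k,\dots,n_J^k)\to(n_1,\dots,n_J)$ with each profile in $\mathcal{M}(x_k)$, then closedness of $\mathcal{F}$ gives $n_j\in\mathcal{F}$ and continuity of the barycenter gives $\frac1J\sum_j n_j = x$; since $\mathcal{M}$ takes values in the fixed compact set $\mathcal{F}^J$, closed graph implies uhc.

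The substantive step is single-valuedness on the frontier, and this is where I would concentrate. The key observation is that a frontier point $x$ is, by definition, a maximizer of some linear functional $n\mapsto w\cdot n$ over $\mathcal{F}$ with $(w^A,w^B)\not\le 0$; in particular $w\ne 0$, so $x$ cannot lie in $\operatorname{int}\mathcal{F}$ (otherwise a small step in direction $w$ remains feasible and strictly improves the objective). Now take any $(n_1,\dots,n_J)\in\mathcal{M}(x)$. Combining $w\cdot n_j\le w\cdot x$ for each $j$ with the barycenter identity $\frac1J\sum_j w\cdot n_j = w\cdot x$ forces $w\cdot n_j = w\cdot x$ for every $j$, so each $n_j$ is itself a maximizer of $w\cdot(\cdot)$ over $\mathcal{F}$, i.e., lies on the frontier. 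I would then invoke strict convexity of the frontier --- which is assumed in the converse direction of \Cref{thm:no_disposal}, the only place this lemma is used: if the $n_j$ were not all equal, say $n_1\ne n_2$, then $p:=\tfrac12(n_1+n_2)\in\operatorname{int}\mathcal{F}$, and writing $x = \tfrac{2}{J}\,p + \tfrac1J\sum_{j\ge 3}n_j$ exhibits $x$ as a convex combination placing positive weight $\tfrac2J$ on the interior point $p$, so $x\in\operatorname{int}\mathcal{F}$ --- contradicting the previous sentence. Hence $n_1=\dots=n_J=x$ and $\mathcal{M}(x)=\{(x,\dots,x)\}$.

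Lower-hemicontinuity on the frontier is then immediate: a correspondence that is single-valued at a point and upper-hemicontinuous there is continuous there as a function, hence also lower-hemicontinuous there; since uhc has already been established on all of $\mathcal{F}$ and single-valuedness holds at each frontier point, lhc follows at those points.

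The main obstacle is the single-valuedness claim. The point to get right is that the $n_j$ need not equal $x$ a priori --- the averaging argument only pins each $n_j$ to the face of $\mathcal{F}$ exposed by $w$ through $x$ --- so the strict-convexity hypothesis genuinely does the work: without it, a flat segment of the frontier would admit a non-trivial profile with the prescribed barycenter, and single-valuedness would fail. Everything else is routine bookkeeping with compactness and convexity.
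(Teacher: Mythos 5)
Your proposal is correct and follows essentially the same route as the paper's proof: non-emptiness via the constant profile, compactness/convexity as an affine slice of $\mathcal{F}^J$, upper-hemicontinuity from the closed graph plus compact values, single-valuedness from strict convexity of the frontier, and lower-hemicontinuity from single-valuedness. Your elaboration of the single-valuedness step (pinning each $n_j$ to the exposed face via the barycenter identity before invoking strict convexity, and noting that the strict-convexity hypothesis comes from the converse direction of the theorem where the lemma is applied) is a correct and more explicit rendering of what the paper states in one line.
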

    \begin{proof}
        $\mathcal{M}$ is non-empty since $(x,\dots,x) \in \mathcal{M}(x)$ for all $x \in \mathcal{F}$. It is compact and convex valued because $\mathcal{M}(x)$ is the intersection of $\mathcal{F}^J$ with the two hyperplanes defined by the market-clearing conditions. For upper-hemicontinuity, define $A(y) := \frac{1}{J} \sum_{j=1}^J y_j$ for $y \in \mathcal{F}^J$. The graph of $\mathcal{M}$ is given by $
        (\R^2 \times \mathcal{F}^J) \cap \{ (x,y) : A(y) = x \}$ which is closed because $A$ is continuous. Because $\mathcal{M}$ is compact-valued, it is upper-hemicontinuous. $M$ is single-valued on the frontier because the frontier is strictly convex, so $(x,\dots,x)$ is the only incentive-feasible profile with barycenter $x$. Upper-hemicontinuous correspondences are lower-hemicontinuous where they are single-valued. 
    \end{proof}

\noindent\textit{Proof of \Cref{thm:no_disposal}}
\textit{Part i.} The objective in the outer program
    \[
    \sup_{\lambda^A,\lambda^B}  \ell^A \lambda^A  + \ell^B \lambda^B  - \sum_{j\in \mathcal{J}} S_j\left(  \lambda^A - \pi_j^A  \ , \ \lambda^B - \pi_j^B \right) 
    \]
    has increasing differences in $(\ell^A,\ell^B)$ and $\lambda^k$ for either $k$. Moreover, by \Cref{prop:sub_dispersion} the objective is supermodular in $(\lambda^A,\lambda^B)$ if heterogeneity is small. This implies $\ell^{k} \mapsto \lambda_*^{k}$ and $\ell^{k} \mapsto \lambda_*^{-k}$ are increasing \citep{milgrom1994monotone}. In particular, this means negative price spillovers for any increase in automation. 

    \textit{Part ii.} Let $\ell' \leq \ell$. We have
    \begin{align*}
    \left(\lambda_*^A(\ell) - \pi_j^A \right)n_j^A(\ell) + \left(\lambda_*^B(\ell) - \pi_j^B \right)n_j^B(\ell) &= \max_{(n^A,n^B) \in \mathcal{F}_j} \left(\lambda_*^A(\ell) - \pi_j^A \right)n^A + \left(\lambda_*^B(\ell) - \pi_j^B \right)n^B \\
    & \geq \left(\lambda_*^A(\ell) - \pi_j^A \right)n_j^A(\ell') + \left(\lambda_*^B(\ell) - \pi_j^B \right)n_j^B(\ell') \\
    & \geq \left(\lambda_*^A(\ell') - \pi_j^A \right)n_j^A(\ell') + \left(\lambda_*^B(\ell') - \pi_j^B \right)n_j^B(\ell')
    \end{align*}
    where the equality in the first line is the definition of $(n^A_j(\ell), n^B_j(\ell))$, and the inequality in the third line holds because $(\lambda_*^A(\ell'), \lambda_*^B(\ell')) \leq (\lambda_*^A(\ell), \lambda_*^B(\ell))$, as shown in part $i$. 

    \textit{Part iii.} For any $j$ that is on the frontier of $\mathcal{F}_j$ under the original labor allocation, $\ell$, it cannot be that the output in both tasks increases, because \Cref{prop:sub_dispersion} implies that there are no such points in $\mathcal{F}_j$. If $j$ is off the frontier under $\ell$, it must be that $(\lambda_*^A(\ell) - \pi_j^A, \lambda_*^B(\ell) - \pi_j^B) \leq 0$. Then by part $i.$, for any $\ell' < \ell$ we have $(\lambda_*^A(\ell') - \pi_j^A, \lambda_*^B(\ell') - \pi_j^B) \leq 0$. By \Cref{cor:outer}, part $iii.$, there are two cases to consider. First, if the allocation of some group $j$ is in the interior of $\mathcal{F}_j$, then $j$ is the only group off the frontier. For this to be the case, it must be that $(\lambda_*^A(\ell) - \pi_j^A, \lambda_*^B(\ell) - \pi_j^B) = 0$. Then the only way $j$'s allocation can weakly increase is if $\lambda_*^A(\ell) = \lambda_*^A(\ell')$ and $ \lambda_*^B(\ell) = \lambda_*^B(\ell')$. But then the allocation of all agents on the frontier is unchanged, so $j$'s allocation cannot weakly increase. In the other case there are two groups off the frontier, one, $j$, with $\lambda^A = \pi^A_j$, $\lambda^B < \pi^B_j$ and $n_j^B(\ell) = 0$; and the other, $j'$, with $\lambda^B = \pi^B_{j'}$, $\lambda_A < \pi^A_{j'}$ and $n_{j'}^A(\ell) = 0$. Suppose $j$'s allocation weakly increases, which, because $(\lambda_*^A(\ell') - \pi_j^A, \lambda_*^B(\ell') - \pi_j^B) \leq 0$,  would mean $j$ receives more units of $A$ and still no units of $B$. This can happen only if $\lambda_*^A(\ell') = \lambda_*^A(\ell)$. But then by \Cref{prop:sub_dispersion} all other agents receive weakly more units of task $A$ as well, which cannot be. A symmetric argument applies to $j'$. 

    For the converse results (parts $iv.$ - $vi.$) the symmetry assumption implies $\mathcal{F}_j = \mathcal{F}_{j'}$ and $S_j = S_{j'}$ for all $j,j'$. I therefore suppress dependence of these objects on $j$ in the notation.

    By \Cref{prop:sub_dispersion}, the frontier of $\mathcal{F}$ has at least one upward-sloping portion, near one of the two axes. Without loss of generality, assume that the frontier is upward sloping for low values of $A$, as in \Cref{fig:high_heterogeneity}, and focus on this segment.  

    \textit{Claim 1.} There exists a point $\bar{x}$ on the upward-sloping portion of the frontier of $\mathcal{F}$, and a neighborhood $V$ of $x$ such that for all $x' \in V \cap \mathcal{F}$, the allocation $(n_j^A(J x'), n_j^B(Jx'))$ is on the upward-sloping portion of the frontier of $\mathcal{F}$ for every $j$. Moreover, $\lambda_*^A(\ell') \rightarrow - \infty$ and $\lambda_*^B(\ell') \rightarrow +\infty$ as $\ell \rightarrow J\bar{x}$. 

    To prove Claim 1, recall that $M(\bar{x}) = (\bar{x},\dots,\bar{x})$ for $\bar{x}$ on the frontier. Then by upper-hemicontinuity of $M$, for any open set $\bar{U} \subset \R^{2J}$ containing $(\bar{x},\dots,\bar{x})$ there exists a neighborhood $V$ of $\bar{x}$ such that $M(x') \subset \bar{U}$ for all $x' \in V$. Equivalently, there exists a neighborhood $U \subset \R^2$ containing $\bar{x}$ such all groups receive allocations in $U$ for any profile in $M(x')$. Take $U$ small enough such that it only intersects the boundary of $\mathcal{F}$ on the upward-sloping part of the frontier. Then for any $x' \in V$, \Cref{cor:outer} part $iii.$ implies that $(n_j^A(Jx'), n_j^B(Jx'))$ is on the upward-sloping part of the frontier for all but at most one $j$. 

    Because the frontier is convex, it is smooth almost everywhere. So we can choose $\bar{x}$ such that $\mathcal{F}$ has a unique supporting hyperplane at $\bar{x}$. Let $(\eta^A,\eta^B)  \in \R^2$ be the normal vector of this hyperplane. Because we are focusing on the upward-sloping portion of the frontier near the $B$-axis, $\eta^A < 0$ and $\eta^B > 0$.
    
    Because performance is distinct across groups and $J \geq 3$, there are at least two groups $j,j'$ on the frontier, and we have $\lambda_*^A(Jx') < \pi_j^A,\pi_{j'}^A$ and $\lambda_*^B(Jx') > \pi_j^B,\pi_{j'}^B$. For their allocations to be in $U$ as this set shrinks, it must be that $(\lambda_*^A(Jx') - \pi_j^A, \lambda_*^B(Jx') - \pi_j^B)$ and $(\lambda_*^A(Jx') - \pi_{j'}^A, \lambda_*^B(Jx') - \pi_{j'}^B)$ become parallel to $(\eta^A,\eta^B)$. Because $\pi_j^A \neq \pi_{j'}^A$ and $\pi_j^B \neq \pi_{j'}^B$, this can only happen if $\lambda_*^A \rightarrow - \infty$ and $\lambda_*^B \rightarrow +\infty$. 

    Putting this together, we have that $\lambda_*^A(Jx') \rightarrow - \infty$ and $\lambda_*^B(Jx') \rightarrow +\infty$ as $x' \rightarrow \bar{x}$. Then for $x'$ sufficiently close to $\bar{x}$ we have $(\lambda_*^A(Jx') - \pi_j^A < 0$ and $\lambda_*^B(Jx') - \pi_j^B) > 0$ for all $j$, so every group is on the frontier. This proves Claim 1. 

    We now show part $iv.$ of the result. Let $\bar{x} = (\bar{x}^A,\bar{x}^B)$ and $V$ be the point and neighborhood from Claim 1, and let $\hat{x} = (\bar{x}^A + \varepsilon,\bar{x}^B) \in V \cap \mathcal{F}$. Consider the increase in automation of reducing $\varepsilon$. Because all groups are on the upward-sloping part of the frontier for $x$ on the line segment $(\bar{x},\hat{x})$  
    \[
     \ell^A \lambda^A  + \ell^B \lambda^B  - \sum_{j\in \mathcal{J}} S_j\left(  \lambda^A - \pi_j^A  \ , \ \lambda^B - \pi_j^B \right) 
    \]
    is submodular in $\lambda^A,\lambda^B$ on a neighborhood of $(\lambda_*^A(J\hat{x}), \lambda_*^B(J\hat{x}))$. So $\lambda_*^A$ decreases and $\lambda_*^B$ increases as $\varepsilon \rightarrow 0$, i.e., there are positive price spillovers. 

    For part $v.$ consider the same increase in automation. As shown in the proof of Claim 1, $(\lambda^A_*(Jx') / \lambda_*^B(Jx')) \rightarrow \eta^A/\eta^B$ and $|(\lambda^A_*(Jx') , \lambda_*^B(Jx'))| \rightarrow +\infty$ as $x' \rightarrow \bar{x}$. Moreover $\eta \cdot \bar{x} > 0$. Because each group's allocation converges to $\bar{x}$ as $\varepsilon \rightarrow 0$, the surplus generated by each group goes to $+\infty$. 

    For part $vi.$, take any $x = (x^A,x^B) \in \mathcal{F}\cap V$ such that $x^A < \bar{x}^A$ and $x^B < \bar{x}^B$. As $(x^A,x^B) \in V$, all groups allocations are on the frontier. Moreover, they cannot all be strictly greater than $\bar{x}$ without violating market clearing. Thus some are strictly below $x$. Since all allocations converge to $\bar{x}$ as $x \rightarrow \bar{x}$, the allocations of some groups must strictly increase.  \hfill\qedsymbol

\subsection{Proof of \texorpdfstring{\Cref{prop:eqm_set}}{}}\label{proof:eqm_set}
\begin{proof}
    By the envelope theorem, the right derivative of $L$ with respect to $\ell^k$ is $\max \lambda_*^K(\ell^A,\ell^B)$. Then for any $\ell$ such that $\ell/J \in \mathcal{F} \setminus \mathcal{E}$, the labor-cost is weakly decreasing in at least one dimension. Because $Y$ is strictly increasing, it is possible to produce the same output at strictly lower labor cost. Hence $\ell$ cannot be optimal. Conversely, for any $\ell$ such that $\ell/J \in \mathcal{E}$ we can define $P,Y,$ and $\gamma$ such that the firm's profit is concave in $\ell$ and the iso-profit curve is tangent to the iso-labor-cost curve at $\ell$. 
\end{proof}

\subsection{Proof of \texorpdfstring{\Cref{prop:eqm_properties}}{}}\label{proof:eqm_properties}

\begin{proof}
    We first show that $\mathcal{E} = \mathcal{F}$ under low heterogeneity. That is, we need to show that $\max\lambda_*^A(J n^A, J n^B), \max\lambda_*^A(J n^A, J n^B) > 0$ for all $(n^A,n^B) \in \mathcal{F}$. Recall that the outer program defining $\lambda_*$ is 
    \[
    \sup_{\lambda^A,\lambda^B}   \ell^A \lambda^A  + \ell^B \lambda^B  - \sum_{j\in \mathcal{J}} S_j\left(  \lambda^A - \pi_j^A  \ , \ \lambda^B - \pi_j^B \right).
    \]
    Observe that $(\lambda^A,\lambda^B) = (\min_j \pi_j^A, \min_j \pi_j^B)$ is a solution to this program for $(\ell^A,\ell^B) = (0,0)$, because it achieves a value of $0$ and $S_j \geq 0$. By \Cref{prop:sub_dispersion}, under low heterogeneity the objective in the outer program is supermodular. Then because $(n^A,n^B) \geq 0$ for all $(n^A,n^B) \in \mathcal{F}$, we have $\max \lambda_*^A(J n^A, J n^B) \geq \min_j \pi_j^A > 0$ and $\max \lambda_*^B(J n^A, J n^B) \geq \min_j \pi_j^B > 0$ for all $n^A,n^B \in \mathcal{F}$ (\citealt{milgrom1994monotone}, Theorem 4).

    That the equilibrium set must exclude the points on the upward-sloping region of the frontier where it is smooth was established in the proof of \Cref{thm:no_disposal}, where we showed that either $\lambda^A_*$ or $\lambda^B_* \rightarrow - \infty$ as $(\ell^A/J, \ell^B/J)$ approaches such a point. Because \Cref{prop:sub_dispersion} shows that the frontier of $\mathcal{F}$ has upward-sloping segments if and only if preference heterogeneity is high, this also shows that low heterogeneity is necessary for $\mathcal{E} = \mathcal{F}$. Similarly, as $(\ell^A/J, \ell^B/J)$ approaches any point on the downward-sloping region of the frontier, it must be that $\lambda^A_* - \pi_j^A, \lambda^B_* - \pi_j^A \geq 0$ for all $j$, so such points are in $\mathcal{E}$. 

    Part \textit{iii.} follows from the fact, previously observed, that the outer program has increasing differences in $\ell^k$ and $\lambda^k$. 
\end{proof}

\subsection{Proof of \texorpdfstring{\Cref{prop:specialization}}{}}\label{proof:specialization}
\begin{proof}
    Let $\lambda^A_t,\lambda^B_t$ be the solution to the dual outer program in period $t$. Consider the inner program for group $j$ in period $t$ given weights $(\lambda^A_t - \pi_j^t, \lambda^B_t - \pi_j^t) \not\leq (0,0)$. Because $R_j^{t}$ is the indirect utility from the mechanism in period $t - 1$, we have $R_j^t(\und{\theta})/\theta = a_j^{t-1}(\theta) + b_j^{t-1}(\theta)/\theta \geq a^{t-1}(\theta) = R_j'(\theta)$. Thus $\check{\theta} = \und{\theta}$. Let $\mathcal{W}_j^t$ be the function defined as in \cref{eq:mathcalW} using weights $(\lambda^A_t - \pi_j^t, \lambda^B_t - \pi_j^t)$. Then by \Cref{prop:concave_relax}, any solution to the inner program is defined by some
        $\tilde{\theta} \in [\und{\theta}^{\#}, \bar{\theta}^{\#}]$ such that $\und{u} = \und{\theta} R(\tilde{\theta})/\tilde{\theta}$ and 
    \begin{equation}\label{eq:a_t}
        a^t(\theta) = 
        \begin{cases}
        \frac{\und{u}}{\und{\theta}} \quad &\text{on } [\und{\theta},  \min\{\tilde{\theta}, \und{\theta}^*\}] \\
        R'(\theta) &\text{on } (\tilde{\theta}, \und{\theta}^*] \\
        \text{non-increasing and bounded above by } R'(\und{\theta}^*) & \text{on } (\und{\theta}^*,\bar{\theta}^*] \\
        0 &\text{on } (\bar{\theta}^*, \bar{\theta}]
        \end{cases}
    \end{equation}
    
    For the frontier of $\mathcal{F}_j^t$ to be strictly convex, it must be that there is a unique solution to the inner program for any $(w^A,w^B) \not\leq (0,0)$. This implies that $\und{\theta}^{\#} = \bar{\theta}^{\#}$ and $\und{\theta}^* = \bar{\theta}^*$. This in turn implies $\bar{\theta}^{\#}$ and $\bar{\theta}^*$ do not belong to ironing intervals, i.e., $\mathcal{W}_j^t(\bar{\theta}^{\#}) = \bar{\mathcal{W}}_j^t(\bar{\theta}^{\#})$ and $\mathcal{W}_j^t(\bar{\theta}^*) = \bar{\mathcal{W}}_j^t(\bar{\theta}^*)$. In fact, this equality must hold for any $\theta \leq \bar{\theta}^{\#}$, since otherwise there would be some $(w^A,w^B) \not\leq (0,0)$ for which there are multiple solutions.
    
    Define $\bar{\theta}^{\#} = \theta^{\#}_t$ and $\bar{\theta}^* = \theta^*_t$ to make the dependence on the period-$t$ parameters explicit. Then the indirect utility $R^t_j(\theta)$ satisfies 
    \[
    R^t_j(\theta) = 
    \begin{cases}
        \theta \frac{R_j^{t-1}(\bar{\theta}^{\#})}{\bar{\theta}^{\#}} &\text{if } \theta \leq \bar{\theta}^{\#} \\
        R_j^{t-1}(\bar{\theta}^*) & \text{if } \theta \geq \bar{\theta}^* \\
        \text{increasing from $R_j^{t-1}(\bar{\theta}^{\#})$ to $R_j^{t-1}(\bar{\theta}^*)$} & \text{on } [\bar{\theta}^{\#}, \bar{\theta}^*].
    \end{cases}
    \]

    Now consider a machine-enhancing change from period $t$ to period $t+1$. This shifts $\lambda^A,\lambda^B$ and thus changes the weights $(\lambda^A - \pi_j^A, \lambda^B - \pi_j^B)$ defining $\mathcal{W}^{t+1}_j$. Let $\tilde{\theta}_{t+1}$ be the new threshold defining $a^{t+1}$, as in \cref{eq:a_t}. Notice that because of the shape of $R_j^t$, if $\tilde{\theta}_{t+1} \leq \theta^{\#}_t$ then the mechanism is unchanged for $\theta \leq \theta^{\#}_t$. Similarly, if $\und{\theta}^*_{t+1} \geq \theta^*_t$ then the mechanism is unchanged for $\theta \geq \theta^*_t$. Then the only possible changes come from increasing $\tilde{\theta}_{t+1}$ above $\theta^{\#}_t$, or reducing $\und{\theta}^*_{t+1}$ below $\theta^*_t$. In either case we get an increase in specialization. Moreover, the increase in specialization is uniform if $\mathcal{W}_j^{t+1} = \bar{\mathcal{W}}_j^{t+1}$ on $[\tilde{\theta}_{t+1}, \und{\theta}^*_{t+1}$, which must be the case if the frontier of $\mathcal{F}_j^{t+1}$ is strictly convex (by the same argument applied above to $\mathcal{F}_j^{t}$). 
\end{proof}

\subsection{Proof of \texorpdfstring{\Cref{thm:competitive}}{}}\label{proof:competitive}    

\begin{proof}
    The strategy is to first describe the solution to the monopsony program, assuming full participation. Because this program drops the constraint (C) it is clearly a lower bound on the production cost in the competitive market. We then set the prices in the competitive market equal to the multipliers from the monopsony program, and find wages such that it is optimal for each firm to induce full participation. Finally, we verify that under full participation, $(a_j^*,b_j^*)$ is an optimal contract.

    With weak inequalities on the market-clearing constraints in \cref{prog:1}, the multipliers on these constraints must be non-negative. Solving this program is thus equivalent to replacing the labor-cost function with 
    \[
    \tilde{L}(\ell^A,\ell^B) := \sup_{\lambda^A,\lambda^B \geq 0} \left\{\ell^A \lambda^A + \ell^B\lambda^B - \sum_{j \in \mathcal{J}} S_j\left(\lambda^A - \pi_j^A, \lambda^B - \pi_j^B\right) \right\}. 
    \]
    In other words, we only consider labor allocations in $\tilde{\mathcal{E}}$. Otherwise, the monopolist's program is unchanged. In particular, the inner program is identical. Let $\lambda_*^A,\lambda_*^B$ be the multipliers and $a_m^*,b_m^*, (a_j^*,b_j^*)$ the machine allocations and mechanisms defining this solution (note that these mechanisms are defined in terms of the ratio $\theta = \theta^A/\theta^B$, but I abuse notation an write them also as a function of the pair $(\theta^A,\theta^B)$). 

    Setting the wage for group $j$ at 
    \[
    \omega_j = \min_{\theta \in [\und{\theta}_j, \bar{\theta}_j]} \left(\lambda_*^A - \pi_j^A\right) a_j^*(\theta) +  \left(\lambda_*^B - \pi_j^B\right) b_j^*(\theta).
    \]
    and consider the candidate equilibrium in which $x_j^* = 1$ for all $j$, along with the monopsony machine allocations and mechanisms. 
    
    To show that this satisfies the best-response condition for firms, we first show that it is optimal for firms to induce full participation. Suppose that it is optimal for some firm to offer a contract $(x_j,a_j,b_j)$ for group $j$ that does not induce full participation. Consider the modification which instead offers $(a_j^*(\theta),b_j^*(\theta))$ to any type such that $x_j(\theta^A,\theta^B) = 0$ and $\theta = \theta^A/\theta^B$. This modification increases profits given the specification of $\omega_j$. It does not disturb the (IR) constraint because (IR) is imposed in the monopsony program and is thus satisfied by $a^*_j,b^*_j$. It does not disturb the (IC) constraint because if $x_j(\theta^A,\theta^B) = 1$ then
    \begin{align*}
    \theta^A a_j(\theta^A,\theta^B) + \theta^B b_j(\theta^A,\theta^B) &\leq  \theta^A a^*_j(\theta^A,\theta^B) + \theta^B b^*_j(\theta^A,\theta^B) \\
    &\leq \theta^A a^*_j(\hat{\theta}^A,\hat{\theta}^B) + \theta^B b^*_j(\hat{\theta}^A,\hat{\theta}^B)
    \end{align*}
    for any $(\hat{\theta}^A,\hat{\theta}^B)$, where the first inequality holds because $(x_j,a_j,b_j)$ satisfies (C), and the second because $(a_j^*,b_j^*)$ satisfies (IC). Finally, this modification does not violate (C) if $U \leq \omega_j - \theta^A a^*_j(\theta^A,\theta^B) - \theta^B b^*_j(\theta^A,\theta^B)$ for all $(\theta^A,\theta^B) \in \Theta_j$, i.e., if $U$ is sufficiently low. Under this condition, we include that it is optimal for firms to induce full participation. 

    Finally, we need to show that under full participation, $(a^*_j,b^*_j)$ is a solution to the firm's program when it is also the mechanism defining the participation constraint. To see this, recall (e.g., from \cref{prog:outer_dual2}) that $(a^*_j,b^*_j)$ is a solution to the inner program
    \begin{align}
    &\max_{a,b\geq 0}  \int\left( (\lambda_*^A - \pi_j^A) a(\theta^A,\theta^B) + (\lambda_*^B - \pi_j^B) b(\theta^A,\theta^B) \right) dF_j(\theta^A,\theta^B) \notag\\
         s.t.& \notag \\
         & \theta^A a(\theta^A,\theta^B) + \theta^B b(\theta^A,\theta^B) \leq  \theta^A a(\hat{\theta}^A,\hat{\theta}^B) + \theta^B b(\hat\theta^A,\hat\theta^B) \quad  \forall \ (\theta^A,\theta^B), (\hat\theta^A,\hat\theta^B) \in \Theta_j \tag{IC} \\
       & \theta^A a(\theta^A,\theta^B) + \theta^B b(\theta^A,\theta^B) \leq r_j(\theta^A,\theta^B) \quad \quad \forall \ (\theta^A,\theta^B) \in \Theta_j\tag{IR}
\end{align}
Under full participation, this is just a relaxation of the firm's program in the competitive case, without the (C) constraint. Since (C) is satisfied if all other firms use the same mechanism, $(a^*_j,b^*_j)$ is also optimal in the competitive setting.  
\end{proof}

\begin{singlespace}
\bibliography{references}
\end{singlespace}

\end{document}